\documentclass[runningheads]{llncs}

\usepackage[mathlines]{lineno}
\usepackage{makecell}
\usepackage{multirow}
\usepackage{float}
\usepackage{amssymb}
\usepackage{amsmath, bm}
\usepackage{bm}
\usepackage{algorithm}
\usepackage{algorithmic}
\usepackage{enumerate}
\usepackage{cite}
\usepackage{color}
\usepackage{adjustbox}
\usepackage{blindtext}
\usepackage{marvosym}

\newtheorem{myDef}{Definition}

\makeatletter
\newcounter{myCorollary}
\newenvironment{myCorollary}[1][]{%
  \refstepcounter{myCorollary}%
  \trivlist\item[\hskip\labelsep{\bfseries Corollary~\themyCorollary}]\itshape
  \ifx\\#1\\\else{\bfseries(#1).}\ \fi
}{\endtrivlist}
\makeatother
\usepackage{threeparttable}
\usepackage{booktabs}
\usepackage{longtable}

\usepackage[hidelinks,hypertexnames=false]{hyperref} 
\hypersetup{
colorlinks=true,
linkcolor=red,
citecolor=green
}

\usepackage{xcolor}

\newlength{\ALGEQREFPOS}

\makeatletter
\newcommand{\ALGREF}[1]{%
    \makebox[0pt][l]{%
        \hspace*{\dimexpr\ALGEQREFPOS-\@totalleftmargin\relax}%
        $\triangleright$ #1}}
\makeatother

\usepackage{graphicx}
\usepackage{ulem}
\usepackage{orcidlink}
\usepackage{microtype}
\begin{document}

% The title page has intentionally sparse vertical content.
\raggedbottom

%\linenumbers
%
% \title{Neural Aided Statistical Attack Model for Cryptanalysis     \thanks{Supported by organization x.}}

\title{Extracting CNNs in the Unknown-Architecture and Feedback-Agnostic Setting}
% \title{Breaking Slope and Structure Restrictions: 
% Broadening Cryptanalytic Extraction of PReLU Neural Networks in The Hard-Label Setting}

% Author block aligned with the HotCRP submission form (5 authors).
% Kept commented out: the submission PDF must stay anonymous.
%\author{Jiashuo Liu\inst{1}, Ruijie Ma\inst{2}\orcidlink{0009-0004-7811-5200}, Manman Li\inst{1}, \\
% Yi Chen\inst{3(\mbox{\raisebox{0.05ex}{\scalebox{0.9}{\Letter}}})}\orcidlink{0000-0002-4727-4530}, and Shaozhen Chen\inst{1}}

\author{Jiashuo Liu\inst{1}\orcidlink{0000-0002-9415-5972}, Ruijie Ma\inst{2}\orcidlink{0009-0004-7811-5200}, Manman Li\inst{1}\orcidlink{0000-0001-7625-737X}, Yi Chen\inst{3(\mbox{\raisebox{0.05ex}{\scalebox{0.9}{\Letter}}})}\orcidlink{0000-0002-4727-4530}, Shaozhen Chen\inst{1}}

\institute{Information Engineering University, Zhengzhou, China\\
\email{jiashuoliu@126.com, limanman15@163.com, chenshaozhen@vip.sina.com}\and
Department of Computer Science and Technology, Tsinghua University, Beijing, China\\
\email{marj21@mails.tsinghua.edu.cn}\and
Institute for Advanced Study, Tsinghua University, Beijing, China\\
\email{chenyi2023@tsinghua.edu.cn}}

\maketitle              % typeset the header of the contribution
\begin{abstract}

This paper studies the cryptanalytic extraction of convolutional neural
networks (CNNs). Existing cryptanalytic extraction attacks on CNNs assume
that the network architecture is known, and try to recover model
parameters.
In this paper, we prove for the first time that the architecture assumption
can be removed for CNNs with both max and average pooling. Our core finding
is that the spatial geometry of the weight vectors recovered by existing
parameter-recovery attacks naturally leaks the architecture. We formalize
this geometry and establish its correspondence with the architectural
knowledge of a convolutional layer: (1) The sparsity consistency with the
convolution receptive field reveals the layer type, the kernel size, and
the stride; (2) The numerical consistency with the kernel parameters
reveals the padding mode and the output-channel number; (3) The structural
consistency with the pooling operation reveals the pooling type, the window
size, and the stride. Although the recovered vectors are obtained using
different methods in the raw-output and hard-label settings, their spatial
geometry remains the same. Therefore, our architecture recovery is
feedback-agnostic: combined with a parameter-recovery attack, it yields a
complete cryptanalytic extraction framework that recovers both the
architecture and the parameters in the black-box setting. Extensive
experiments, including both layer-wise and end-to-end ones, on a wide range
of CNNs demonstrate that simultaneously recovering the network architecture
and the model parameters is practical.

\keywords{Cryptanalytic Extraction \and Convolutional Neural Networks \and
Network Architecture Recovery.}
\end{abstract}
%\clearpage
%\flushbottom
%
%
%

% !TeX spellcheck = en_US
% !TEX root = main.tex

\section{Introduction}
\label{sec:introduction}

A trained deep neural network (DNN) embodies expensive data, computation,
and expertise. Consequently, DNNs are typically deployed as a remote black-box
service: users can only submit inputs and observe the returned feedback.
This raises the model extraction problem: can an adversary recover a
network from such interactions? The most challenging goal of model
extraction is functionally equivalent extraction, which requires the
extracted model to return the same raw output as the victim model at every
point of the input space~\cite{DBLP:conf/uss/JagielskiCBKP20}.

The earliest work on model extraction dates to
1991~\cite{DBLP:journals/tnn/Baum91}, and functionally equivalent
extraction was achieved for logistic regression models in
2016~\cite{DBLP:conf/uss/TramerZJRR16} and for ReLU fully connected
networks (FCNs) with 1--2 hidden layers in
2020~\cite{DBLP:conf/uss/JagielskiCBKP20,DBLP:conf/icml/RolnickK20}.

At
Crypto 2020, Carlini et al. argued that model extraction is a cryptanalytic problem in
disguise~\cite{DBLP:conf/crypto/CarliniJM20}. They proposed a differential
extraction attack that applies to ReLU FCNs with an arbitrary number of
hidden layers and achieves functionally equivalent extraction.
This work formally
opened the direction of cryptanalytic extraction of neural networks.

Following this work, the cryptography community has studied FCNs under two feedback
settings: (1) Raw-output setting: the adversary receives the
complete logit vector $\mathcal{F}_{\mathcal{A},\Theta}(x)$; (2) Hard-label setting:
the adversary receives only the predicted label
$z(x)=\arg\max_i\mathcal{F}_{\mathcal{A},\Theta}(x)_i$.
In the raw-output setting, the attack has been made
polynomial-time~\cite{DBLP:conf/eurocrypt/CanalesMartinezCHRSS24}, run
faster in practice~\cite{DBLP:conf/nips/FoersterMSH24}, extended to deep
networks in an end-to-end manner~\cite{DBLP:conf/eurocrypt/LiuSELBP26},
adapted to PReLU and other activation
functions~\cite{DBLP:conf/asiacrypt/ChenDMSWYW25,QLW26,ADF26}, and
examined under finite-precision
evaluation~\cite{cryptoeprint:2026/1943}.
In the hard-label setting, cryptanalytic extraction has been
initiated~\cite{DBLP:conf/asiacrypt/ChenDGSWW24}, made
polynomial-time~\cite{DBLP:conf/eurocrypt/CarliniCHRS25,cryptoeprint:2024/1580}, extended to the
extraction of some deeper
layers~\cite{DBLP:conf/latincrypt/CanalesMartinezS25}, and re-examined for
its query complexity~\cite{DBLP:journals/iacr/ItoMT25}.

Convolutional neural networks
(CNNs) are among the mainstream models for computer vision and image recognition. 
It is natural that a recent and important trend is to extend cryptanalytic
extraction from FCNs to
CNNs.
In the raw-output setting, Chen et al. gave the first algebraic treatment
of max-pooling CNNs and defined three types of points to recover the
convolution parameters~\cite{cryptoeprint:2026/241}; Liu et al. then
extracted deeper VGG-style CNNs layer by layer, using receptive-field
analysis~\cite{hal-05573262}.
In the hard-label setting, Sun et al. handled
average-pooling CNNs by reformulating convolutional layers as
block-Toeplitz matrices~\cite{cryptoeprint:2026/139}; Chen et al.
accelerated the extraction of max-pooling CNNs with the approximate
signature vector method and kernel-centric
clustering~\cite{cryptoeprint:2026/1164}.

A DNN is determined by its architecture $\mathcal{A}$ and parameters
$\Theta$. The architecture $\mathcal{A}$ specifies the computational
structure, namely the layer types, dimensions, and structural
hyperparameters. The parameters $\Theta$ are the trainable numerical values
attached to the structure. All existing attacks on CNNs
assume $\mathcal{A}$ to be known and recover only $\Theta$. This assumption is
unrealistic, since $\mathcal{A}$ is a valuable asset that a realistic
deployment does not disclose.

For an FCN, the unknown architecture information is a list of layer dimensions, and its
recovery is
settled~\cite{DBLP:conf/icml/RolnickK20,DBLP:conf/iclr/DanielyG23,cryptoeprint:2024/1580
,shen2026fcn}.
By contrast, a CNN exploits local receptive fields and weight sharing
to process image data with far fewer parameters.
Its architecture is accordingly richer, comprising the layer type, the
kernel size, the stride, the padding mode, the channel numbers, and the
pooling structure.
Each is both a non-trainable parameter affecting network performance and,
like the numerical parameters, an asset of the model provider.
Because of their diversity and complexity, a CNN's architectural knowledge
is even more valuable to an adversary and harder to obtain. 
This leads to the research question of this paper: 
can the convolutional layers of a CNN be extracted without
the knowledge of network architecture? 
The significance is twofold. (1) Together with existing methods for the fully
connected layers, answering it removes the architectural prior from CNN
cryptanalytic extraction. (2) It makes architecture recovery independent of
parameter recovery, yielding a modular framework that reuses existing
parameter-recovery methods.

A CNN consists of convolutional layers followed
by one or two fully connected layers, for which recovery methods are already
available~\cite{DBLP:conf/icml/RolnickK20,DBLP:conf/iclr/DanielyG23,cryptoeprint:2024/1580
,shen2026fcn}. We focus on the
architecture recovery of the convolutional layers, which remains open. 
We analyze whether the first step of existing parameter-recovery
frameworks, namely point search and equation
solving, carries observational information
that reflects the convolutional architecture. Since all of these methods share
this step~\cite{cryptoeprint:2026/241,hal-05573262,cryptoeprint:2026/139,
cryptoeprint:2026/1164}, our architecture recovery is independent of any
particular method.

\paragraph{\normalfont\normalsize\bfseries{Our Contributions.}}
To the best of our knowledge, we propose the first architecture recovery
method for CNNs. Our key observation is that the spatial structure of the 
recovered vectors directly reveals the CNN architecture, including all the hyperparameters related to convolution and pooling.
Our contributions are summarized as follows.

\begin{itemize}
    \item \textbf{Formal spatial representation.} We give a formal
    representation of the spatial structure of the recovered vectors~(Section~\ref{sec:auxiliary_attack_overview}), built on
    the first step of the parameter-recovery
    framework, namely point search and
    equation
    solving. Since it
    is read directly off the recovered vectors, it is independent of the
    feedback setting.

    \item \textbf{Feedback-agnostic and modular architecture
    recovery.} We decompose the attack process into several subtasks, as
    listed in Table~\ref{tab:architecture_recovery_tasks}: identifying the
    target layer type (Section~\ref{sec:layer}), recovering the
    convolution kernel size and stride
    (Section~\ref{sec:conv_parameter_identification}), the padding mode
    and the output channel number
    (Section~\ref{sec:padding_output_channel_recovery}), and the pooling
    window size and stride
    (Section~\ref{subsec:pooling_structure_identification}).

    The attack is feedback-agnostic and modular. It uses only the
    recovered vectors, which the point search and equation solving that
    open every parameter-recovery pipeline already produce in both the
    raw-output and the hard-label setting, since the critical points
    searched in the raw-output setting and the dual points searched in
    the hard-label setting yield identical recovered vectors. Hence it
    composes with any existing parameter-recovery attack for CNNs.
    Combined with such an attack, it yields a complete
    cryptanalytic extraction framework that recovers each layer's
    architecture before its parameters
    (Figure~\ref{fig:attack_framework_comparison}).

    \item \textbf{Experimental evaluation.} We evaluate the attack on the
    models adopted by existing CNN attacks. In the architecture-unknown
    setting, the architecture of every tested model is recovered
    exactly, in both feedback settings and both upstream modes, with
    only a few additional queries and slightly more running time. To
    show that the attack handles diverse architectural hyperparameters,
    we further design an architecturally diverse CNN and present, task
    by task, the complete evidence of its architecture recovery through
    visualizations (Section~\ref{sec:experiments}). The complete
    experimental data and the source code are released. 
    %at \url{https://anonymous.4open.science/r/conv-architecture-recovery-383}.

\end{itemize}

% !TeX spellcheck = en_US
% !TEX root = ../main.tex

\section{Preliminaries}
\label{sec:preliminaries}

\subsection{Basic Definitions and Notation}
\label{subsec:basic_definitions_notations}

We introduce the basic definitions and notations related to 
convolutional neural networks and cryptanalytic extraction attacks.
\begin{myDef}[($m+n$)-Deep Convolutional Neural Network]
\label{def:cnn}
An $(m+n)$-deep CNN is a map
$\mathcal{F}_{\mathcal{A},\Theta}:\mathbb{R}^{d_0}\rightarrow
\mathbb{R}^{d_{n+1}}$. It consists of
convolutional layers followed by fully connected layers:
\begin{equation*}
    \resizebox{0.98\linewidth}{!}{$\displaystyle
    \mathcal{F}_{\mathcal{A},\Theta}
    =
    \underbrace{
    f^{(n+1)}\circ\sigma^{(n)}\circ\cdots\circ\sigma^{(1)}\circ f^{(1)}
    }_{\mathrm{Fully\ connected\ layers}}
    \circ
    \underbrace{
    \rho^{(m)}\circ\sigma_c^{(m)}\circ f_c^{(m)}
    \circ\cdots\circ
    \rho^{(1)}\circ\sigma_c^{(1)}\circ f_c^{(1)}
    }_{\mathrm{Convolutional\ layers}}
    .$}
\end{equation*}
In the convolutional layer $l$, for $1\leq l\leq m$, $f_c^{(l)}$, $\sigma_c^{(l)}$, and
$\rho^{(l)}$ denote the convolution, ReLU activation, and pooling
operations, respectively.
In the fully connected layer $l$, for $1\leq l\leq n$, $f^{(l)}$ and $\sigma^{(l)}$ denote the affine
transformation and ReLU activation. The final output layer is the affine transformation $f^{(n+1)}$.
\end{myDef}

\begin{myDef}[Model Architecture]
\label{def:architecture}
The model architecture $\mathcal{A}$ is the ordered sequence of layer types, 
dimensions, and structural hyperparameters. Let $L$ denote the index of the
target layer. The architecture information includes
\begin{equation*}
    \mathcal{A}^{(L)}
    =
    \begin{cases}
        (\tau^{(L)},k^{(L)},s_k^{(L)},\pi^{(L)},C_{\mathrm{out}}^{(L)},
        \rho^{(L)},p^{(L)},s_p^{(L)}), & \tau^{(L)}=\mathrm{Conv},\\
        (\tau^{(L)},d_{\mathrm{fc}}^{(L)}), & \tau^{(L)}=\mathrm{FC}.
    \end{cases}
\end{equation*}
The components are specified as follows:
\begin{itemize}
    \item $\tau^{(L)}\in\{\mathrm{Conv},\mathrm{FC}\}$ denotes the target-layer
    type.
    \item If $\tau^{(L)}=\mathrm{Conv}$, the structural hyperparameters include
    kernel size $k^{(L)}$ and convolution stride $s_k^{(L)}$,
    padding mode $\pi^{(L)}\in\{\mathrm{Valid},\mathrm{Same}\}$,
    output-channel number $C_{\mathrm{out}}^{(L)}$, and pooling type
    $\rho^{(L)}\in\{\mathrm{NoPool},\mathrm{AvgPool},\mathrm{MaxPool}\}$;
    when pooling is present, $\mathcal{A}^{(L)}$ includes pooling window size $p^{(L)}$ and
    pooling stride $s_p^{(L)}$.
    \item If $\tau^{(L)}=\mathrm{FC}$, $d_{\mathrm{fc}}^{(L)}$ denotes the number of neurons in that layer.
\end{itemize}
\end{myDef}

As an example, Figure~\ref{fig:conv_layer_example} illustrates a convolutional
layer with input channel number $C^{(L)}=1$ and architecture
    $\mathcal{A}^{(L)}
    =
    (\mathrm{Conv},2,1,\mathrm{Valid},1,\mathrm{MaxPool},2,2)$. Throughout,
$IN^{(L)}\in\mathbb{R}^{C^{(L)}\times H^{(L)}\times W^{(L)}}$ denotes the
input of the convolution operation, where $C^{(L)}$, $H^{(L)}$, and $W^{(L)}$
are its channel number, height, and width, and
$K^{(L)}\in\mathbb{R}^{k^{(L)}\times k^{(L)}}$ denotes the convolution kernel.
\begin{figure}[!htb]
    \centering
    \includegraphics[width=0.98\textwidth]{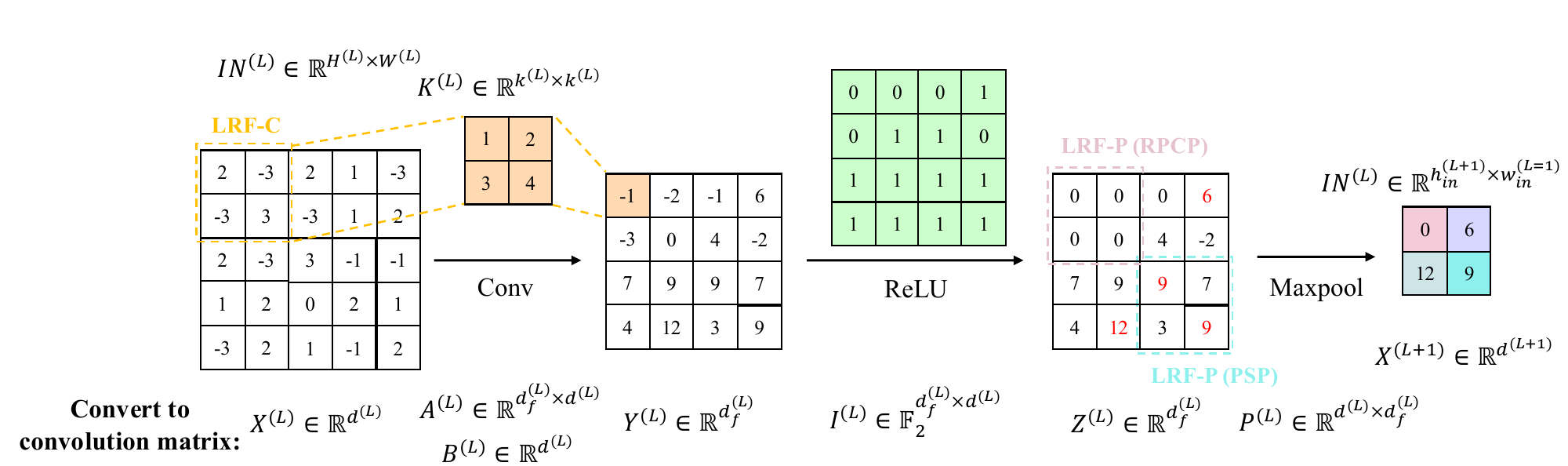}
    \caption{Symbolic view of the example convolutional layer architecture
    $\mathcal{A}^{(L)}=(\mathrm{Conv},2,1,\mathrm{Valid},1,\mathrm{MaxPool},2,2)$.
    The lower labels are the corresponding notations in the convolution-matrix view.
    LRF-C (Local Receptive Field of Convolutional Layer) is the sub-matrix of
    $IN^{(L)}$ multiplied by kernel $K^{(L)}$ to produce the entry
    $\bigl(Y^{(L)}\bigr)_t$, where $0\le t\le d_f^{(L)}-1$.
    LRF-P (Local Receptive Field of Pooling Layer) is the sub-matrix of
    $Z^{(L)}$ processed by pooling to produce the
    entry $\bigl(X^{(L+1)}\bigr)_t$, where $0\le t\le d^{(L+1)}-1$.}
    \label{fig:conv_layer_example}
\end{figure}

\subsection{Parameter Recovery in CNNs}
\label{subsec:preliminaries_signature_recovery}

We review the extraction framework for FCNs and its extension to CNNs, stating
each component only to the extent required by the subsequent analysis. Further
technical details are developed where they are used.

\begin{myDef}[Parameter Extraction Framework~\cite{DBLP:conf/crypto/CarliniJM20}]
\label{def:parameters_extraction_framework}
The parameters are recovered layer by layer.
Let $F_x^{(L-1)}$ denote the recovered prefix and $G_x^{(L+1)}$ the
unrecovered suffix. In a sufficiently small linear neighborhood of $x$, the
target layer is isolated as
$\mathcal{F}_{\mathcal{A},\Theta}(x)
=G_x^{(L+1)} (A^{(L)}F_x^{(L-1)}(x)+B^{(L)}) $.
Denote by $\widehat{A}^{(L)}_i$ the final recovered weight of the $i$-th
neuron in layer $L$. The framework is decomposed into two sequential
phases:

\begin{itemize}
\item \textbf{Neuron Signature Recovery:} 
For each neuron, the attacker performs the following steps:
point collection, equation solving, clustering and sorting, and merging. 
For a ReLU-induced point, the recovered vector $\tilde{A}^{(L)}_i$ is proportional to the neuron signature, with
$\tilde{A}^{(L)}_i \in\{\widehat{A}^{(L)}_i,-\widehat{A}^{(L)}_i\}$ after normalization.
    
\item \textbf{Neuron Sign Recovery:} 
Checking whether $\tilde{A}^{(L)}_i = \widehat{A}^{(L)}_i$ or $\tilde{A}^{(L)}_i = -\widehat{A}^{(L)}_i$.
\end{itemize}
\end{myDef}

At EUROCRYPT 2024, Canales-Mart{\'{\i}}nez et
al.~\cite{DBLP:conf/eurocrypt/CanalesMartinezCHRSS24} stated that the framework in
Definition~\ref{def:parameters_extraction_framework} also works for CNNs
without modification. Each placement of the convolution kernel is regarded as a convolution neuron
whose weight vector is nonzero only on its LRF-C (see Figure~\ref{fig:conv_layer_example}). Stacking these weight vectors yields Definition~\ref{def:convolution_matrix}.

\begin{myDef}[Convolution Matrix~\cite{cryptoeprint:2026/241}]
\label{def:convolution_matrix}
Denote the convolution matrix by $A^{(L)}$. Let
$w_o^{(L)}$ denote the output width of the convolution at layer $L$,
i.e., $\lceil W^{(L)}/s_k^{(L)}\rceil$ under Same padding and
$\lfloor(W^{(L)}-k^{(L)})/s_k^{(L)}\rfloor+1$ under Valid padding. Let
$C^{(L)}=C_{\mathrm{out}}^{(L)}=1$; for $0\leq i\leq d_f^{(L)}-1$, 
define the $i$-th row of the convolution matrix $A^{(L)}$ in two cases.
\begin{itemize}
    \item 
    %\textbf{$\pi^{(L)} = \mathrm{Valid}$:} 

$\pi^{(L)} = \mathrm{Valid}:
A_i^{(L)}=\bigl(\underbrace{0,\ldots,0,K_0^{(L)},0,\ldots,0}_{\left|\overline{A}_{i,1}^{(L)}\right|=W^{(L)}},\ldots,\underbrace{0,\ldots,0,K_{k^{(L)}-1}^{(L)},0,\ldots,0}_{\left|\overline{A}_{i,k^{(L)}}^{(L)}\right|=W^{(L)}}\bigr)
$,

\item
$\pi^{(L)} = \mathrm{Same}$:
\begin{equation*}
\resizebox{\linewidth}{!}{$\displaystyle
A_i^{(L)}=\bigl(\underbrace{0,\ldots,0}_{(W^{(L)}+2t_{\mathrm{pad}}^{(L)})t_{\mathrm{pad}}^{(L)}},\underbrace{0,\ldots,0}_{t_{\mathrm{pad}}^{(L)}},\overline{A}_{i,1}^{(L)},\underbrace{0,\ldots,0}_{t_{\mathrm{pad}}^{(L)}},\ldots,\underbrace{0,\ldots,0}_{t_{\mathrm{pad}}^{(L)}},\overline{A}_{i,k^{(L)}}^{(L)},\underbrace{0,\ldots,0}_{t_{\mathrm{pad}}^{(L)}},\underbrace{0,\ldots,0}_{(W^{(L)}+2t_{\mathrm{pad}}^{(L)})t_{\mathrm{pad}}^{(L)}}\bigr)
$},
\end{equation*}
where
$t_{\mathrm{pad}}^{(L)}=\frac{1}{2}\left(\left(\left\lceil W^{(L)}/s_k^{(L)}\right\rceil-1\right)s_k^{(L)}+k^{(L)}-W^{(L)}\right)$.
\end{itemize}
Let $\mathcal{K}_i^{(L)}$ index the positions of the segments
$\overline{A}_{i,1}^{(L)},\ldots,\overline{A}_{i,k^{(L)}}^{(L)}$
in $A_i^{(L)}:$
\begin{equation*}
\resizebox{\linewidth}{!}{$\displaystyle
    \mathcal{K}_i^{(L)}
    =
    \left\{
    \left(t_{\mathrm{pad}}^{(L)}+\left\lfloor \frac{i}{w_o^{(L)}}\right\rfloor\cdot s_k^{(L)}+j\right)
    \left(W^{(L)}+2t_{\mathrm{pad}}^{(L)}\right)
    +t_{\mathrm{pad}}^{(L)}+\left(i\bmod w_o^{(L)}\right)\cdot s_k^{(L)}+m
    \ \middle|\
    0\le j<k^{(L)},\ 0\le m<k^{(L)}
    \right\}.
$}
\end{equation*}

\end{myDef}

Each entry of the input tensor $\mathrm{IN}^{(L)}$ is indexed by a triple $(q,u,v)$, 
with $q$ the channel index and $(u,v)$ the spatial position. 
Write $\Lambda^{(L)}=\{0,\ldots,H^{(L)}-1\}\times\{0,\ldots,W^{(L)}-1\}$, $\Omega^{(L)}=\{0,\ldots,C^{(L)}-1\}\times\Lambda^{(L)}$ 
for the spatial and the channel–spatial coordinate spaces. 
To write the convolution as a matrix operation, the input tensor $\mathrm{IN}^{(L)}$ is first flattened into the one-dimensional space 
$\mathcal{I}^{(L)}=\{0,\ldots,d^{(L)}-1\}$; the flattened input is denoted by $x^{(L)}$. 
According to the padding mode $\pi^{(L)}$, each spatial dimension is then enlarged by $t_{\mathrm{pad}}^{(L)}$ zero entries on both sides, 
and the resulting tensor is flattened into a vector indexed by the padded space $\widetilde{\mathcal{I}}^{(L)}$. 
The row-major bijection $\psi^{(L)}:\mathcal{I}^{(L)}\to\Omega^{(L)}$, $\psi^{(L)}(i)=(q,u,v)$, identifies $\mathcal{I}^{(L)}$ with $\Omega^{(L)}$. 
Hence the padded indices $\widetilde{\mathcal{I}}^{(L)}$ serve only to write the convolution matrix, 
whereas the entries of $\mathrm{IN}^{(L)}$ live in the unpadded space $\mathcal{I}^{(L)}$. 
The two are related by the padding map $\phi^{(L)}:\mathcal{I}^{(L)}\hookrightarrow\widetilde{\mathcal{I}}^{(L)}$, 
which sends an unpadded index to the padded index of the same position, i.e.\ $(q,u,v)\mapsto(q,u+t_{\mathrm{pad}}^{(L)},v+t_{\mathrm{pad}}^{(L)})$ in coordinates.
For a set $T\subseteq\widetilde{\mathcal{I}}^{(L)}$ we write $(\phi^{(L)})^{-1}(T)=\{j\in\mathcal{I}^{(L)}:\phi^{(L)}(j)\in T\}$ for its preimage, 
which discards the padded positions.

%For example, consider a single-channel $3\times3$ input, a $2\times2$
%kernel $K^{(L)}$, and
%s_k^{(L)}=1$. The convolution matrix $A^{(L)}$ is given by:
%\begin{equation*}
%K^{(L)}=\left(\begin{array}{*{2}{c}}
%1&2\\
%3&4
%\end{array}\right)
%\xrightarrow{\ t_{\mathrm{pad}}^{(L)}=0\ }
%A^{(L)}
%=
%\left(\begin{array}{*{9}{c}}
%1&2&0&3&4&0&0&0&0\\
%0&1&2&0&3&4&0&0&0\\
%0&0&0&1&2&0&3&4&0\\
%0&0&0&0&1&2&0&3&4
%\end{array}\right).
%\end{equation*}

Based on the convolution matrix, Sun et al.~\cite{cryptoeprint:2026/139} apply the
parameter extraction framework to ReLU CNNs with average pooling. However, Chen et al.~\cite{cryptoeprint:2026/241} pointed out that the framework is not directly applicable to CNNs with max pooling, 
because the selection of the maximum value in LRF-P introduces additional nonlinearity. 
By jointly accounting for ReLU activations and max-pooling selections, 
prior work extends the critical point~\cite{cryptoeprint:2026/241} and the dual point~\cite{cryptoeprint:2026/1164}.

\begin{myDef}[Critical Points~\cite{cryptoeprint:2026/241}]
\label{def:critical_points}
A CNN has three critical point types:
\begin{itemize}
    \item A ReLU-pooling critical point (RPCP) is an input $x$ for which the
    input to the $i$-th ReLU vanishes in convolutional layer $L$,
    $A_i^{(L)}x^{(L)}+\bigl(B^{(L)}\bigr)_i=0$; this value is termed a
    convolution response. With max pooling, this response must also
    be the maximum in its LRF-P.
    \item A pooling switching point (PSP) is an input $x$ at which two
    responses in one LRF-P tie for the maximum in convolutional layer $L$:
        $A_{r_1}^{(L)}x^{(L)}+\bigl(B^{(L)}\bigr)_{r_1}
        =A_{r_2}^{(L)}x^{(L)}+\bigl(B^{(L)}\bigr)_{r_2}$.
    \item A fully connected critical point (FCP) is an input $x$ for which
    the input to the $i$-th ReLU vanishes in fully connected layer $L$,
    $A_i^{(L)}x^{(L)}+\bigl(B^{(L)}\bigr)_i=0$.
\end{itemize}
LRF-P instances of the RPCP and PSP are shown in Figure~\ref{fig:conv_layer_example}.
\end{myDef}

\begin{myDef}[Decision-Boundary Point and Dual Point~\cite{cryptoeprint:2026/1164}]
\label{def:dual_points}
A decision-boundary point is an input $x$ where the two largest logits
coincide, i.e.\ for some $i\neq j$:
\[
\mathcal{F}_{\mathcal{A},\Theta}(x)_i
=\mathcal{F}_{\mathcal{A},\Theta}(x)_j
=\max \mathcal{F}_{\mathcal{A},\Theta}(x).
\]
A dual point is both a decision-boundary point and a critical point.
Since critical points are of three types, RPCP, PSP, and FCP, so are dual points:
dual RPCP, dual PSP, and dual FCP.

\end{myDef}

For each critical or dual point $x$, the attacker samples small perturbations $\vec{\delta}$
and computes $\mathcal{H}(x;\vec{\delta})=\mathcal{F}_{\mathcal{A},\Theta}(x+\vec{\delta})
+\mathcal{F}_{\mathcal{A},\Theta}(x-\vec{\delta})-2\mathcal{F}_{\mathcal{A},\Theta}(x)$.
These values form a linear system, which is solved to obtain the recovered 
vector $v_x$~\cite{DBLP:conf/crypto/CarliniJM20}.
\begin{lemma}[Recovered Vectors in CNNs~\cite{cryptoeprint:2026/241,cryptoeprint:2026/1164,hal-05573262}]
\label{lem:recovery_vector_cnn}
For a critical or dual point $x$, there exists a nonzero scalar $g_x$
such that the recovered vector $v_x$ reads
\begin{equation}
    \label{eq:recovered_signature_vector}
    v_x=
    \begin{cases}
        g_x A_{m,i}^{(L)}[\mathcal{K}_i^{(L)}],
        & x\text{ is a (dual) RPCP at layer }L,\\
        g_x\bigl(A_{m,i}^{(L)}-A_{m,j}^{(L)}\bigr)[\mathcal{K}_i^{(L)}\cup\mathcal{K}_j^{(L)}],
        & x\text{ is a (dual) PSP at layer }L,\\
        g_x A_i^{(L)},
        & x\text{ is a (dual) FCP at layer }L,
    \end{cases}
\end{equation}
where $m$ denotes the output-channel index.
\end{lemma}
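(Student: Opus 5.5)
The plan is to reduce each case to the standard second-difference argument of Carlini et al.~\cite{DBLP:conf/crypto/CarliniJM20}, applied to the local structure of the CNN around $x$. Fix a critical or dual point $x$ at layer $L$ that is \emph{generic}: exactly one nonlinearity changes state at $x$, and every other ReLU and max-pooling selection in every layer has a fixed state on a small ball around $x$. On this ball the prefix is affine, $x^{(L)}=F^{(L-1)}_x(x)$, so we can treat $x^{(L)}$ as the local variable. I am identifying $v_x$ with the vector read off in the coordinates of $x^{(L)}$, which is how Definition~\ref{def:parameters_extraction_framework} uses the recovered prefix.

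The first step is the local decomposition. On each side of the single switching hypersurface $\{h(x^{(L)})=0\}$, the map $\mathcal{F}_{\mathcal{A},\Theta}$ is affine. The two affine pieces differ by a term $c\,\max(h,0)$, where the form of $h$ depends on the case:
\begin{itemize}
\item For an RPCP, $h=A_{m,i}^{(L)}x^{(L)}+B_{m,i}$. With max pooling, the response $i$ is the selected maximum in its LRF-P, so toggling its ReLU passes through the pooling.
\item For a PSP, $h=(A_{m,i}-A_{m,j})x^{(L)}+(B_{m,i}-B_{m,j})$. Here $\max(y_i,y_j)=y_j+\max(y_i-y_j,0)$, and both responses must be positive so that the ReLU is active.
\item For an FCP, $h=A_i^{(L)}x^{(L)}+B_i$.
\end{itemize}
The coefficient $c$ is the gradient of the fixed suffix $G_x^{(L+1)}$ with respect to that unit. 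Average pooling rescales $c$ by $1/p^2$ but leaves $h$ unchanged.

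The second step is to compute $\mathcal{H}(x;\vec\delta)$ for small $\vec\delta$. Since $h(x)=0$, the affine parts cancel and $\mathcal{H}=c\,|h_{\mathrm{lin}}(\delta^{(L)})|$, where $\delta^{(L)}$ is the image of $\vec\delta$ under the linear prefix. In the raw-output setting, solving the linear system over many $\vec\delta$ recovers $h_{\mathrm{lin}}$ up to the scalar $c$. In the hard-label setting at a dual point, the decision boundary is a hyperplane on each side. Its normal jumps by a multiple of $\nabla h$, and the dual-point procedure recovers this jump up to a scalar. In both settings we obtain $v_x=g_x\nabla h$ with $g_x\neq0$ for generic $x$, where $c\neq 0$ because the neuron is not dead downstream.

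The third step is the support restriction. By Definition~\ref{def:convolution_matrix}, the row $A_{m,i}^{(L)}$ vanishes off $\mathcal{K}_i^{(L)}$; with multiple input channels this becomes the analogous union over channels. Hence $\nabla h$ is supported on $\mathcal{K}_i^{(L)}$ for an RPCP and on $\mathcal{K}_i\cup\mathcal{K}_j$ for a PSP. Restricting to these coordinates gives exactly the expressions in~\eqref{eq:recovered_signature_vector}, and pulling back via $\phi^{(L)}$ discards the padded coordinates, which carry no input. For an FCP there is no sparsity and the full row $A_i^{(L)}$ appears.

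The main obstacle is the max-pooling case. We must justify that near a generic RPCP only the one ReLU switches, so that the max selection is locally constant: the chosen response is the strict maximum apart from the tie-free condition. We must also justify that at a PSP only the selection switches. Finally, we must ensure $g_x\neq0$, which requires the downstream suffix not to annihilate the unit. I would handle this by a measure-zero genericity argument over the points found by the search, treating ties of higher codimension as excluded.
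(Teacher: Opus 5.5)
Your reduction matches what the paper relies on. The paper gives no proof of its own: it imports the lemma from prior work and remarks only that RPCP and FCP vectors have the same form as in an FCN, and that a PSP produces a difference of two rows. Your decomposition into an affine part plus $c\max(h,0)$, the identity $\max(y_i,y_j)=y_j+\max(y_i-y_j,0)$, and the support restriction make exactly this explicit.

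There is, however, a genuine gap at ``we can treat $x^{(L)}$ as the local variable.'' The perturbations only reach $\delta^{(L)}=J_x\vec\delta$, where $J_x=\partial x^{(L)}/\partial x$ is the local Jacobian of the prefix. So the measurements determine $h_{\mathrm{lin}}$ only on $\operatorname{im}J_x$; that is, they determine $J_x^{\top}\nabla h$, not $\nabla h$.
\begin{itemize}
\item For $L\ge2$, a coordinate of $x^{(L)}$ is locally constant if it comes from a layer-$(L-1)$ ReLU that is off near $x$, or from a pooling window whose responses are all negative. The corresponding row of $J_x$ vanishes, and the entry of $A_{m,i}^{(L)}$ at that coordinate is invisible at $x$.
\item $J_x$ also cannot be onto when $d^{(L)}>d_0$, e.g.\ with no pooling and more output than input channels.
\end{itemize}
In such cases a single point yields only a partial signature, with entries inside $\mathcal{K}_i^{(L)}$ missing. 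This is exactly why Definition~\ref{def:parameters_extraction_framework} contains a merging step.

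Your proposed measure-zero genericity argument cannot remove this problem. An inactive upstream ReLU is an open condition, not a tie of higher codimension. You need an explicit hypothesis, for instance that $J_x$ has full row rank $d^{(L)}$ (in particular, every coordinate of $x^{(L)}$ is locally active). Alternatively, you must read $v_x$ as the merged signature. With such a hypothesis your argument goes through; for $L=1$, where the prefix is the identity, none is needed.

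Two smaller points.
\begin{itemize}
\item $\mathcal{H}(x;\vec\delta)=c\,|h_{\mathrm{lin}}(\delta^{(L)})|$ is not linear in $\vec\delta$, so ``solving the linear system'' needs a sign-consistent measurement. One option: fix a crossing direction $e$ and take the difference of the directional derivatives along $\vec\delta$ at $x+\epsilon e$ and $x-\epsilon e$. This equals $\pm c\,h_{\mathrm{lin}}(J_x\vec\delta)$ with one sign for all $\vec\delta$.
\item In the hard-label case, each decision-boundary normal is known only up to its own scale, so the ``jump'' is not directly observable. Recovering it up to a scalar rests entirely on the cited dual-point procedure, and that step inherits the same $J_x$ issue.
\end{itemize}
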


For a (dual) RPCP or (dual) FCP, the recovered vector has the same form as in
an FCN. Hence, the parameter extraction framework in Definition~\ref{def:parameters_extraction_framework} directly recovers
the neuron signature up to a nonzero scalar.

For a (dual) PSP, the support of $v_x$ selects exactly two rows of
$A_m^{(L)}$; let these be the rows $A_{m,i}^{(L)}$ and $A_{m,j}^{(L)}$
associated with the spatial positions $i<j$. The
nonzero entries of the two rows are shifted copies of the same kernel.
The two input patches at the nonzero positions of $A_{m,i}^{(L)}$ and
$A_{m,j}^{(L)}$ have equal inner products with the convolution kernel. 
Collecting all such equalities yields a linear system in $K^{(L)}$~\cite{cryptoeprint:2026/241}.

The neuron sign recovery method in the framework can be adapted to CNNs. Moreover, the
convolutional structure admits more specialized methods, see~\cite{cryptoeprint:2026/241,cryptoeprint:2026/1164,hal-05573262} for details.

Since critical points and dual points yield the same recovered
vectors, the following discussion uses critical points as illustrative examples.

% !TeX spellcheck = en_US
% !TEX root = ../main.tex

\section{Auxiliary Concepts of Recovered Vectors in CNNs}
\label{sec:auxiliary_attack_overview}

Based on the correspondence between the convolution kernel and the row
vectors of the convolution matrix (Definition~\ref{def:convolution_matrix}),
this section describes the spatial properties of the recovered vector and formalizes
the auxiliary definitions used in the subsequent analysis.

\subsection{Structural Properties of Recovered Vectors}

According to Eq.~\eqref{eq:recovered_signature_vector}, the recovered vector
$v_x$ is either one row of the convolution matrix or the difference of two
rows. The kernel entries
in each row are confined to the corresponding LRF-C,
while all entries outside the LRF-C are zero.
We introduce two sets, $S_x$ and $\mathcal{W}_i^{(L)}$,
to characterize the relationship between recovered vectors and the convolution
matrix.

\begin{myDef}[Effective Position Set of a Recovered Vector]
\label{def:support_channel_slices}
For a recovered vector $v_x$, $S_x$ is defined as
$S_x=\{i\in\mathcal{I}^{(L)}:\ (v_x)_i\ne0\}$.
\end{myDef}

\begin{myDef}[Effective Position Set of a Convolution-Matrix Row]
\label{def:convolution_row_support}
For the $i$-th row $A_{m,i}^{(L)}$ of the $m$-th output channel,
$\mathcal{W}_i^{(L)}\subseteq\mathcal{I}^{(L)}$ is defined as
    $\mathcal{W}_i^{(L)}
    =(\phi^{(L)})^{-1}\bigl(
    \{j\in\widetilde{\mathcal{I}}^{(L)}:(A_{m,i}^{(L)})_j\ne0\}
    \cap\mathcal{K}_i^{(L)}
    \bigr)$,
that is, the input positions of the $i$-th LRF-C; it is independent of $m$,
since all output channels share the same LRF-Cs.
\end{myDef}

\begin{proposition}[Shared Sparsity Pattern per LRF-C]
\label{prop:recovery_vector_sparsity_consistency}
For a target-layer RPCP, PSP, or FCP, the following equation holds:
\begin{equation*}
    \label{eq:point_type_support}
    S_x=
    \begin{cases}
        \mathcal{W}_{i}^{(L)},
        & x\text{ is an RPCP associated with the }i\text{-th LRF-C},\\
        \mathcal{W}_{i}^{(L)}\cup\mathcal{W}_{j}^{(L)},
        & x\text{ is a PSP associated with the }i\text{-th and }j\text{-th LRF-Cs},\\
        \mathcal{I}^{(L)},
        & x\text{ is an FCP}.
    \end{cases}
\end{equation*}
Consequently, RPCPs associated with the same LRF-C, as well as PSPs
associated with the same pair of LRF-Cs, have the same sparsity pattern.

\end{proposition}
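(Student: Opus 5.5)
The plan is to read the proposition off Lemma~\ref{lem:recovery_vector_cnn} case by case, using the explicit row structure of the convolution matrix in Definition~\ref{def:convolution_matrix}. Throughout, the recovered vector $v_x$ is viewed in the unpadded index space $\mathcal{I}^{(L)}$: padded coordinates of $\widetilde{\mathcal{I}}^{(L)}$ multiply zero inputs and therefore never enter the linear system. This identification is exactly the pullback $(\phi^{(L)})^{-1}$ used in Definition~\ref{def:convolution_row_support}. Since $g_x\neq 0$, multiplying by it does not change supports, so $S_x$ equals the support of the corresponding row, or difference of rows, of $A_m^{(L)}$.

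\textbf{RPCP case.} By Lemma~\ref{lem:recovery_vector_cnn}, $v_x=g_xA^{(L)}_{m,i}[\mathcal{K}_i^{(L)}]$. The row $A^{(L)}_{m,i}$ vanishes outside $\mathcal{K}_i^{(L)}$, and on $\mathcal{K}_i^{(L)}$ its entries are the kernel entries $K^{(L)}$ of channel $m$ (across input channels). Hence $S_x$ is the preimage of $\{j:(A_{m,i}^{(L)})_j\neq0\}\cap\mathcal{K}_i^{(L)}$, which is $\mathcal{W}_i^{(L)}$ by definition.

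\textbf{PSP case.} Here $v_x=g_x(A_{m,i}-A_{m,j})$ restricted to $\mathcal{K}_i\cup\mathcal{K}_j$. At a coordinate in $\mathcal{W}_i\setminus\mathcal{W}_j$ only one term survives, so the entry is nonzero; the same holds for $\mathcal{W}_j\setminus\mathcal{W}_i$. On the overlap $\mathcal{W}_i\cap\mathcal{W}_j$ the entry is a difference $K_{a}-K_{b}$ of two kernel entries at distinct offsets, since the positions differ by a nonzero stride shift. This overlap is the main obstacle: equality $\supseteq$ holds only if no such difference vanishes. I would handle it by the genericity convention implicit in the paper: the weights are generic, i.e.\ real-valued trained parameters in which coincidences between distinct kernel entries, or zero entries, occur with probability zero. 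I would state this assumption explicitly. The same assumption gives the nonzero kernel entries needed in the RPCP case, so that the support is all of the LRF-C rather than a subset. With it, $S_x=\mathcal{W}_i\cup\mathcal{W}_j$.

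\textbf{FCP case.} Here $v_x=g_xA_i^{(L)}$ is a dense row of a fully connected weight matrix. Under the same genericity assumption all its entries are nonzero, so $S_x=\mathcal{I}^{(L)}$.

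\textbf{Consequence.} The corollary follows immediately, because each right-hand side depends only on the LRF-C indices $i$ (or the pair $i,j$) and not on the point $x$ or on the channel $m$. The point $x$ enters only through $g_x$, and $\mathcal{W}_i^{(L)}$ is channel independent, as noted in Definition~\ref{def:convolution_row_support}.
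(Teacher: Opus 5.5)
Your proposal is correct and follows the same route as the paper: read the support off Lemma~\ref{lem:recovery_vector_cnn} case by case, use $g_x\neq0$, and invoke genericity of the kernel to rule out cancellation on the overlap $\mathcal{W}_i^{(L)}\cap\mathcal{W}_j^{(L)}$ in the PSP case, which is the paper's ``for a generic kernel no cancellation occurs.'' You also state the genericity assumption explicitly and use it for nonzero kernel entries and dense fully connected rows; the paper leaves this implicit in ``a fully connected row is dense'' and in the channel-independence of $\mathcal{W}_i^{(L)}$, so this sharpens the same argument rather than changing it.
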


\begin{proof}
By Eq.~\eqref{eq:recovered_signature_vector}, $v_x$ equals $g_x$ times
the restricted row (resp. row difference) and is supported on
$\mathcal{K}_i^{(L)}$ (resp. $\mathcal{K}_i^{(L)}\cup\mathcal{K}_j^{(L)}$),
so restriction preserves the support: $S_x=\mathcal{W}_i^{(L)}$ for an
RPCP. For a PSP, the two rows are shifted copies of the same kernel, so
for a generic kernel no cancellation occurs on the overlap and
$S_x=\mathcal{W}_i^{(L)}\cup\mathcal{W}_j^{(L)}$. The FCP case is
immediate, since a fully connected row is dense and $S_x=\mathcal{I}^{(L)}$.
\end{proof}

\begin{proposition}[Proportional Recovered Vectors per LRF-C]
\label{prop:recovery_vector_numerical_consistency}
For a target-layer RPCP, PSP, or FCP, the following equation holds for every
$t\in S_x$:
\begin{equation*}
    (v_x)_t=
    \begin{cases}
        g_x\bigl(A_{m,i}^{(L)}\bigr)_{\phi^{(L)}(t)},
        & x\text{ is an RPCP},\\
        g_x\bigl(A_{m,i}^{(L)}-A_{m,j}^{(L)}\bigr)_{\phi^{(L)}(t)},
        & x\text{ is a PSP},\\
        g_x\bigl(A_i^{(L)}\bigr)_{t},
        & x\text{ is an FCP}.
    \end{cases}
\end{equation*}
\end{proposition}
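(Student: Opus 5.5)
The plan is to read the claim off Lemma~\ref{lem:recovery_vector_cnn}, with Proposition~\ref{prop:recovery_vector_sparsity_consistency} guaranteeing that every $t\in S_x$ is a meaningful coordinate. The only real work is index bookkeeping. The recovered vector $v_x$ lives in the unpadded space $\mathcal{I}^{(L)}$, while the rows $A_{m,i}^{(L)}$ are indexed by the padded space $\widetilde{\mathcal{I}}^{(L)}$. The restriction $[\mathcal{K}_i^{(L)}]$ in Eq.~\eqref{eq:recovered_signature_vector} must therefore be interpreted as follows: keep the padded coordinates in $\mathcal{K}_i^{(L)}$ that are images of genuine input positions, and relabel them through the padding map $\phi^{(L)}$. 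I will make this explicit first, stating that the restricted row is the vector $u\in\mathbb{R}^{d^{(L)}}$ with
\[
u_t=\bigl(A_{m,i}^{(L)}\bigr)_{\phi^{(L)}(t)} \text{ if } \phi^{(L)}(t)\in\mathcal{K}_i^{(L)}, \qquad u_t=0 \text{ otherwise}.
\]
Padded positions contribute nothing because they multiply zero inputs, so they are invisible to any query. This justifies dropping them.

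\textbf{RPCP case.} By Eq.~\eqref{eq:recovered_signature_vector}, $v_x=g_x u$. Take $t\in S_x$. By Proposition~\ref{prop:recovery_vector_sparsity_consistency} we have $S_x=\mathcal{W}_i^{(L)}$, so $\phi^{(L)}(t)\in\mathcal{K}_i^{(L)}$ by Definition~\ref{def:convolution_row_support}. Hence
\[
(v_x)_t=g_x\bigl(A_{m,i}^{(L)}\bigr)_{\phi^{(L)}(t)}.
\]

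\textbf{PSP case.} Here the restriction is to $\mathcal{K}_i^{(L)}\cup\mathcal{K}_j^{(L)}$. For $t\in S_x=\mathcal{W}_i^{(L)}\cup\mathcal{W}_j^{(L)}$ the same argument gives
\[
(v_x)_t=g_x\bigl(A_{m,i}^{(L)}-A_{m,j}^{(L)}\bigr)_{\phi^{(L)}(t)}.
\]
On positions belonging to only one of the two LRF-Cs, the other row's entry is zero, so the difference reduces to the single kernel entry, up to sign. This is consistent with the formula as stated.

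\textbf{FCP case.} A fully connected layer has no padding, and its rows are indexed directly by $\mathcal{I}^{(L)}$. The third line of Eq.~\eqref{eq:recovered_signature_vector} then gives $(v_x)_t=g_x(A_i^{(L)})_t$ for every $t$, in particular for $t\in S_x=\mathcal{I}^{(L)}$.

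The main obstacle is the first step: making rigorous that the restriction in Lemma~\ref{lem:recovery_vector_cnn} is compatible with $\phi^{(L)}$. Concretely, I must show that the coordinates of $v_x$ obtained by solving the linear system are indexed by $\mathcal{I}^{(L)}$ and correspond, via the injective map $\phi^{(L)}$, to the padded row coordinates. I would argue this from the local linearization $G_x^{(L+1)}(A^{(L)}F_x^{(L-1)}(x)+B^{(L)})$. The second difference $\mathcal{H}(x;\vec\delta)$ depends on $x^{(L)}$ only through $A_{m,i}^{(L)}$ applied to the zero-padded input. Therefore the effective linear functional on $x^{(L)}$ is $t\mapsto (A_{m,i}^{(L)})_{\phi^{(L)}(t)}$, and everything else follows by substitution.
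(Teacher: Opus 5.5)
Your proposal is correct and is essentially the paper's argument: the paper likewise reads the entrywise formula directly off Lemma~\ref{lem:recovery_vector_cnn} on the effective positions identified by Proposition~\ref{prop:recovery_vector_sparsity_consistency}, leaving implicit the index bookkeeping through the padding map $\phi^{(L)}$ that you make explicit. You do not actually need that proposition, nor its genericity assumption for PSPs: with your definition of the restricted row, $(v_x)_t\neq 0$ already forces $\phi^{(L)}(t)$ into the restriction set, and for $t$ outside that set both sides of the identity are zero because each row $A_{m,i}^{(L)}$ vanishes off $\mathcal{K}_i^{(L)}$, so in the RPCP and PSP cases the identity holds for every $t\in\mathcal{I}^{(L)}$.
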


\begin{proof}
On the effective positions, Eq.~\eqref{eq:recovered_signature_vector}
expresses $v_x[S_x]$ as $g_x$ times the restricted row (resp. row
difference) identified in
Proposition~\ref{prop:recovery_vector_sparsity_consistency}, and the
scalar $g_x$ depends only on $x$; hence two points associated with the
same LRF-C (resp. the same pair of LRF-Cs) have proportional recovered
vectors.
\end{proof}

For each input channel $q\in\{0,\ldots,C^{(L)}-1\}$, let $S_x^{[q]}$ denote the subset
of $S_x$ located in the $q$-th input-channel block.

\begin{proposition}[Equal Channel Slices of the Support]
\label{prop:input_channel_spatial_consistency}
For a target-layer RPCP or PSP, the following equation holds:
\begin{equation*}
    S_x^{[q]}=S_x^{[q']},
    q,q'\in\{0,\ldots,C^{(L)}-1\},\ q\ne q',\quad
    S_x^{[q]}\ne\emptyset,\ S_x^{[q']}\ne\emptyset.
\end{equation*}
\end{proposition}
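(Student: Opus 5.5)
The plan is to reduce the statement to the structure of the effective position sets $\mathcal{W}_i^{(L)}$ and then invoke Proposition~\ref{prop:recovery_vector_sparsity_consistency}. The point is that, for a multi-channel input, a convolution neuron of output channel $m$ at spatial position $i$ applies one $k^{(L)}\times k^{(L)}$ kernel slice $K^{(L)}_{m,q}$ per input channel $q$. Every slice is placed at the same spatial window, namely the rows $\lfloor i/w_o^{(L)}\rfloor s_k^{(L)}+j$ and columns $(i\bmod w_o^{(L)})s_k^{(L)}+m'$ given by $\mathcal{K}_i^{(L)}$. So I will first write $\mathcal{W}_i^{(L)}$ in channel--spatial coordinates via $\psi^{(L)}$ as
\[
\psi^{(L)}\bigl(\mathcal{W}_i^{(L)}\bigr)=\{0,\ldots,C^{(L)}-1\}\times\Lambda_i^{(L)},
\]
where $\Lambda_i^{(L)}\subseteq\Lambda^{(L)}$ is the preimage under the padding map of the padded spatial window of the $i$-th LRF-C. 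The spatial set $\Lambda_i^{(L)}$ depends only on $i$, $k^{(L)}$, $s_k^{(L)}$ and $t_{\mathrm{pad}}^{(L)}$, and not on $q$. This holds because $\phi^{(L)}$ shifts $(u,v)$ by $t_{\mathrm{pad}}^{(L)}$ identically in every channel, so discarding padded positions removes the same spatial coordinates in each channel block.

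Next I apply Proposition~\ref{prop:recovery_vector_sparsity_consistency}.
\begin{itemize}
\item For an RPCP, $S_x=\mathcal{W}_i^{(L)}$, so $S_x^{[q]}$ is the copy of $\Lambda_i^{(L)}$ in block $q$ for every $q$.
\item For a PSP, $S_x=\mathcal{W}_i^{(L)}\cup\mathcal{W}_j^{(L)}$, and the union is taken blockwise, so each slice is a copy of $\Lambda_i^{(L)}\cup\Lambda_j^{(L)}$.
\end{itemize}
Identifying each channel block with $\Lambda^{(L)}$ through $\psi^{(L)}$, which is how $S_x^{[q]}=S_x^{[q']}$ must be read, gives equality for all $q\ne q'$. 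Nonemptiness follows because $\Lambda_i^{(L)}$ is nonempty: the LRF-C always contains at least one unpadded position, since the definition of $t_{\mathrm{pad}}^{(L)}$ guarantees each window meets the real input.

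The main obstacle is that Proposition~\ref{prop:recovery_vector_sparsity_consistency} tacitly uses entrywise nonvanishing. That proposition identifies the support of $v_x$ with the full LRF-C only under a genericity assumption. In the multi-channel setting this means every kernel slice $K^{(L)}_{m,q}$ must have no zero entries. For a PSP it additionally means that, within each channel $q$, the difference of two shifted copies of $K^{(L)}_{m,q}$ has no cancellation on the overlap. I will state this genericity explicitly, in the same sense already used in that proposition, and note that it holds outside a measure-zero (Lebesgue-null) set of kernels. Exact cancellation would require a polynomial identity $(K_{m,q})_a=(K_{m,q})_b$ among the entries, and each such identity fails for generic real weights. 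With this assumption, the support in each channel is precisely the spatial window, and the argument above goes through.
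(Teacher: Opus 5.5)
Your proposal is correct and follows the paper's argument: a convolution row, or a difference of two rows from the same output channel, places the same spatial window in every input-channel block, so the channel slices of $S_x$ coincide. Your version differs only in making explicit what the paper's one-line proof leaves implicit: the blockwise identification through $\psi^{(L)}$, the channel-independent effect of the padding map, nonemptiness, and the per-channel genericity (nonzero kernel-slice entries and no cancellation on the PSP overlap).
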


\begin{proof}
By Eq.~\eqref{eq:recovered_signature_vector},
RPCP rows and PSP row differences repeat
the same spatial kernel pattern in every input-channel block. Hence, all
input-channel slices have identical effective position sets.
\end{proof}

By Proposition~\ref{prop:input_channel_spatial_consistency}, multiple input
channels do not affect the spatial sparsity pattern of recovered vectors.
For clarity, the following discussion assumes a single input channel: 
$\Omega^{(L)}$ = $\Lambda^{(L)}$.

\subsection{Auxiliary Definitions}

To characterize the spatial properties of recovered vectors in the following
analysis, we introduce several auxiliary definitions. We first classify PSPs
into three types according to the relative positions of their corresponding LRF-Cs.

\begin{myDef}[PSP Types]
\label{def:psp_types}
For a target-layer PSP, let
$(r_1,c_1)$ and $(r_2,c_2)$ be the starts of the two associated LRF-Cs in $\Lambda^{(L)}$.
They satisfy $\psi^{(L)}\bigl(\min\mathcal{W}_{i}^{(L)}\bigr)=(r_1,c_1)$ and
$\psi^{(L)}\bigl(\min\mathcal{W}_{j}^{(L)}\bigr)=(r_2,c_2)$.
The positions define three PSP types:
\begin{itemize}
\item A horizontal PSP satisfies $r_1=r_2$ and $c_1\ne c_2$; its
two LRF-Cs' starts lie on the same row.
\item A vertical PSP satisfies $r_1\ne r_2$ and $c_1=c_2$; its
two LRF-Cs' starts lie on the same column.
\item An off-axis PSP satisfies $r_1\ne r_2$ and $c_1\ne c_2$;
its two LRF-Cs' starts differ in both spatial directions.
\end{itemize}
\end{myDef}

To further describe the spatial geometry of recovered vectors, we formalize the
smallest axis-aligned box containing their effective positions and the binary
pattern within that box.

\begin{myDef}[Spatial Box and Shape of a Recovered Vector]
\label{def:spatial_box_shape}
For a recovered vector $v_x$, define its spatial bounding box, denoted
by $\operatorname{Box}_x$, as
\begin{equation*}
\label{eq:architecture_spatial_box}
\operatorname{Box}_x
=
[u_{\min},u_{\max}]
\times
[v_{\min},v_{\max}],
\end{equation*}
where
$u_{\min}=\min_{s\in S_x}[\psi^{(L)}(s)]_u$
and
$u_{\max}=\max_{s\in S_x}[\psi^{(L)}(s)]_u$,
with $v_{\min}$ and $v_{\max}$ defined in the same way.
The corresponding shape is
$\operatorname{Shape}_x
=
(h_x,w_x)$, where
$h_x=u_{\max}-u_{\min}+1$,
$w_x=v_{\max}-v_{\min}+1$.
\end{myDef}

\begin{myDef}[Effective Mask of a Recovered Vector]
\label{def:support_mask}
For a recovered vector $v_x$, the effective mask is a binary matrix
$\operatorname{Mask}_x\in\{0,1\}^{h_x\times w_x}$.
Let $\psi^{(L)}(S_x)$ denote the element-wise spatial mapping of $S_x$. The
$(i,j)$-th element is expressed as
\begin{equation*}
\left(\operatorname{Mask}_x\right)_{i,j}
=
    \left[
    (u_{\min}+i,v_{\min}+j)\in\psi^{(L)}(S_x)
    \right],
    \;
    0\leq i<h_x,
    \;
    0\leq j<w_x.
\end{equation*}
\end{myDef}

To understand how Definitions~\ref{def:psp_types}--\ref{def:support_mask}
work, Figure~\ref{fig:evidence} illustrates an RPCP
and the three PSP types, together with graphical representations of their
respective LRF-Cs, Boxes, and Masks.
\begin{figure}[!htb]
\centering
\begin{minipage}[t]{0.24\textwidth}
\centering
\includegraphics[width=\linewidth]{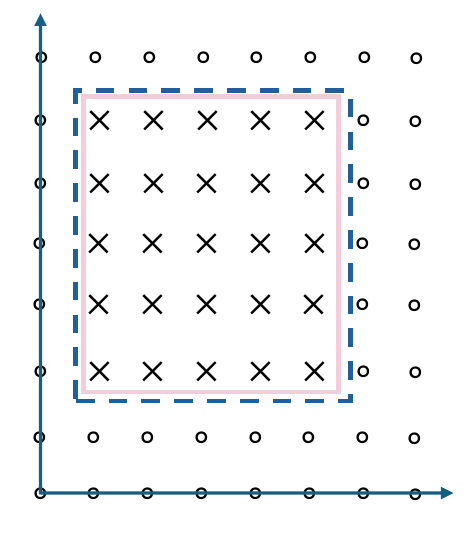}\\[-2pt]
{\small (a) RPCP}
\end{minipage}\hfill
\begin{minipage}[t]{0.24\textwidth}
\centering
\includegraphics[width=\linewidth]{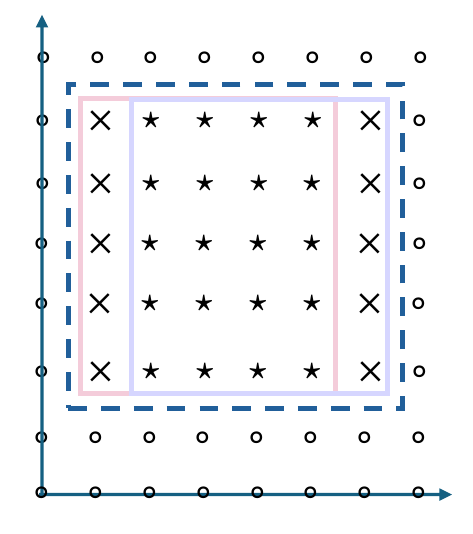}\\[-2pt]
{\small (b) Horizontal PSP}
\end{minipage}\hfill
\begin{minipage}[t]{0.24\textwidth}
\centering
\includegraphics[width=\linewidth]{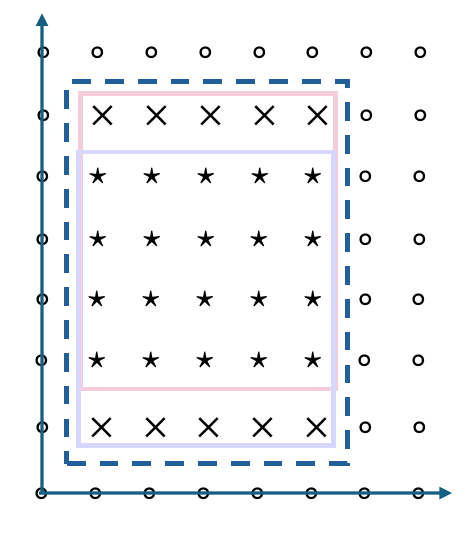}\\[-2pt]
{\small (c) Vertical PSP}
\end{minipage}\hfill
\begin{minipage}[t]{0.24\textwidth}
\centering
\includegraphics[width=\linewidth]{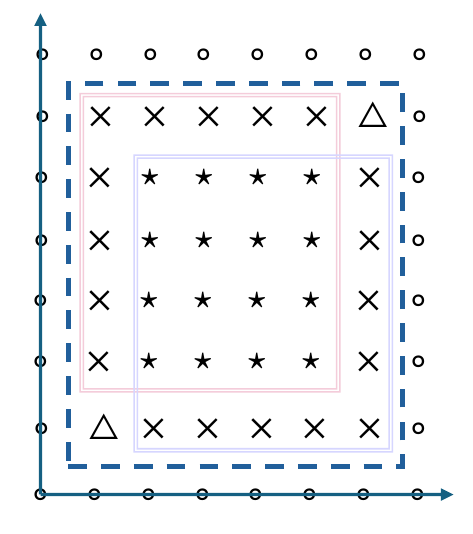}\\[-2pt]
{\small (d) Off-axis PSP}
\end{minipage}
\caption{Examples of the four point types under the architecture
$\mathcal{A}^{(L)}=(\mathrm{Conv},5,1,\mathrm{Valid},1,\mathrm{MaxPool},2,2)$.
Solid lines denote LRF-Cs, and dashed lines denote the Boxes.
Positions marked with $\times$ and $\star$ are effective positions in
$\operatorname{Mask}_x$, where $\star$ further denotes the overlapping
positions of the two LRF-Cs; a
position marked with $\circ$ lies outside the corresponding Box and satisfies
$v_x=0$; and $\triangle$ marks a position that
lies inside the corresponding Box but satisfies $v_x=0$.}
\label{fig:evidence}
\end{figure}

% !TeX spellcheck = en_US
% !TEX root = ../main.tex
\section{Our Cryptanalytic Extraction Framework for CNNs}
\label{sec:tasks}

\subsection{Adversarial Goals and Assumptions}
\label{subsec:attack_setting_goal}

A cryptanalytic extraction attack involves an adversary and an
oracle: the adversary submits generated inputs to the oracle and exploits the
feedback to recover a model functionally equivalent to the victim
model~\cite{DBLP:conf/crypto/CarliniJM20}. Existing cryptanalytic extraction
attacks on
CNNs~\cite{cryptoeprint:2026/139,hal-05573262,cryptoeprint:2026/241,cryptoeprint:2026/1164}
recover only the parameters $\Theta$, assuming that the architecture
$\mathcal{A}$ is known in advance. However, such
architectural information is typically unavailable to the adversary.

\subsubsection{Goals.}
We propose a complete cryptanalytic extraction
framework for CNNs with two components: architecture recovery and
parameter recovery.
The key is to achieve the architecture-recovery
attack and tailor it to be directly compatible with the parameter
extraction framework (Definition~\ref{def:parameters_extraction_framework}),
adapting to existing parameter recovery methods. Within the architecture-recovery attack, guided by the
architecture knowledge in
Definition~\ref{def:architecture},
we decompose the goal into the tasks summarized in Table~\ref{tab:architecture_recovery_tasks}.

\begin{table}[!htb]
\centering
\begin{threeparttable}
\caption{Task decomposition for architecture recovery.}
\label{tab:architecture_recovery_tasks}
\footnotesize
\setlength{\tabcolsep}{3pt}
\begin{tabular}{p{0.08\textwidth}p{0.69\textwidth}p{0.17\textwidth}}
    \toprule
    Task & Description & Section \\
    \midrule
    1 & Identify whether the target layer is convolutional or fully connected,
    yielding $\widehat{\tau}^{(L)}\in\{\mathrm{Conv},\mathrm{FC}\}$. &
    \S\ref{sec:layer} \\
    \addlinespace
    \multicolumn{3}{l}{\textit{Convolutional-layer architecture recovery}
    $(\widehat{\tau}^{(L)}=\mathrm{Conv})$} \\
    \cmidrule(lr){1-3}
    2 & Recover the convolution kernel size $\widehat{k}$. &
    \S\ref{sec:conv_parameter_identification} \\
    3 & Recover the convolution stride $\widehat{s}_k$. &
    \S\ref{sec:conv_parameter_identification} \\
    4 & Recover the padding mode
    $\widehat{\pi}\in\{\mathrm{Valid},\mathrm{Same}\}$. &
    \S\ref{sec:padding_output_channel_recovery} \\
    5 & Recover the output-channel number $\widehat{C}_{\mathrm{out}}$. &
    \S\ref{sec:padding_output_channel_recovery} \\
    6 & Recover the pooling type
    $\widehat{\rho}\in\{\mathrm{NoPool},\mathrm{AvgPool},\mathrm{MaxPool}\}$,
    window size $\widehat{p}$, and stride $\widehat{s}_p$. &
    \S\ref{subsec:pooling_structure_identification} \\
    \addlinespace
    \multicolumn{3}{l}{\textit{Fully-connected-layer architecture recovery}
    $(\widehat{\tau}^{(L)}=\mathrm{FC})$} \\
    \cmidrule(lr){1-3}
    7 & Recover the layer width $\widehat{d}_{\mathrm{fc}}^{(L)}$. & \cite{DBLP:conf/icml/RolnickK20,DBLP:conf/iclr/DanielyG23,cryptoeprint:2024/1580
,shen2026fcn} \\
    \bottomrule
\end{tabular}
\begin{tablenotes}[flushleft]
    \footnotesize
    \item \textit{Note.} After identifying the layer type in Task~1, the
    attack follows either the convolutional branch (Tasks~2--6) or the fully
    connected branch (Task~7).
\end{tablenotes}
\end{threeparttable}
\end{table}

\subsubsection{Assumptions.} We make the following assumptions about the oracle and the capabilities of
the adversary.
\begin{itemize}
    \item \textbf{Full-domain input.}
    The adversary can query arbitrary inputs $x\in\mathbb{R}^{d_0}$.
    \item \textbf{Precise computation.}
    The network is specified and evaluated using 64-bit floating-point arithmetic.
    \item \textbf{ReLU activation.}
    All activation functions are ReLUs.
\end{itemize}
In our setting, the adversary knows only the input dimension $d_0$ and the
output dimension $d_{n+1}$; the types and hyperparameters of the hidden
layers (Definition~\ref{def:architecture}) remain unknown and are to be
recovered.

\subsection{Complete Cryptanalytic Extraction Framework}
\label{subsec:attack_framework}

\subsubsection{New Framework.}

Figure~\ref{fig:attack_framework_comparison} compares the original CNN
parameter-recovery framework with our complete cryptanalytic extraction
framework constructed in the architecture-unknown scenario.

\begin{figure}[!htb]
    \centering
    \includegraphics[width=0.95\textwidth]{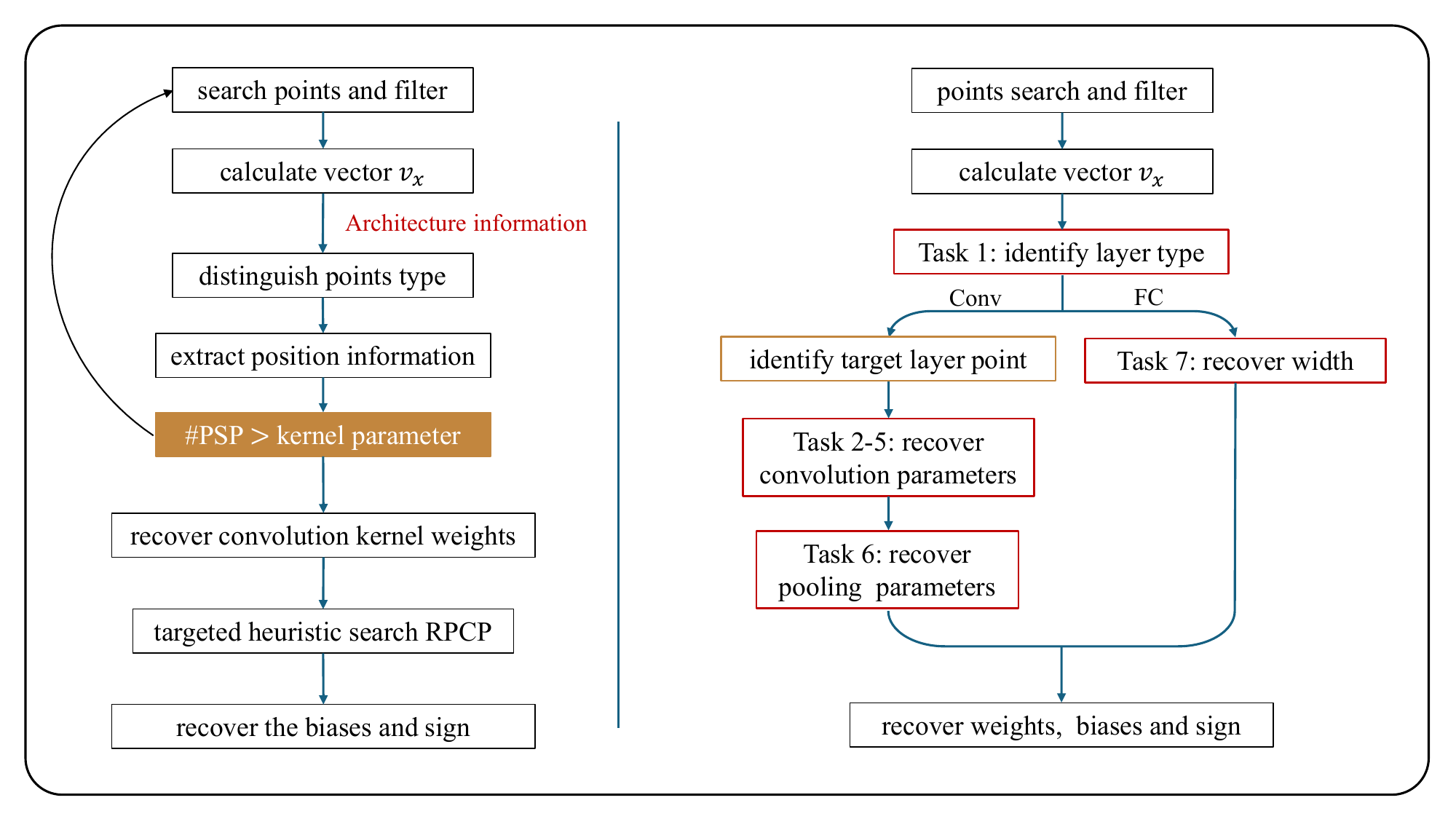}
    \caption{Left: Original CNN parameter-recovery framework under the architecture-known scenario. Right: Our cryptanalytic extraction framework
    under the architecture-unknown scenario.}
    \label{fig:attack_framework_comparison}
\end{figure}

In our complete framework, the attack operates within a single target
layer. For this layer, the first two steps are shared with its parameter
recovery, and their procedure is decided by the feedback mode of the model.
A CNN has two types of layers, convolutional layers and fully connected
layers. We first identify the layer type.
If the current layer is fully connected, its width is recovered by existing
methods.
If it is convolutional, we recover its architecture hyperparameters one by
one in Tasks~2 to 6.
After the architecture recovery is completed, we can connect to the
existing parameter-recovery methods for CNNs.

\subsubsection{Core Idea of the Architecture Recovery.}
Our attack exploits the geometric and numerical relationships among the
convolution kernel, the convolution matrix and the recovered vectors.
A recovered vector is a row of the convolution matrix, or the difference of two
rows (Eq.~\eqref{eq:recovered_signature_vector}), and every row of the
convolution matrix has support only on its corresponding LRF-C.

Convolutional and fully connected layers differ in the types of points they
produce. Therefore, the existence of RPCPs and PSPs is a sufficient condition
for identifying the target layer as convolutional.

Consider the recovered vector of an RPCP. Its effective position set
$S_x=\mathcal{W}_i^{(L)}$ is a single LRF-C of size
$k^{(L)}\times k^{(L)}$, so the shape of its Box reveals the kernel size
$k^{(L)}$. It also follows that the offsets between the Box starting points of
any two such points are integer multiples of the stride $s_k^{(L)}$ within
$\Omega^{(L)}$, and sufficiently dense sampling recovers $s_k^{(L)}$.

Same padding adds zeros around the input and enlarges the effective input
space, so the LRF-Cs at the boundaries extend into the padding area, the
effective position sets of the boundary recovered vectors shrink, and their
Boxes are truncated. Valid padding leaves no such area and no Box is
truncated, and this difference in the boundary Boxes distinguishes the two
padding modes.

A PSP originates from a switch of the max-pooling selection, and it therefore
indicates that the target layer uses max pooling. The max-pooling parameters
are recovered following the same idea as the convolution-kernel parameters.

Average pooling spreads every value of an LRF-P over the $p\times p$ window,
so each coefficient of the recovered vector is spread evenly over that window,
and the extent of the spread gives $p$. When layer $L$ contains no pooling, the recovered vector retains
the sparse convolution structure, so no such repeated-block pattern appears.

\subsubsection{Complexity of the Architecture Recovery.}
The number of points required by our architecture recovery is determined by
the identification tasks, not by the parameter count of the
target layer. In the existing parameter-recovery framework for CNNs, the
recovered vectors are stacked into a linear system that is solved for the
kernel of each output channel. For every output channel, the number
of PSPs must exceed the number of parameters of that channel's kernel $(k^{(L)})^{2}$. 
In contrast, the architecture recovery exploits only the
qualitative signatures of the recovered vectors, and it suffices to find
the minimal cluster of PSPs or RPCPs whose numerical ratios concentrate on a single candidate.
Therefore, a few RPCPs and PSPs are enough for all identification tasks, and the
additional query cost is a constant independent of
$(k^{(L)})^{2}$.
% !TeX spellcheck = en_US
% !TEX root = ../main.tex

\section{Identifying the Target Layer Type}
\label{sec:layer}

In this section, our task is to identify whether the target layer is
convolutional or fully connected, yielding
$\widehat{\tau}^{(L)}\in\{\mathrm{Conv},\mathrm{FC}\}$. Following the
layer-by-layer paradigm of the cryptanalytic extraction framework, let
$\mathcal{P}^{(L)}$ denote the set of candidate points retained for layer $L$ after
filtering out the points attributed to the recovered shallower layers, and
let $\mathcal{P}_{\mathrm{RPCP}}^{(L)}$, $\mathcal{P}_{\mathrm{PSP}}^{(L)}$, and
$\mathcal{P}_{\mathrm{FCP}}^{(L)}$ denote the RPCP, PSP, and FCP subsets of
the points that actually arise from the target layer $L$.
According to the origins of the points in Definition~\ref{def:critical_points}, we propose
Lemma~\ref{lem:layer_point_composition} to characterize the relationship
between the layer type and the point types.

\begin{lemma}[Layer Type Characterized by Point Existence]
\label{lem:layer_point_composition}
Under the assumed CNN ordering in Definition~\ref{def:cnn}, the existence
of an RPCP or PSP in the target layer or deeper layers indicates
$\widehat{\tau}^{(L)}=\mathrm{Conv}$:
\begin{equation*}
    \label{eq:layer_type_iff_conv}
    \mathcal{P}^{(L)}_{\mathrm{RPCP}}
    \cup
    \mathcal{P}^{(L)}_{\mathrm{PSP}}
    \cup
    \bigcup_{L'>L}
    \left(
    \mathcal{P}^{(L')}_{\mathrm{RPCP}}
    \cup
    \mathcal{P}^{(L')}_{\mathrm{PSP}}
    \right)
    \neq\emptyset
    \;\Longrightarrow\;
    \widehat{\tau}^{(L)}=\mathrm{Conv}.
\end{equation*}
\end{lemma}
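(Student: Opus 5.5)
We argue by contraposition, using only the layer ordering fixed in Definition~\ref{def:cnn} and the point types in Definition~\ref{def:critical_points}. Suppose that $\tau^{(L)}=\mathrm{FC}$; we show that then no RPCP or PSP can arise from layer $L$ or from any deeper layer $L'>L$.

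\textbf{Step 1 (point types are tied to layer types).} By Definition~\ref{def:critical_points}, RPCPs and PSPs are defined only through the convolution responses $A_i^{(L)}x^{(L)}+\bigl(B^{(L)}\bigr)_i$ of a convolutional layer and through the tie condition inside an LRF-P. A fully connected layer has no convolution rows and no pooling windows. Its ReLU inputs can therefore only vanish at an FCP, by the third bullet of Definition~\ref{def:critical_points}. Hence, for every layer $\ell$ with $\tau^{(\ell)}=\mathrm{FC}$, we have $\mathcal{P}_{\mathrm{RPCP}}^{(\ell)}=\mathcal{P}_{\mathrm{PSP}}^{(\ell)}=\emptyset$.

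\textbf{Step 2 (ordering).} By Definition~\ref{def:cnn}, $\mathcal{F}_{\mathcal{A},\Theta}$ composes all $m$ convolutional layers first and then the $n$ fully connected layers, with no interleaving. So if layer $L$ is fully connected, i.e.\ $L>m$ in the global indexing, every deeper layer $L'>L$ is also fully connected. Applying Step~1 to $L$ and to every $L'>L$ shows that the union in the statement is empty. Taking the contrapositive, nonemptiness of the union forces $\tau^{(L)}=\mathrm{Conv}$, and the identification rule sets $\widehat{\tau}^{(L)}=\mathrm{Conv}$.

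\textbf{Main obstacle.} The argument is essentially definitional. The one delicate point is the bookkeeping between the two indexings in Definition~\ref{def:cnn}: the convolutional layers are indexed $1,\dots,m$ and the fully connected ones $1,\dots,n$. The lemma's $L$ and $L'$ must be read as positions in the global layer order, and I would state this convention explicitly before Step~2. A second subtlety is that $\mathcal{P}^{(L)}_{\mathrm{RPCP}}$ and $\mathcal{P}^{(L)}_{\mathrm{PSP}}$ are defined as the points that \emph{actually} arise from layer $L$, so misclassification of points by the attacker does not enter. Making this explicit keeps the statement a purely structural fact and separates it from the later, practical task of recognising point types from $S_x$ via Proposition~\ref{prop:recovery_vector_sparsity_consistency}.
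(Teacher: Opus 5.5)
Your proof is correct and follows essentially the same route as the paper. The paper also assumes layer $L$ is fully connected, uses the convolution-before-FC ordering of Definition~\ref{def:cnn} to conclude that $L$ and every deeper layer contribute only FCPs, and takes the contrapositive; your remark that $L$ must be read in the global layer indexing is a sensible clarification the paper leaves implicit.
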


\begin{proof}
All convolutional layers precede the fully connected layers. If $L$ is fully
connected, both $L$ and every deeper hidden layer are fully connected. Hence
only fully connected layers can occur:
    $\mathcal{P}^{(L)}
    =
    \mathcal{P}^{(L)}_{\mathrm{FCP}}
    \cup
    \bigcup_{L'>L}
    \mathcal{P}^{(L')}_{\mathrm{FCP}}$.
In conclusion, an observed RPCP or PSP in the target layer or deeper layers
excludes the fully connected case, yielding $\widehat{\tau}^{(L)}=\mathrm{Conv}$.
\end{proof}

Lemma~\ref{lem:layer_point_composition} transforms Task~1 into identifying the point type.
In the following, we study how to identify RPCPs, PSPs, and FCPs arising
from the target layer or deeper layers.

\subsection{Identifying Point Type}
\label{subsec:fcp_identification}

We separately discuss the spatial properties of the recovered vectors
obtained in the target-layer input for the six kinds of points: RPCPs,
PSPs, and FCPs, each arising from the target layer or from a deeper
layer.

\subsubsection{Identifying RPCPs.}
\label{subsec:rpcp_psp_identification}

The spatial characteristics of a target-layer RPCP's recovered vector
are given by Proposition~\ref{prop:recovery_vector_sparsity_consistency}.
We formalize them as Corollary~\ref{prop:target_layer_rpcp}.

\begin{myCorollary}[Box, Shape, and Mask of Target-Layer RPCPs]
\label{prop:target_layer_rpcp}
For a target-layer RPCP $x$, the
following hold:
$
    \operatorname{Box}_x=[r_x,r_x+k^{(L)}-1]\times[c_x,c_x+k^{(L)}-1],
    \operatorname{Shape}_x=(k^{(L)},k^{(L)}),
    \operatorname{Mask}_x=\mathbf{1}_{k^{(L)}\times k^{(L)}}.
$
\end{myCorollary}

For an RPCP $x$ arising from a deeper convolutional layer $L'>L$, the
LRF-C of $x$ at layer $L'$ projects onto the target-layer input
space $\Lambda^{(L)}$, so the spatial projection $\psi^{(L)}(S_x)$ of its
support set becomes a union of LRF-Cs of layer $L$.

\begin{proposition}[Deeper-Layer RPCP Supports Are Unions of LRF-Cs]
\label{prop:deeper_layer_rpcp}
For an RPCP $x$ arising from a deeper convolutional layer $L'>L$,
    $\psi^{(L)}(S_x)
    =
    \bigcup_{m\in\mathcal{M}_x}
    \bigl(
    [r_m,r_m+k^{(L)}-1]\times[c_m,c_m+k^{(L)}-1]
    \bigr)$,
where $\mathcal{M}_x$ indexes the LRF-Cs of layer $L$ whose responses
contribute to $x$. Consequently,
\begin{equation*}
\begin{gathered}
    \operatorname{Box}_x
    =
    [\min_{m} r_m,\ \max_{m} r_m+k^{(L)}-1]
    \times
    [\min_{m} c_m,\ \max_{m} c_m+k^{(L)}-1],\\
    \operatorname{Shape}_x
    =
    (\max_{m} r_m-\min_{m} r_m+k^{(L)},\ \max_{m} c_m-\min_{m} c_m+k^{(L)}),
\end{gathered}
\end{equation*}
and $\operatorname{Mask}_x$ indicates the positions covered by the union
within $\operatorname{Box}_x$.
\end{proposition}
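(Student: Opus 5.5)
The plan is to compute the recovered vector of a deeper-layer RPCP by the chain rule through the already-fixed local linear region, and then read off its support in the input space $\Lambda^{(L)}$ of layer $L$. Let $x$ be an RPCP at layer $L'>L$, associated with neuron $i'$ of output channel $m'$ in layer $L'$. In a sufficiently small neighbourhood of $x$, every layer strictly between $L$ and $L'$ acts affinely, with its ReLU and pooling selections frozen. By Lemma~\ref{lem:recovery_vector_cnn}, applied at layer $L'$ and then pulled back to the input of layer $L$, $v_x$ is $g_x$ times the row $A^{(L')}_{m',i'}$ composed with the frozen Jacobian of layers $L,\dots,L'-1$. The key step is therefore to expand this product as
\[
v_x = g_x\sum_{m\in\mathcal{M}_x}\alpha_m A^{(L)}_{c(m),m},
\]
where $\mathcal{M}_x$ collects exactly those layer-$L$ LRF-Cs (over the relevant output channels $c(m)$) that satisfy three conditions: they are active under ReLU, they are selected by max pooling or averaged by average pooling, and they lie in the receptive field of neuron $i'$ through the intermediate layers. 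Each $\alpha_m$ is a nonzero product of kernel weights and pooling factors, summed over paths.

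Next, I will take supports. Each row $A^{(L)}_{c,m}$ is supported on $\mathcal{W}_m^{(L)}$, and by Corollary~\ref{prop:target_layer_rpcp} this set projects to the full square $[r_m,r_m+k^{(L)}-1]\times[c_m,c_m+k^{(L)}-1]$. The inclusion $\psi^{(L)}(S_x)\subseteq\bigcup_{m\in\mathcal{M}_x}(\cdots)$ is then immediate. For the reverse inclusion I argue as in the PSP case of Proposition~\ref{prop:recovery_vector_sparsity_consistency}: the coefficient at any covered position is a nonzero polynomial in the kernel entries and the $\alpha_m$. Such a polynomial vanishes only on a measure-zero set, so for generic parameters no cancellation occurs. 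Proposition~\ref{prop:input_channel_spatial_consistency} lets me restrict attention to one input channel. Finally, the formulas for $\operatorname{Box}_x$ and $\operatorname{Shape}_x$ follow directly from Definition~\ref{def:spatial_box_shape}. The minimal row coordinate of a union of squares is $\min_m r_m$ and the maximal one is $\max_m r_m+k^{(L)}-1$, and the same holds for columns. The statement about $\operatorname{Mask}_x$ is Definition~\ref{def:support_mask} applied to this union.

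The main obstacle is the genericity (no-cancellation) step. There I must check that every covered position receives at least one nonzero monomial and that distinct paths cannot cancel identically as polynomials. Cancellation could happen, for instance, if two LRF-Cs share a position with contributions that are negatives of each other. To handle this, I would first exhibit a single parameter choice, for example all kernel weights positive and average or max pooling positive, under which every term is positive. That shows the polynomial is not identically zero, and the statement then holds for almost all parameters. A secondary point is to define $\mathcal{M}_x$ carefully so that inactive ReLUs and unselected max-pool entries are excluded; this is what makes the equality exact rather than merely an inclusion.
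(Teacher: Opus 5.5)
Your argument is correct, but it is organized differently from the paper's proof.

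\textbf{The paper's route.} It proceeds by recursion on $L'-L$. In the base case $L'=L+1$ it notes that layer-$(L+1)$ responses are fed by layer-$L$ responses through the pooling windows: all of them under NoPool/AvgPool, only the selected maxima under MaxPool. Hence the support is a union of layer-$L$ LRF-Cs. The recursive step then projects LRF-Cs down one layer at a time.

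\textbf{Your route.} You unroll the entire frozen Jacobian at once, $v_x=g_x\sum_m\alpha_m A^{(L)}_{c(m),m}$. This gives the inclusion $\psi^{(L)}(S_x)\subseteq\bigcup_{m\in\mathcal{M}_x}(\cdots)$ without induction. It also pins down $\mathcal{M}_x$ as the layer-$L$ units reached by at least one active path.

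\textbf{What each buys.} Your real gain is the reverse inclusion. The paper simply asserts that $S_x$ \emph{is} the union and does not discuss cancellation between overlapping LRF-Cs, as it does for the PSP overlap in Proposition~\ref{prop:recovery_vector_sparsity_consistency}. You observe that, for a fixed activation/selection pattern, each covered coordinate is a polynomial in the kernel weights with nonnegative coefficients and at least one positive monomial. This makes the equality a genuine statement for almost all parameters, since the exceptional set is a finite union of zero sets of nonzero polynomials. In return, the paper's layered presentation gives the qualitative picture used later: AvgPool fills the enclosing rectangle, while MaxPool may leave holes in $\operatorname{Mask}_x$.

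\textbf{Two minor repairs.} First, Proposition~\ref{prop:input_channel_spatial_consistency} is stated only for target-layer points. Either adopt the paper's standing single-channel convention or rerun your genericity argument on each channel slice. Second, index $\mathcal{M}_x$ by (output channel, spatial position) pairs rather than through a function $c(m)$, since several output channels at one spatial position may contribute. This is harmless, because the square depends only on the spatial position.
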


\begin{proof}
We prove by recursion on the layer distance $L'-L$.
\begin{enumerate}
    \item \textbf{Base case ($L'=L+1$).} Each convolution response of
    layer $L+1$ is computed from the $p^{(L)}\times p^{(L)}$ convolution responses
    of layer $L$ within one LRF-P of layer $L$ (pooling window $p^{(L)}$,
    stride $s_p^{(L)}$). 
    Under AvgPool, every response of layer $L$ within the LRF-P is propagated,
    and the overlapping LRF-Cs of the propagated responses fill the
    enclosing rectangle.
    Under MaxPool, only the selected maximum
    responses are propagated, and the uncovered positions may remain. In
    both cases, $S_x$ is a union of LRF-Cs of layer $L$. The NoPool case is analogous to the AvgPool case
    but simpler.
    \item \textbf{Recursive step ($L'>L+1$).} By the recursive hypothesis,
    $S_x$ in the input coordinates of layer $L'-1$ is a union
    of LRF-Cs of layer $L'-1$. Each LRF-C of layer $L'-1$ in turn projects
    onto layer $L'-2$ as a union of LRF-Cs of layer $L'-2$ by the
    one-layer argument above. Repeating this projection until the
    target-layer input is reached, $S_x$ is a union of LRF-Cs of layer
    $L$.
\end{enumerate}

In both cases, $\operatorname{Box}_x$ is the smallest rectangle enclosing
these LRF-Cs and $\operatorname{Mask}_x$ indicates the covered positions,
which gives the stated structure.
\end{proof}

\subsubsection{Identifying PSPs.}

The spatial characteristics of a PSP's recovered vector
are given by Proposition~\ref{prop:recovery_vector_sparsity_consistency}.
We formalize them as
Corollary~\ref{prop:target_layer_psp}.

\begin{myCorollary}[Box, Shape, and Mask of Target-Layer PSPs]
\label{prop:target_layer_psp}
For a target-layer PSP $x$ with
$\psi^{(L)}\bigl(\min\mathcal{W}_{i}^{(L)}\bigr)=(r_1,c_1)$,
$\psi^{(L)}\bigl(\min\mathcal{W}_{j}^{(L)}\bigr)=(r_2,c_2)$, the following hold:
\begin{equation*}
\resizebox{0.92\linewidth}{!}{$\displaystyle
\begin{gathered}
    \operatorname{Box}_x
    =
    [\min\{r_1,r_2\},\max\{r_1,r_2\}+k^{(L)}-1]
    \times
    [\min\{c_1,c_2\},\max\{c_1,c_2\}+k^{(L)}-1],\\
    \operatorname{Shape}_x
    =
    (|r_2-r_1|+k^{(L)},\ |c_2-c_1|+k^{(L)}),
\end{gathered}$}
\end{equation*}
and $\operatorname{Mask}_x$ indicates the positions covered by the union
of the two LRF-Cs.
\end{myCorollary}

For a deeper-layer PSP, the recovered vector in layer $L'$ is the difference of two
convolution matrix rows, so the LRF-C union argument of
Proposition~\ref{prop:deeper_layer_rpcp} applies verbatim.

\begin{proposition}[Deeper-Layer PSP Supports Are Unions of LRF-Cs]
\label{prop:deeper_layer_psp}
For a PSP $x$ arising from a convolutional layer $L'>L$, the following
equation holds:
\begin{equation*}
    \psi^{(L)}(S_x)
    =
    \bigcup_{i\in\mathcal{M}_x}
    \bigl(
    [r_i,r_i+k^{(L)}-1]\times[c_i,c_i+k^{(L)}-1]
    \bigr),
\end{equation*}
where $\mathcal{M}_x$ indexes the LRF-Cs of layer $L$ associated with the
two rows that form the recovered vector of $x$. Consequently, a
deeper-layer PSP has the same union-of-LRF-Cs structure as a deeper-layer
RPCP, and $\operatorname{Mask}_x$ indicates the positions covered within
$\operatorname{Box}_x$.
\end{proposition}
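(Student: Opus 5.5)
The plan is to reduce the claim to Proposition~\ref{prop:deeper_layer_rpcp}. By Lemma~\ref{lem:recovery_vector_cnn} applied at layer $L'$, the recovered vector of a deeper-layer PSP in the input coordinates of layer $L'$ is
\[
g_x\bigl(A_{m,i}^{(L')}-A_{m,j}^{(L')}\bigr)[\mathcal{K}_i^{(L')}\cup\mathcal{K}_j^{(L')}].
\]
By Proposition~\ref{prop:recovery_vector_sparsity_consistency}, applied at layer $L'$ and assuming a generic kernel so that no cancellation occurs on the overlap, its support at layer $L'$ is $\mathcal{W}_i^{(L')}\cup\mathcal{W}_j^{(L')}$. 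This is a union of two LRF-Cs of layer $L'$.

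The second step is to pull this support back to the target-layer input $\Lambda^{(L)}$. In the linear neighborhood of $x$, the prefix from the input of layer $L$ to the input of layer $L'$ is a fixed linear map $J_x$, namely the composition of convolution, ReLU and pooling with all activation and max-selection patterns frozen. The vector observed at the input of layer $L$ is therefore $J_x^{\top}$ applied to the layer-$L'$ vector. Because $J_x^{\top}$ is linear, the pullback of a sum of two vectors is supported in the union of their individual pullbacks. I would then apply the one-layer projection argument from the proof of Proposition~\ref{prop:deeper_layer_rpcp} to each of the two LRF-Cs $\mathcal{W}_i^{(L')}$ and $\mathcal{W}_j^{(L')}$ separately.

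That argument, recursing down from $L'$ to $L$ exactly as in the base case and recursive step there, gives a union of LRF-Cs of layer $L$. The index set $\mathcal{M}_x$ is the union of the index sets obtained from the two rows. The Box and Mask statements then follow verbatim from Definitions~\ref{def:spatial_box_shape} and~\ref{def:support_mask}: the Box is the smallest rectangle containing the union, and the Mask marks the covered positions.

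The main obstacle is the inclusion ``$\subseteq$ union'' versus equality. The difference of the two rows could cancel entries in the pullback. This can happen where the LRF-Cs of layer $L$ reached from row $i$ and from row $j$ overlap: the same layer-$L$ response receives contributions from both rows, with opposite signs. To secure equality, I would invoke the same genericity convention as in Proposition~\ref{prop:recovery_vector_sparsity_consistency}. For generic kernels and biases at layers $L,\dots,L'$, each coefficient on a contributing layer-$L$ response is a nonzero polynomial in the parameters, so it vanishes only on a measure-zero set. With the coefficients nonzero, each contributing LRF-C of layer $L$ carries the kernel $K^{(L)}$ scaled by a nonzero factor. A further genericity argument then excludes cancellation between overlapping LRF-Cs of layer $L$, which gives the stated equality.
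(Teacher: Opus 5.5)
Your proposal is correct and takes the paper's route: the paper's proof is a one-line reduction to Proposition~\ref{prop:deeper_layer_rpcp}, taking $S_x$ to be the union of the layer-$L$ LRF-Cs reached from the two rows that form the recovered vector. That is exactly your step of splitting $g_x\bigl(A_{m,i}^{(L')}-A_{m,j}^{(L')}\bigr)$ into two pieces and pulling each back separately. Your genericity argument for the reverse inclusion goes beyond the paper, which does not address cancellation here at all; it is cleanest if you define $\mathcal{M}_x$ as the layer-$L$ responses with nonzero pulled-back coefficient, because each entry of $v_x$ is then a linear form in distinct entries of $K^{(L)}$, and such a form generically does not vanish.
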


\begin{proof}
The idea is the same as in Proposition~\ref{prop:deeper_layer_rpcp}, except that $S_x$
is the union over the LRF-Cs of the two rows forming the recovered vector
of $x$.
\end{proof}

\subsubsection{Identifying FCPs.}

The spatial characteristics of an FCP's recovered
vector are given by Proposition~\ref{prop:recovery_vector_sparsity_consistency}.
We formalize them as
Corollary~\ref{prop:target_layer_fcp}.

\begin{myCorollary}[Box, Shape, and Mask of FCPs]
\label{prop:target_layer_fcp}
For a target-layer or deeper-layer FCP $x$, the following hold:
\begin{equation*}
    S_x=\mathcal{I}^{(L)},
    \;\;
    \operatorname{Box}_x=\Lambda^{(L)},
    \;\;
    \operatorname{Shape}_x=(H^{(L)},W^{(L)}),
    \;\;
    \operatorname{Mask}_x=\mathbf{1}_{H^{(L)}\times W^{(L)}}.
\end{equation*}
\end{myCorollary}

\subsection{Method for Layer-Type Identification}
\label{subsec:layer_type_identification_algorithm}
We define the full-box point set as
    $\mathcal{P}_{\mathrm{full}}
    =
    \left\{
    x\in\mathcal{P}^{(L)}:
    \operatorname{Shape}_x=(H^{(L)},W^{(L)})
    \right\}$.
The remaining points form the set
    $\mathcal{P}_{\mathrm{conv}}
    =\mathcal{P}^{(L)}\setminus\mathcal{P}_{\mathrm{full}}$.
The results above determine the Box of every point in $\mathcal{P}^{(L)}$.
Combined with Lemma~\ref{lem:layer_point_composition}, they yield the following
decision rule.

Every point $x\in\mathcal{P}^{(L)}$ with
$\operatorname{Box}_x\neq\Lambda^{(L)}$ is an RPCP or a PSP, because an FCP
satisfies $\operatorname{Box}_x=\Lambda^{(L)}$. Therefore, if
$\mathcal{P}_{\mathrm{conv}}\neq\emptyset$, then
$\widehat{\tau}^{(L)}=\mathrm{Conv}$.

\iffalse
Algorithm~\ref{alg:layer_type_identification} gives the complete procedure for
identifying the type of layer $L$.

\begin{algorithm}[H]
\caption{Layer-type identification for layer $L$}
\label{alg:layer_type_identification}
\begin{algorithmic}[1]
   \REQUIRE target-layer input space $\Omega^{(L)}$; searched point set $\mathcal{P}^{(L)}$
   \ENSURE point sets $\mathcal{P}_{\mathrm{full}}$ and $\mathcal{P}_{\mathrm{conv}}$; layer type $\widehat{\tau}^{(L)}$
   \STATE Initialize $\mathcal{P}_{\mathrm{full}}\leftarrow\emptyset$
   \FOR{each point $x\in\mathcal{P}^{(L)}$}
       \STATE Compute $v_x$ over $\Omega^{(L)}$
       \STATE Form $S_x$ and compute $\operatorname{Shape}_x$
       \IF{$\operatorname{Shape}_x=(H^{(L)},W^{(L)})$}
           \STATE Add $x$ to $\mathcal{P}_{\mathrm{full}}$
       \ELSE
           \STATE Add $x$ to $\mathcal{P}_{\mathrm{conv}}$
       \ENDIF
   \ENDFOR
   \IF{$\mathcal{P}_{\mathrm{conv}}\neq\emptyset$}
       \STATE Set $\widehat{\tau}^{(L)}\leftarrow\mathrm{Conv}$
   \ELSE
       \STATE Set $\widehat{\tau}^{(L)}\leftarrow\mathrm{FC}$
   \ENDIF
   \STATE \textbf{return} $\mathcal{P}_{\mathrm{full}}$, $\mathcal{P}_{\mathrm{conv}}$, $\widehat{\tau}^{(L)}$
\end{algorithmic}
\end{algorithm}
\fi
% !TeX spellcheck = en_US
% !TEX root = ../main.tex

\section{Recovering the Convolution Kernel Size and Stride}
\label{sec:conv_parameter_identification}

In this section, our tasks are to recover the convolution kernel size
and the convolution stride, yielding $\widehat{k}$ and
$\widehat{s}_k$. In a convolutional
network, the size of each LRF-C is $k^{(L)}\times k^{(L)}$. Under
the convolution-matrix representation, the set
$\{j\in\mathcal{I}^{(L)}:(A_i^{(L)})_j\ne0\}$ is exactly the LRF-C
corresponding to the $i$-th row. By
Proposition~\ref{prop:recovery_vector_sparsity_consistency}, $S_x$
coincides with this LRF-C index set under Valid padding, whereas it does
not under Same padding. We therefore define an interior set to exclude
the potential influence of Same padding:
\begin{equation*}
    \widetilde{\mathcal{P}}_{\mathrm{conv}}^{\circ}
    =
    \left\{
    x\in\mathcal{P}_{\mathrm{conv}}:
    \operatorname{Box}_x\subseteq
    \{1,\ldots,H^{(L)}-2\}\times\{1,\ldots,W^{(L)}-2\}
    \right\}.
\end{equation*}
We propose
Lemma~\ref{lem:conv_kernel_size_recovery} to characterize the
relationship between $k^{(L)}$ and $\operatorname{Shape}_x$.

\begin{lemma}[Kernel Size Characterized by Shape and Mask]
\label{lem:conv_kernel_size_recovery}
For an interior target-layer point $x \in \widetilde{\mathcal{P}}_{\mathrm{conv}}^{\circ}$,
the following equation holds:
\begin{equation*}
    \label{eq:pointwise_kernel_estimate}
    k^{(L)}
    =
    \begin{cases}
        h_x = w_x,
        & x\text{ is an RPCP},\\
        b_{\partial}(\operatorname{Mask}_x) =\min\{h_x,w_x\},
        & x\text{ is a horizontal or vertical PSP},\\
        b_{\partial}(\operatorname{Mask}_x),
        & x\text{ is an off-axis PSP},
    \end{cases}
\end{equation*}
where $b_{\partial}(\operatorname{Mask}_x)$ denotes the length of the
shortest contiguous nonzero segment on the boundary of
$\operatorname{Mask}_x$.
\end{lemma}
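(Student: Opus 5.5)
The proof is a case analysis on the point type. Each case uses the explicit support description from Proposition~\ref{prop:recovery_vector_sparsity_consistency} and its Box/Shape/Mask reformulations in Corollaries~\ref{prop:target_layer_rpcp} and~\ref{prop:target_layer_psp}.

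The interior condition $\operatorname{Box}_x\subseteq\{1,\ldots,H^{(L)}-2\}\times\{1,\ldots,W^{(L)}-2\}$ is used first. It guarantees that no LRF-C involved in $x$ reaches the padded region. Hence $(\phi^{(L)})^{-1}$ discards nothing, and every $\mathcal{W}_i^{(L)}$ appearing in $S_x$ is a full $k^{(L)}\times k^{(L)}$ square. FCPs are excluded because their Box is all of $\Lambda^{(L)}$. Deeper-layer points are assumed already filtered, since the lemma is stated for target-layer points.

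\textbf{RPCP case.} Here $S_x=\mathcal{W}_i^{(L)}$ is a single full square. Corollary~\ref{prop:target_layer_rpcp} then gives $\operatorname{Shape}_x=(k^{(L)},k^{(L)})$ directly, so $h_x=w_x=k^{(L)}$.

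\textbf{PSP cases.} Write the two squares as $Q_1=[r_1,r_1+k-1]\times[c_1,c_1+k-1]$ and $Q_2=[r_2,r_2+k-1]\times[c_2,c_2+k-1]$, with $k=k^{(L)}$. By Corollary~\ref{prop:target_layer_psp}, $\operatorname{Mask}_x$ is the indicator of $Q_1\cup Q_2$ inside the Box of shape $(|r_2-r_1|+k,\ |c_2-c_1|+k)$.

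\emph{Horizontal PSP} ($r_1=r_2$). The union is a full rectangle of height $k$ and width $|c_2-c_1|+k$, whether or not the squares overlap. If they are disjoint, the Box still spans the gap, and the zero columns must be treated carefully; this is addressed below. In the connected case, the top and bottom boundary rows consist of one nonzero run of length $w_x$, and the left and right boundary columns consist of one run of length $k$. Hence $b_\partial=k=\min\{h_x,w_x\}$, using $w_x\ge k$. The vertical case is symmetric.

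\emph{Off-axis PSP.} Assume without loss of generality $r_1<r_2$. The union is a staircase or two disjoint squares. Every boundary side of the Box meets exactly one of the squares, and along that side the square contributes a contiguous run. I would enumerate the four sides of the Box. The top row $u=\min r$ meets only $Q_1$ and contains a run of length exactly $k$, namely its top edge. The same holds for the bottom row and for the leftmost and rightmost columns, each meeting only one square in a run of length $k$. Since $r_1\ne r_2$ and $c_1\ne c_2$, no boundary side meets both squares. Every boundary run therefore has length exactly $k$, and $b_\partial(\operatorname{Mask}_x)=k$.

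\textbf{Main obstacle.} The main difficulty is making the notion of ``boundary'' and the genericity assumption precise. Proposition~\ref{prop:recovery_vector_sparsity_consistency} relies on no cancellation on the overlap for a generic kernel. For the horizontal and vertical cases I would need that a boundary run is not merged across squares in a way that makes it shorter than $k$. This holds because each side of the Box contains one full edge of some square. It must also hold that no run is shorter than $k$; every run equals a full square edge, or, in the axis-aligned case, a union of edges of length at least $k$.

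In the disjoint axis-aligned case, say $|c_2-c_1|>k$, the top row consists of two runs of length $k$. The minimum is still $k=\min\{h_x,w_x\}$, since $h_x=k<w_x$. This settles the $\min\{h_x,w_x\}$ identity in all horizontal and vertical subcases.
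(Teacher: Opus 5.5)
Your proposal is correct and takes the same route as the paper, which proves the lemma in one line by appeal to Corollaries~\ref{prop:target_layer_rpcp} and~\ref{prop:target_layer_psp}; your case analysis (the interior condition ruling out truncated LRF-Cs, then reading $\operatorname{Shape}_x$ for RPCPs and the per-side boundary runs for the horizontal, vertical including the disjoint sub-case, and off-axis PSPs) spells out exactly what that one line leaves implicit. You are also right to read the boundary of $\operatorname{Mask}_x$ side by side, as $\partial_{\mathrm{top}},\partial_{\mathrm{bottom}},\partial_{\mathrm{left}},\partial_{\mathrm{right}}$: if the perimeter were treated as one cycle, the off-axis corner runs would merge into segments of length $2k^{(L)}-1$ and the stated identity would fail.
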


\begin{proof}
It follows directly from Corollaries~\ref{prop:target_layer_rpcp} and~\ref{prop:target_layer_psp}.
\end{proof}

For an interior point $x$, $\min S_x$ is the position of $\bigl(K_0^{(L)}\bigr)_0$ in the
convolution matrix, so $\psi^{(L)}\bigl(\min S_x\bigr)$ is the start
position of an LRF-C. We propose
Lemma~\ref{lem:conv_window_start_recovery} to characterize the
relationship between $s_k^{(L)}$ and the starting positions.

\begin{lemma}[Stride Characterized by Effective Position Sets]
\label{lem:conv_window_start_recovery}
For any two interior target-layer points $x_1,x_2\in\widetilde{\mathcal{P}}_{\mathrm{conv}}^{\circ}$:
\begin{equation*}
    r_{x_1}-r_{x_2}\in s_k^{(L)}\mathbb{Z},
    %\qquad
    c_{x_1}-c_{x_2}\in s_k^{(L)}\mathbb{Z}.
\end{equation*}
\end{lemma}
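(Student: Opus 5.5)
The plan is to reduce the claim to the explicit index set $\mathcal{K}_i^{(L)}$ of Definition~\ref{def:convolution_matrix}. Here $(r_x,c_x)$ is read as $\psi^{(L)}(\min S_x)$, which is the top-left corner of $\operatorname{Box}_x$.

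\textbf{Step 1: each point's start is an LRF-C start.} For an interior point, $\operatorname{Box}_x$ avoids the boundary rows and columns. Hence no position of the relevant LRF-C(s) is cut off by padding, and the preimage under $\phi^{(L)}$ keeps the whole $k^{(L)}\times k^{(L)}$ block.
\begin{itemize}
\item For an RPCP, Proposition~\ref{prop:recovery_vector_sparsity_consistency} gives $S_x=\mathcal{W}_i^{(L)}$. By Corollary~\ref{prop:target_layer_rpcp}, $(r_x,c_x)$ is the start of the $i$-th LRF-C.
\item For a PSP, $S_x=\mathcal{W}_i^{(L)}\cup\mathcal{W}_j^{(L)}$, so $(r_x,c_x)=(\min\{r_1,r_2\},\min\{c_1,c_2\})$ by Corollary~\ref{prop:target_layer_psp}.
\end{itemize}

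\textbf{Step 2: LRF-C starts lie on a shifted lattice.} Set $j=m=0$ in the formula for $\mathcal{K}_i^{(L)}$ and pull back through $\phi^{(L)}$, which subtracts $t_{\mathrm{pad}}^{(L)}$ in each coordinate. The start of the $i$-th LRF-C in unpadded coordinates is then
\begin{equation*}
\bigl(\lfloor i/w_o^{(L)}\rfloor s_k^{(L)},\ (i\bmod w_o^{(L)})s_k^{(L)}\bigr).
\end{equation*}
Both coordinates are integer multiples of $s_k^{(L)}$, for either padding mode. In the Same case the offset $-t_{\mathrm{pad}}^{(L)}$ from padded to unpadded coordinates is common to all rows. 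Note that $t_{\mathrm{pad}}^{(L)}$ is a half-integer in general, but it cancels in differences. So every LRF-C start lies in $(\alpha,\alpha)+s_k^{(L)}\mathbb{Z}^2$ for one fixed offset $\alpha$.

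\textbf{Step 3: subtract.} In every case $(r_x,c_x)$ has the form $(r_a,c_b)$, where $r_a$ is the row start of some LRF-C and $c_b$ is the column start of some possibly different LRF-C. Each coordinate individually lies in $\alpha+s_k^{(L)}\mathbb{Z}$. Therefore $r_{x_1}-r_{x_2}\in s_k^{(L)}\mathbb{Z}$ and $c_{x_1}-c_{x_2}\in s_k^{(L)}\mathbb{Z}$. With more than one input channel, Proposition~\ref{prop:input_channel_spatial_consistency} shows that each channel slice carries the same spatial pattern, so nothing changes.

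\textbf{Main obstacle.} The delicate point is Step 1 for PSPs and for the padding pullback. I must justify that $\min S_x$ really sits at an LRF-C corner, i.e., that there is no cancellation at the extreme positions. For PSPs this uses the generic-kernel no-cancellation assumption already invoked in Proposition~\ref{prop:recovery_vector_sparsity_consistency}, together with the interior condition that keeps the corner entry $(K_0^{(L)})_0$ inside $\mathcal{I}^{(L)}$. I would also have to decide how to read the paper's $\mathcal{K}_i^{(L)}$ formula in the Same case. One option is to treat all starts as shifted by the common (possibly half-integer) offset. The other is to use the standard asymmetric padding, whose integer left pad is likewise common to all rows. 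In both readings the offset cancels in the difference.
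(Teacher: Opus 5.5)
Your proposal is correct and is essentially the paper's argument: the paper notes that $\psi^{(L)}(\min S_x)$ is an LRF-C start for an interior point and that LRF-C starts advance by $s_k^{(L)}$ in each direction, and your Steps~1--2 simply make this explicit via $\mathcal{K}_i^{(L)}$. One harmless slip is that $\psi^{(L)}(\min S_x)$ need not be the top-left corner of $\operatorname{Box}_x$ (for an off-axis PSP with $c_2<c_1$ it is $(r_1,c_1)$, not $(\min\{r_1,r_2\},\min\{c_1,c_2\})$), but your Step~3 only uses that each coordinate is the row or column start of some LRF-C, which holds under either reading.
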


\begin{proof}
Since convolution windows advance by $s_k^{(L)}$ in both directions,
the differences between the starting positions of any two windows, in both the row and the column direction, are integer multiples of $s_k^{(L)}$.
\end{proof}

Lemmas~\ref{lem:conv_kernel_size_recovery} and~\ref{lem:conv_window_start_recovery} transform Tasks~2 and~3 into
identifying the target-layer point.
In the following, we study how to identify the target-layer RPCPs and PSPs from $\widetilde{\mathcal{P}}_{\mathrm{conv}}^{\circ}$.

\subsection{Identifying the Target Points}
\label{subsec:target_point_identification}
We separately discuss the spatial and numerical properties of the recovered vectors
obtained in the target-layer input.

\subsubsection{Identifying Target-layer RPCP.}
The difference between the recovered vectors of RPCPs and PSPs is whether
$v_x$ equals one row of the convolution matrix or the difference of two
rows (Lemma~\ref{lem:recovery_vector_cnn}). 
Since the rows of the
convolution matrix are shifted copies of the
same kernel, a direct observation is that the support set $S_x$ of a
target-layer PSP is larger than that of a target-layer RPCP. 
Define the point set
    $\widetilde{\mathcal{P}}_{\operatorname{conv}}^{\circ}(k')
    =
    \{x\in\widetilde{\mathcal{P}}_{\mathrm{conv}}^{\circ}:
    \operatorname{Shape}_x=(k',k')\}$.
Since an LRF-C is a $k^{(L)}\times k^{(L)}$ square window, every point
$x\in\widetilde{\mathcal{P}}_{\operatorname{conv}}^{\circ}(k')$ is either a
target-layer RPCP, a target-layer off-axis PSP, or a deeper-layer point.

We remove distractions using the following filters:
\begin{itemize}
    \item 
    Retain candidates satisfying
    $\operatorname{Mask}_x=\mathbf{1}_{k'\times k'}$. This removes square
    off-axis PSPs, whose Masks are unions of two LRF-Cs.
    \item According to Proposition~\ref{prop:deeper_layer_rpcp}, for a
    deeper-layer point $x \in \widetilde{\mathcal{P}}_{\operatorname{conv}}^{\circ}(k')$,
    two deeper-layer points yield the same $v_x$ if and only if every
    intermediate layer exhibits the same ReLU activation
    arrangement~\cite{DBLP:conf/eurocrypt/LiuSELBP26}; this happens only
    when the pooling operation between $L$ and $L'$ is not max pooling.
    Therefore, if there are two or more RPCPs, we can retain the 
    correct target-layer RPCP set ${\mathcal{P}}_{\operatorname{RPCP}}^{\circ}$.
\end{itemize}

\subsubsection{Identifying Target-layer PSP.}
Define the remaining interior points as
    $\widetilde{\mathcal{P}}_{\mathrm{PSP}}^{\circ}
    =
    \widetilde{\mathcal{P}}_{\mathrm{conv}}^{\circ}
    \setminus
    \mathcal{P}_{\mathrm{RPCP}}^{\circ}$, 
which contains target-layer PSPs and deeper-layer points.
For $b\in\{\mathrm{top},\mathrm{bottom},\mathrm{left},\mathrm{right}\}$,
the four boundaries of $\operatorname{Box}_x=[u_{\min},u_{\max}]\times[v_{\min},v_{\max}]$
are defined as
\begin{equation*}
\begin{aligned}
\partial_{\mathrm{top}}\operatorname{Box}_x&=\{u_{\min}\}\times[v_{\min},v_{\max}],
&
\partial_{\mathrm{bottom}}\operatorname{Box}_x&=\{u_{\max}\}\times[v_{\min},v_{\max}],\\
\partial_{\mathrm{left}}\operatorname{Box}_x&=[u_{\min},u_{\max}]\times\{v_{\min}\},
&
\partial_{\mathrm{right}}\operatorname{Box}_x&=[u_{\min},u_{\max}]\times\{v_{\max}\}.
\end{aligned}
\end{equation*}
Denote the effective positions of $v_x$ on boundary $b$ by
\begin{equation*}
\label{eq:boundary_effective_positions}
\mathcal{B}_{x,b}
=
\bigl\{(\psi^{(L)})^{-1}(u,v)\in(\psi^{(L)})^{-1}\bigl(\partial_b\operatorname{Box}_x\bigr):\
(\operatorname{Mask}_x)_{u-u_{\min},\,v-v_{\min}}=1\bigr\}.
\end{equation*}
In the following, $K_{:,j}^{(L)}$ (resp. $K_{i,:}^{(L)}$) denotes the
$j$-th column (resp. $i$-th row) of the kernel $K^{(L)}$.
We extend Proposition~\ref{prop:recovery_vector_numerical_consistency} to
Proposition~\ref{prop:boundary_kernel_equality} to filter deeper-layer points.

\begin{proposition}[Boundary Equality with the Kernel]
\label{prop:boundary_kernel_equality}
For any target-layer PSP $x \in \widetilde{\mathcal{P}}_{\mathrm{PSP}}^{\circ}$,
the following hold:
\begin{itemize}
    \item horizontal PSP: $v_x\bigl[\mathcal{B}_{x,\mathrm{left}}\bigr]=\pm g_x\,K^{(L)}_{:,0}$
    and $v_x\bigl[\mathcal{B}_{x,\mathrm{right}}\bigr]=\pm g_x\,K^{(L)}_{:,k^{(L)}-1}$;
    \item vertical PSP: $v_x\bigl[\mathcal{B}_{x,\mathrm{top}}\bigr]=\pm g_x\,K^{(L)}_{0,:}$
    and $v_x\bigl[\mathcal{B}_{x,\mathrm{bottom}}\bigr]=\pm g_x\,K^{(L)}_{k^{(L)}-1,:}$;
    \item off-axis PSP: $v_x\bigl[\mathcal{B}_{x,\mathrm{top}}\bigr]=\pm g_x\,K^{(L)}_{0,:}$,
    $v_x\bigl[\mathcal{B}_{x,\mathrm{left}}\bigr]=\pm g_x\,K^{(L)}_{:,0}$,
    $v_x\bigl[\mathcal{B}_{x,\mathrm{bottom}}\bigr]=\pm g_x\,K^{(L)}_{k^{(L)}-1,:}$,
    and $v_x\bigl[\mathcal{B}_{x,\mathrm{right}}\bigr]=\pm g_x\,K^{(L)}_{:,k^{(L)}-1}$.
\end{itemize}
\end{proposition}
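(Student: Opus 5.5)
By Proposition~\ref{prop:recovery_vector_numerical_consistency}, on $S_x$ we have $v_x = g_x\bigl(A_{m,i}^{(L)}-A_{m,j}^{(L)}\bigr)$, read at the padded indices $\phi^{(L)}(t)$. Since $x$ is interior, $\phi^{(L)}$ is a pure shift on the relevant positions and no kernel entry is truncated. So the plan is to work in spatial coordinates. Write the two LRF-Cs as the squares $R_1=[r_1,r_1+k-1]\times[c_1,c_1+k-1]$ and $R_2=[r_2,r_2+k-1]\times[c_2,c_2+k-1]$, with $k=k^{(L)}$. The entry of $A_{m,i}^{(L)}$ at $(u,v)\in R_1$ is $K_{u-r_1,v-c_1}$, and it is $0$ outside $R_1$; the same holds for $A_{m,j}^{(L)}$ with $R_2$. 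Hence
$$(v_x)_{(u,v)}=g_x\bigl(K_{u-r_1,v-c_1}[(u,v)\in R_1]-K_{u-r_2,v-c_2}[(u,v)\in R_2]\bigr).$$
By Corollary~\ref{prop:target_layer_psp}, $\operatorname{Box}_x$ is the bounding box of $R_1\cup R_2$. Each boundary of the Box is then a line on which I only need to know which of the two squares it meets.

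\textbf{Horizontal case} ($r_1=r_2$, WLOG $c_1<c_2$, since a PSP has $c_1\neq c_2$). The left boundary is the column $v=c_1$ with $u\in[r_1,r_1+k-1]$. Because $c_2>c_1$, this column lies outside $R_2$, so there $v_x=g_xK_{u-r_1,0}$. Thus the restriction to this column is $g_xK^{(L)}_{:,0}$. Symmetrically, the right boundary $v=c_2+k-1$ lies outside $R_1$ and gives $-g_xK^{(L)}_{:,k-1}$. The sign $\pm$ absorbs the orientation, i.e.\ which row is labelled $i$ and whether $c_1>c_2$. For $\mathcal B_{x,b}$ to be exactly these $k$ positions, I use the genericity assumed in Proposition~\ref{prop:recovery_vector_sparsity_consistency}, namely that kernel entries are nonzero, so that Mask is $1$ along the whole column. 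The \textbf{vertical case} is the same argument with rows and columns exchanged.

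\textbf{Off-axis case} (the main obstacle). Now the boundary lines of the Box are longer than $k$. For instance, the top boundary $u=\min\{r_1,r_2\}$, say $r_1<r_2$, spans $[\min c,\max c+k-1]$, but it meets only $R_1$, because $r_2>r_1$. On that line the Mask is $1$ exactly on $[c_1,c_1+k-1]$ and $0$ elsewhere. So $\mathcal B_{x,\mathrm{top}}$, which by definition keeps only effective positions, is exactly the top row of $R_1$, and there $v_x=g_xK^{(L)}_{0,:}$. Likewise, the bottom boundary meets only $R_2$ and gives $-g_xK^{(L)}_{k-1,:}$. The left boundary meets only the square with the smaller column start, and the right boundary only the one with the larger column start. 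The key point to verify carefully is strictness: $r_1\ne r_2$ and $c_1\ne c_2$ guarantee that each extreme line is disjoint from the other square, so no cancellation or overlap occurs. Assembling the four cases, with signs tracked by which square is extremal, gives the stated equalities up to $\pm$.
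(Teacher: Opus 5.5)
Your proof is correct and follows the same route as the paper, whose proof only cites ``the planar relationship between the two LRF-Cs of each PSP type''; you make that relationship explicit. You write $v_x$ as a difference of two shifted kernel copies and check that each extreme line of $\operatorname{Box}_x$ meets exactly one of the two squares, with strictness coming from $r_1\neq r_2$ or $c_1\neq c_2$. You also invoke the nonzero-kernel genericity so that each $\mathcal{B}_{x,b}$ has exactly $k^{(L)}$ positions, which fills in details the paper leaves implicit.
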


\begin{proof}
It follows from the planar relationship between the two LRF-Cs of each PSP type.
\end{proof}

Proposition~\ref{prop:boundary_kernel_equality} expresses the boundary entries
of $v_x$ in terms of the kernel, but each entry is scaled by the unknown factor
$\pm g_x$, which is specific to $x$. Taking the ratio of two such entries
eliminates $g_x$ and leaves a quantity determined by the kernel alone. For any
two PSPs $x_1,x_2\in\widetilde{\mathcal{P}}_{\mathrm{PSP}}^{\circ}$, the ratios
of the first to the last nonzero entries of $v_{x_1}$ and $v_{x_2}$ are
therefore equal, and both equal the ratio of the top-left to the bottom-right
entry of the kernel:
\begin{equation}
\label{eq:equal_endpoint_ratios}
\frac{v_{x_1}\bigl[\min S_{x_1}\bigr]}{v_{x_1}\bigl[\max S_{x_1}\bigr]}
=
\frac{v_{x_2}\bigl[\min S_{x_2}\bigr]}{v_{x_2}\bigl[\max S_{x_2}\bigr]}
=
\frac{\bigl(K_0^{(L)}\bigr)_0}{\bigl(K_{k^{(L)}-1}^{(L)}\bigr)_{k^{(L)}-1}}.
\end{equation}
The endpoints are only the simplest instance of this consistency. By
Proposition~\ref{prop:boundary_kernel_equality}, the whole non-overlapping part
of a PSP, not just its two extremes, reproduces the kernel up to the factor
$\pm g_x$, so the ratio of any two such entries is likewise $x$-independent;
the overlap of the two LRF-Cs is excluded, since there $v_x$ is a difference of
shifted kernel copies. In practice the two forms are used progressively.

Since $g_x$ cancels, the common ratio in
Eq.~\eqref{eq:equal_endpoint_ratios} can be evaluated from a recovered vector
alone, without prior knowledge of the kernel. We use it as an $x$-independent
invariant to group PSPs sharing the same kernel and thereby identify the
target-layer PSP set $\mathcal{P}_{\mathrm{PSP}}^{\circ}$ in two cases:
\begin{itemize}
    \item \textit{$\mathcal{P}_{\operatorname{RPCP}}^{\circ}\neq\emptyset$.}
    The kernel size $k^{(L)}$ and the kernel parameters are directly recovered
    from the RPCPs. The target-layer PSPs are then identified from both
    the spatial and the numerical aspects (Eqs.~\eqref{eq:pointwise_kernel_estimate} and \eqref{eq:equal_endpoint_ratios}).

    \item \textit{$\mathcal{P}_{\operatorname{RPCP}}^{\circ}=\emptyset$.}
    We apply the same numerical clustering as in the RPCP identification:
    cluster the candidates by Eq.~\eqref{eq:equal_endpoint_ratios}.
\end{itemize}
After target-layer point identification, the retained interior RPCPs and PSPs
form
    $\mathcal{P}_{\mathrm{conv}}^{\circ}
    =
    \mathcal{P}_{\mathrm{RPCP}}^{\circ}
    \cup
    \mathcal{P}_{\mathrm{PSP}}^{\circ}$.

\subsection{Method for Convolution Kernel Size and Stride Recovery}
\label{subsec:conv_stride_identification}

When $\mathcal{P}_{\operatorname{RPCP}}^{\circ}\neq\emptyset$, the identified
kernel size $\widehat{k}$ is directly given by
Lemma~\ref{lem:conv_kernel_size_recovery}, and the target-layer PSP set
$\mathcal{P}_{\mathrm{PSP}}^{\circ}$ is filtered by
Lemma~\ref{lem:conv_kernel_size_recovery} together with
numerical consistency in non-overlapping areas. When
$\mathcal{P}_{\operatorname{RPCP}}^{\circ}=\emptyset$, we vote among the PSPs
for the kernel size: the vote count for $k'$ is
    $V(k')
    =
    \sum_{x\in\widetilde{\mathcal{P}}_{\mathrm{PSP}}^{\circ}}
    \left[b_{\partial}(\operatorname{Mask}_x)=k'\right]$.
The identified kernel size is $\widehat{k}=\arg\max_{k'\in\mathbb{N}}V(k')$,
and the points supporting this result are retained as
$\mathcal{P}_{\mathrm{PSP}}^{\circ}$.

We collect the starting positions of the effective position sets to form
    $\mathcal{S}_{\mathrm{conv}}
    =
    \{(r_x,c_x):
    x\in\mathcal{P}_{\mathrm{conv}}^{\circ}\}$.
We then collect its nonzero row and column differences to form
\begin{equation*}
\begin{aligned}
    \Delta_{\mathrm{conv}}
    ={}&
    \left\{|r-r'|:
    (r,c),(r',c')\in\mathcal{S}_{\mathrm{conv}},\ r\neq r'\right\}
    \\
    &\cup
    \left\{|c-c'|:
    (r,c),(r',c')\in\mathcal{S}_{\mathrm{conv}},\ c\neq c'\right\}.
\end{aligned}
\end{equation*}

By Lemma~\ref{lem:conv_window_start_recovery}, the convolution stride is recovered as
    $\widehat{s}_k
    =
    \min\Delta_{\mathrm{conv}}$.
% !TeX spellcheck = en_US
% !TEX root = ../main.tex
\section{Recovering the Padding Mode and the Output Channel Number}
\label{sec:padding_output_channel_recovery}

In this section, our tasks are to recover the padding mode (Task~4) and the
output-channel number (Task~5), yielding
$\widehat{\pi}\in\{\mathrm{Valid},\mathrm{Same}\}$
and $\widehat{C}_{\mathrm{out}}$. 

Under Same padding, the input is padded with zeros, and part of the
LRF-C of a point may lie in the zero-padding region. The kernel entries
at the padded positions are not exposed in the true input coordinates,
so the recovered vector is truncated on the corresponding side.
We now state Lemma~\ref{lem:padding_boundary}, which gives a
sufficient condition for identifying Same padding.

\begin{lemma}[Truncated Points Imply Same Padding]
\label{lem:padding_boundary}
A target-layer truncated point is a point whose LRF-C intersects the
zero-padding region. If such a point exists, then
$\pi^{(L)}=\mathrm{Same}$.
\end{lemma}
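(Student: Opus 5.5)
The plan is a contrapositive argument based on the definition of the padding map $\phi^{(L)}$ and the index sets $\mathcal{K}_i^{(L)}$ in Definition~\ref{def:convolution_matrix}. Assume $\pi^{(L)}=\mathrm{Valid}$. Then $t_{\mathrm{pad}}^{(L)}=0$, so the padded index space $\widetilde{\mathcal{I}}^{(L)}$ coincides with $\mathcal{I}^{(L)}$ and $\phi^{(L)}$ is the identity. The zero-padding region $\widetilde{\mathcal{I}}^{(L)}\setminus\phi^{(L)}(\mathcal{I}^{(L)})$ is therefore empty, and no LRF-C can intersect it. Hence a truncated point cannot exist under Valid padding. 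Since $\pi^{(L)}\in\{\mathrm{Valid},\mathrm{Same}\}$ by Definition~\ref{def:architecture}, the existence of a truncated point forces $\pi^{(L)}=\mathrm{Same}$.

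To make this rigorous, I would check that every index in $\mathcal{K}_i^{(L)}$ lies in the valid range under Valid padding. Writing a position as $(u,v)$ with $u=\lfloor i/w_o^{(L)}\rfloor s_k^{(L)}+j$ and $v=(i\bmod w_o^{(L)})s_k^{(L)}+m$, where $0\le j,m<k^{(L)}$, the definition $w_o^{(L)}=\lfloor(W^{(L)}-k^{(L)})/s_k^{(L)}\rfloor+1$ gives $(w_o^{(L)}-1)s_k^{(L)}+k^{(L)}-1\le W^{(L)}-1$. So $0\le v\le W^{(L)}-1$, and the analogous bound holds for $u$ and $H^{(L)}$.

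To connect this with what an observer actually sees, I would add an observable reformulation. Under Valid padding, every target-layer RPCP has $S_x=\mathcal{W}_i^{(L)}$, a full $k^{(L)}\times k^{(L)}$ block, by Proposition~\ref{prop:recovery_vector_sparsity_consistency} and Corollary~\ref{prop:target_layer_rpcp}. Under Same padding, an LRF-C that meets the padded band has $(\phi^{(L)})^{-1}$ discarding some of its positions, so $\operatorname{Shape}_x$ becomes strictly smaller than $(k^{(L)},k^{(L)})$ while $\operatorname{Box}_x$ touches the image border.

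The main obstacle is the definitional care needed with ``zero-padding region'' when $t_{\mathrm{pad}}^{(L)}$ is allowed to vanish or be a half-integer in the Same case. This does not affect the direction being proved, because it only requires showing that Valid padding yields an empty padding region, and that follows immediately from $t_{\mathrm{pad}}^{(L)}=0$.
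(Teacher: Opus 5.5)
Your proposal is correct and follows the paper's argument. Under Valid padding $t_{\mathrm{pad}}^{(L)}=0$, so there is no zero-padding region for any LRF-C to meet, and the existence of a truncated point therefore forces $\pi^{(L)}=\mathrm{Same}$. Your explicit check that every index of $\mathcal{K}_i^{(L)}$ stays in range via $w_o^{(L)}=\lfloor (W^{(L)}-k^{(L)})/s_k^{(L)}\rfloor+1$ is a small rigor addition the paper omits, and the observable reformulation is not needed for the lemma itself.
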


\begin{proof}
Under Valid padding, no zero-padding region exists, so no LRF-C
intersects it.
\end{proof}

Lemma~\ref{lem:padding_boundary} transforms Task~4 into identifying
target-layer truncated points among the boundary candidates
$\widetilde{\mathcal{P}}_{\partial}$, while Task~5 is accomplished by clustering
the retained target-layer points.

\subsection{Identifying the Truncated Point}
\label{subsec:padding_identification}

We define the boundary candidates as
    $\widetilde{\mathcal{P}}_{\partial}
    =
    \mathcal{P}_{\mathrm{conv}}
    \setminus
    \widetilde{\mathcal{P}}_{\mathrm{conv}}^{\circ}$.
Every candidate is a target-layer RPCP or PSP whose Box touches the
boundary, a target-layer truncated point, or a deeper-layer point.

Let
$\widetilde{\mathcal{P}}_{\partial,\mathrm{left}}
=\{x\in\widetilde{\mathcal{P}}_{\partial}:
\operatorname{Box}_x\text{ touches the left boundary of }\Lambda^{(L)}\}$
denote the left-boundary candidates.
Since the four boundaries are symmetric, we take the left
boundary as an example to describe the method.

A target-layer truncated point may be obtained by truncating an RPCP or
PSP at the input boundary. Based on
Proposition~\ref{prop:boundary_kernel_equality}, we derive Proposition~\ref{prop:truncated_boundary_equality}.

\begin{proposition}[Boundary Equality of Truncated Points]
\label{prop:truncated_boundary_equality}
For a target-layer truncated point
$x\in\widetilde{\mathcal{P}}_{\partial,\mathrm{left}}$ obtained by truncating a
target-layer RPCP, horizontal PSP, or off-axis PSP:
    $v_x\bigl[\mathcal{B}_{x,\mathrm{right}}\bigr]
    =\pm g_x\,K^{(L)}_{:,k^{(L)}-1}$ and $v_x\bigl[\mathcal{B}_{x,\mathrm{left}}\bigr] \ne \pm g_x\,K^{(L)}_{:,0}$.
\end{proposition}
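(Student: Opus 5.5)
The plan is to derive the statement from the explicit form of the recovered vector in Lemma~\ref{lem:recovery_vector_cnn}, together with the Same-padding form of the convolution matrix in Definition~\ref{def:convolution_matrix}. The restriction to unpadded coordinates is applied through $(\phi^{(L)})^{-1}$, exactly as in Definition~\ref{def:convolution_row_support}. A truncated point $x$ on the left boundary is, by definition, an RPCP or PSP whose LRF-C (or one of whose two LRF-Cs) starts at a padded column $c<t_{\mathrm{pad}}^{(L)}$. Therefore its first $t_{\mathrm{pad}}^{(L)}-c\ge 1$ kernel columns fall on zero-padding positions, and these are discarded by $(\phi^{(L)})^{-1}$. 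By Proposition~\ref{prop:recovery_vector_numerical_consistency}, the surviving entries are still $g_x$ times the corresponding kernel entries (or differences of kernel entries). Only the set of positions changes.

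\textbf{Right boundary.} I will treat the three cases separately.
\begin{itemize}
\item For a truncated RPCP, the support in $\Lambda^{(L)}$ is the columns $t_{\mathrm{pad}}^{(L)}-c,\dots,k^{(L)}-1$ of the kernel, placed at columns $0,\dots$. The rightmost column of $\operatorname{Box}_x$ is therefore untouched and carries $g_xK^{(L)}_{:,k^{(L)}-1}$.
\item For a horizontal PSP, the right LRF-C has the larger start column. The right boundary of the Box lies only in that window, and it is not in the overlap because the starts differ. This is the same argument as in Proposition~\ref{prop:boundary_kernel_equality}, and it gives $\pm g_xK^{(L)}_{:,k^{(L)}-1}$; the sign depends on which row carries the minus in $A_{m,i}-A_{m,j}$.
\item For an off-axis PSP, the boundary entries come from one window only, provided the overlap does not reach the right edge. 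This holds because the right edge belongs exclusively to the window with the larger column start, so Proposition~\ref{prop:boundary_kernel_equality} again applies on that side.
\end{itemize}
Truncation on the left never reaches the right edge, provided $k^{(L)}>t_{\mathrm{pad}}^{(L)}-c$. This holds because $t_{\mathrm{pad}}^{(L)}<k^{(L)}$ from its formula.

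\textbf{Left boundary.} After truncation, the left column of $\operatorname{Box}_x$ is column $0$ of $\Lambda^{(L)}$. In the RPCP case it carries $g_xK^{(L)}_{:,t_{\mathrm{pad}}^{(L)}-c}$, which is a column of index $\ge1$. In the PSP cases it carries such a column, or possibly a difference of shifted kernel columns if both windows are truncated. The claim is that this vector is not $\pm g_xK^{(L)}_{:,0}$. I will argue this by genericity of the kernel, the same device used in the proof of Proposition~\ref{prop:recovery_vector_sparsity_consistency}. For a generic $K^{(L)}$, no column $K^{(L)}_{:,j}$ with $j\ge1$, and no difference of kernel columns, equals $\pm K^{(L)}_{:,0}$. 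Each such coincidence is a proper linear condition on the kernel entries, and so it defines a measure-zero set. Since the scalar $g_x$ is common to all entries of the vector, it cancels in this comparison.

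\textbf{Main obstacle.} I expect the hardest step to be the off-axis and PSP bookkeeping. There I must confirm that the right-boundary column is never hit by the overlap region, and that the left column after truncation is neither the untruncated $K^{(L)}_{:,0}$ of the other window nor, exactly, $\pm K^{(L)}_{:,0}$. If the second window is not truncated, its left column could in principle contain $K^{(L)}_{:,0}$ on part of the rows. My approach is to use that the Box's left column at index $0$ then mixes entries of different kernel columns, or has zeros at different row positions than $K^{(L)}_{:,0}$. Either way, equality with $\pm g_xK^{(L)}_{:,0}$ fails generically, and I would state this genericity assumption explicitly.
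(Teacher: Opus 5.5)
Your proposal follows the same route as the paper's proof. By Lemma~\ref{lem:recovery_vector_cnn} the recovered vector is the restricted row (or row difference), and left truncation only deletes leading kernel columns, so the right boundary of $\operatorname{Box}_x$ still carries the complete column $\pm g_x K^{(L)}_{:,k^{(L)}-1}$. The left boundary is a truncated view, which differs from $\pm g_x K^{(L)}_{:,0}$ for a generic kernel. Your extra bookkeeping (the bound $t_{\mathrm{pad}}^{(L)}<k^{(L)}$, the exclusivity of the right edge to the window with the larger column start, and the case of a second PSP window starting exactly at input column $0$) makes explicit what the paper leaves implicit.

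Like the paper, you tacitly assume the point is truncated on the left only. A corner-truncated point in $\widetilde{\mathcal{P}}_{\partial,\mathrm{left}}$ exposes only part of $K^{(L)}_{:,k^{(L)}-1}$ on its right boundary, so state that restriction explicitly.
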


\begin{proof}
By Lemma~\ref{lem:recovery_vector_cnn}, $v_x$ is, up to the sign
absorbed in $g_x$, the restriction of the convolution-matrix row (resp.
row difference) to its LRF-C. Left truncation removes the leftmost
kernel columns, which fall into the zero-padding region, while the
right boundary of $\operatorname{Box}_x$ lies strictly inside the input
and its effective positions carry the complete last kernel column;
hence $v_x\bigl[\mathcal{B}_{x,\mathrm{right}}\bigr]
=\pm g_x\,K^{(L)}_{:,k^{(L)}-1}$. The effective positions on the left
boundary correspond to a truncated view rather than the first kernel
column, so for a generic kernel they do not coincide with
$\pm g_x\,K^{(L)}_{:,0}$.
\end{proof}

To filter the target-layer points, we select an RPCP from
$\mathcal{P}_{\mathrm{RPCP}}^{\circ}$ as the numerical template
point; if $\mathcal{P}_{\mathrm{RPCP}}^{\circ}=\emptyset$, we select a
horizontal PSP or off-axis PSP instead. We adopt the two numerical
verification conditions in
Proposition~\ref{prop:truncated_boundary_equality}, and denote the matched
point set by $\mathcal{P}_{\partial,\mathrm{left}}$. If
$\mathcal{P}_{\partial,\mathrm{left}}\neq\emptyset$, then a truncated
point exists among the left-boundary candidates.

By symmetry, we obtain the matched sets
$\mathcal{P}_{\partial,\mathrm{right}}$,
$\mathcal{P}_{\partial,\mathrm{top}}$, and
$\mathcal{P}_{\partial,\mathrm{bottom}}$ for the other three boundaries.
We denote the set of truncated points by
    $\mathcal{P}_{\partial}
    =
    \mathcal{P}_{\partial,\mathrm{left}}
    \cup\mathcal{P}_{\partial,\mathrm{right}}
    \cup\mathcal{P}_{\partial,\mathrm{top}}
    \cup\mathcal{P}_{\partial,\mathrm{bottom}}$.

\subsection{Method for Recovering the Padding Mode and the Output Channel Number}
\label{subsec:conv_parameter_recovery_algorithm}

According to Lemma~\ref{lem:padding_boundary}, if
$\mathcal{P}_{\partial}\neq\emptyset$, then a truncated point exists and
we recover $\pi^{(L)}=\mathrm{Same}$; otherwise,
$\pi^{(L)}=\mathrm{Valid}$.

We cluster the retained target-layer points into three groups. The points
in $\mathcal{P}_{\mathrm{RPCP}}^{\circ}$ are clustered by
Proposition~\ref{prop:recovery_vector_numerical_consistency}. The points
in $\mathcal{P}_{\mathrm{PSP}}^{\circ}$ are first classified into the
three PSP types according to the recovered kernel size $\widehat{k}$ and
stride $\widehat{s}_k$, and then clustered by
Proposition~\ref{prop:boundary_kernel_equality}. The truncated-point
sets composing $\mathcal{P}_{\partial}$ are clustered separately:
$\mathcal{P}_{\partial,\mathrm{left}}$,
$\mathcal{P}_{\partial,\mathrm{right}}$,
$\mathcal{P}_{\partial,\mathrm{top}}$, and
$\mathcal{P}_{\partial,\mathrm{bottom}}$; e.g.,
$\mathcal{P}_{\partial,\mathrm{left}}$ is clustered by the equality
condition
$v_x\bigl[\mathcal{B}_{x,\mathrm{right}}\bigr]
=\pm g_x\,K^{(L)}_{:,k^{(L)}-1}$
in Proposition~\ref{prop:truncated_boundary_equality}. Finally, all
resulting clusters are merged according to the kernel values exposed on
their non-overlapping regions, and the number of merged classes is
the recovered output-channel number $\widehat{C}_{\mathrm{out}}$.

% !TeX spellcheck = en_US
% !TEX root = ../main.tex

\section{Recovering the Pooling Window Size and Stride}
\label{subsec:pooling_structure_identification}

In this section, we recover the pooling type
$\widehat{\rho}\in\{\mathrm{NoPool},\mathrm{AvgPool},\mathrm{MaxPool}\}$.
If the type is $\mathrm{AvgPool}$ or $\mathrm{MaxPool}$, we further
recover the window size $\widehat{p}$ and the stride $\widehat{s}_p$.

PSPs are produced by the additional nonlinear operation introduced by
max pooling. Therefore, when
$\mathcal{P}_{\mathrm{PSP}}^{\circ}\neq\emptyset$, or when a PSP exists
in the boundary set $\mathcal{P}_{\partial}$, we determine
$\widehat{\rho}=\mathrm{MaxPool}$.
Otherwise, the two remaining cases, NoPool and
AvgPool, are both linear operations, so the pooling matrix formed under
the fully-connected view is fixed: $P^{(L)}=I$ for NoPool, while the
AvgPool matrix is a constant block matrix whose rows take
the constant value $1/p^2$ on one $p\times p$ LRF-P and are zero
elsewhere. Using this feature, we identify AvgPool and recover the window
size $\widehat{p}$ and the stride $\widehat{s}_p$.

\subsection{Recovering the Max-Pooling Window Size and Stride}
\label{subsec:max_pooling_parameter_identification}

Adjacent values within a pooling receptive field are computed by
adjacent convolution operations, whose window starts are separated by
$s_k^{(L)}$. For a PSP $x$, let
$(r'_x,c'_x):=\psi^{(L)}\bigl(\max S_x\bigr)-(k^{(L)}-1,k^{(L)}-1)$
denote the start of its second LRF-C. We propose the following result.

\begin{proposition}[PSP Window-Start Offsets Are Stride Multiples]
\label{prop:psp_window_start_offset}
For any PSP $x \in \mathcal{P}_{\mathrm{PSP}}^{\circ}$, the offsets
between its two LRF-C starts satisfy:
\begin{equation}
    \label{eq:psp_window_start_offset}
    \begin{aligned}
    \bigl|\,r_x-r'_x\,\bigr| &= a\,s_k^{(L)},\\
    \bigl|\,c_x-c'_x\,\bigr| &= a\,s_k^{(L)},
    \end{aligned}
\end{equation}
where $a\in\{0,\ldots,p-1\}$ in each spatial dimension.
\end{proposition}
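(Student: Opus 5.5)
The plan is to trace both LRF-Cs of a target-layer PSP back to their indices in the convolution output. First, recall what a PSP is (Definition~\ref{def:critical_points}): two convolution responses $r_1,r_2$ tie for the maximum inside one LRF-P of layer $L$. So both responses are entries of the same $p\times p$ pooling window on the feature map $Z^{(L)}$ (single output channel, same $m$ by Lemma~\ref{lem:recovery_vector_cnn}). Index these two responses by output-grid coordinates $(\alpha_1,\beta_1)$ and $(\alpha_2,\beta_2)$, with $0\le\alpha,\beta<w_o^{(L)}$. The first step is to note that a single pooling window covers output coordinates $\{\alpha_0,\dots,\alpha_0+p-1\}\times\{\beta_0,\dots,\beta_0+p-1\}$. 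Hence $|\alpha_1-\alpha_2|\in\{0,\dots,p-1\}$ and $|\beta_1-\beta_2|\in\{0,\dots,p-1\}$, each coordinate bounded independently.

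The second step maps output coordinates to LRF-C starts using $\mathcal{K}_i^{(L)}$ from Definition~\ref{def:convolution_matrix}. The response with row index $i$ has padded start $\bigl(t_{\mathrm{pad}}^{(L)}+\lfloor i/w_o^{(L)}\rfloor s_k^{(L)},\ t_{\mathrm{pad}}^{(L)}+(i\bmod w_o^{(L)})s_k^{(L)}\bigr)$, and $\phi^{(L)}$ only translates this by $t_{\mathrm{pad}}^{(L)}$. So in $\Lambda^{(L)}$ the start is $(\alpha s_k^{(L)},\beta s_k^{(L)})$, up to a common shift that is the same for both responses. The start differences are therefore $|\alpha_1-\alpha_2|s_k^{(L)}$ and $|\beta_1-\beta_2|s_k^{(L)}$. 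This refines Lemma~\ref{lem:conv_window_start_recovery}, which only gives membership in $s_k^{(L)}\mathbb{Z}$.

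The third step, which I expect to be the main obstacle, is to show that the quantities $(r_x,c_x)$ and $(r'_x,c'_x)$ read off the recovered vector really are the two LRF-C starts. Since $x\in\mathcal{P}_{\mathrm{PSP}}^{\circ}$ is interior, no truncation occurs. By Proposition~\ref{prop:recovery_vector_sparsity_consistency} (generic kernel, no cancellation), $S_x=\mathcal{W}_i^{(L)}\cup\mathcal{W}_j^{(L)}$, a union of two full $k\times k$ squares. Under row-major order, $\min S_x$ lies in whichever square has the smaller start row, taking the smaller column in case of ties. This gives a genuine start, so $r_x=\min\{r_1,r_2\}$.

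The claim that $\max S_x$ is the bottom-right corner of the other square needs care, and I would split by PSP type (Definition~\ref{def:psp_types}).
\begin{itemize}
\item \emph{Horizontal and vertical PSPs.} One square dominates the other componentwise. Then $\min S_x$ and $\max S_x$ are corners of the two squares, and subtracting $(k-1,k-1)$ from $\max S_x$ gives the other start.
\item \emph{Off-axis PSPs.} Here the starts may be anti-ordered ($r_1<r_2$, $c_1>c_2$). Then $\min S_x$ is the top-left of square~1, and $\max S_x=(r_2+k-1,c_2+k-1)$ is still the bottom-right of square~2, because row $r_2+k-1$ is the largest row and only square~2 reaches it. So $\max S_x-(k-1,k-1)$ is again the start of the second LRF-C.
\end{itemize}
In every case, the absolute differences $|r_x-r'_x|$ and $|c_x-c'_x|$ equal $|r_1-r_2|$ and $|c_1-c_2|$. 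Combining with the second step gives $a\,s_k^{(L)}$ in each dimension with $a\in\{0,\dots,p-1\}$; the value of $a$ may differ between the two dimensions, as the phrase ``in each spatial dimension'' indicates.
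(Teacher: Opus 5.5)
Your proposal is correct and follows the paper's own argument. Both tied responses lie in one $p\times p$ pooling window, so their output-grid offsets are at most $p-1$ in each dimension, and the convolution stride turns these into LRF-C start offsets that are the same multiples of $s_k^{(L)}$. Your third step checks, via row-major order, that $\psi^{(L)}(\min S_x)$ and $\psi^{(L)}(\max S_x)-(k^{(L)}-1,k^{(L)}-1)$ are the two distinct LRF-C starts (including the anti-ordered off-axis case); this fills in a point the paper simply asserts.
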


\begin{proof}
The coordinates $(r_x,c_x)$
are the start of one LRF-C of the PSP, and
$(r'_x,c'_x)$
are the start of the other LRF-C. The differences between the two LRF-Cs
in each spatial dimension are integer multiples of the stride $s_k^{(L)}$.
Moreover, the outputs of the two LRF-Cs lie within the same pooling
window, so the offset between the two LRF-C starts is at most $(p-1)$
stride units in each dimension.
Therefore, Eq.~\eqref{eq:psp_window_start_offset} holds.
\end{proof}

For the PSPs $x\in\mathcal{P}_{\mathrm{PSP}}^{\circ}$, evaluating
the left-hand side of Eq.~\eqref{eq:psp_window_start_offset} for each PSP $x$ forms the value set $\mathcal{D}$. By
Proposition~\ref{prop:psp_window_start_offset}, we transform the
max-pooling window-size identification into selecting the maximum value
of $\mathcal{D}$ and computing $\widehat{p}=(\max\mathcal{D})/s_k^{(L)}+1$.

We define the point sets whose offsets attain the maximum value of $\mathcal{D}$:
\begin{equation*}
    \mathcal{P}_{\max}^{r}
    =
    \Bigl\{
    x\in\mathcal{P}_{\mathrm{PSP}}^{\circ}:
    \bigl|\,r_x-r'_x\,\bigr|
    =(\widehat{p}-1)\,s_k^{(L)}
    \Bigr\},
\end{equation*}
\begin{equation*}
    \mathcal{P}_{\max}^{c}
    =
    \Bigl\{
    x\in\mathcal{P}_{\mathrm{PSP}}^{\circ}:
    \bigl|\,c_x-c'_x\,\bigr|
    =(\widehat{p}-1)\,s_k^{(L)}
    \Bigr\}.
\end{equation*}
We then propose Proposition~\ref{prop:max_pooling_stride} to characterize the
relationship between the pooling-window anchors and the pooling stride.

\begin{proposition}[Max-Pooling Anchors Lie on the Stride Grid]
\label{prop:max_pooling_stride}
For any two points $x_1,x_2$ of $\mathcal{P}_{\max}^{r}$,
    $r_{x_1}-r_{x_2}
    \in s_p s_k^{(L)}\mathbb{Z}$.
Symmetrically, for any two points
$x_1,x_2$ of $\mathcal{P}_{\max}^{c}$,
    $c_{x_1}-c_{x_2}
    \in s_p s_k^{(L)}\mathbb{Z}$.
\end{proposition}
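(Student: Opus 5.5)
The plan is to make explicit the two-level indexing of the target-layer convolution outputs and to show that a PSP in $\mathcal{P}_{\max}^{r}$ pins down the first row of its pooling window. Index the convolution outputs of layer $L$ by grid coordinates $(a,b)$ with $0\le a<h_o^{(L)}$ and $0\le b<w_o^{(L)}$. Under the convention of Definition~\ref{def:convolution_matrix} (interior points, so the padding offset $t_{\mathrm{pad}}^{(L)}$ is a common shift), output $(a,b)$ has its LRF-C start at $(a s_k^{(L)} + t_0,\ b s_k^{(L)} + t_0)$ in $\Lambda^{(L)}$, for a constant $t_0$ depending only on the padding mode. Likewise, the LRF-Ps of the pooling layer are the $p\times p$ blocks of output coordinates $\{(\alpha s_p+i,\ \beta s_p+j): 0\le i,j<p\}$, indexed by $(\alpha,\beta)$. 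This is the same bookkeeping used in the proof of Proposition~\ref{prop:psp_window_start_offset}, now with the pooling-window index made explicit.

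Fix $x\in\mathcal{P}_{\max}^{r}$, with its two LRF-C starts $(r_x,c_x)$ and $(r'_x,c'_x)$ corresponding to output coordinates $(a_1,b_1)$ and $(a_2,b_2)$. By Definition~\ref{def:critical_points}, both outputs lie in one common LRF-P $(\alpha,\beta)$, so $a_1,a_2\in\{\alpha s_p,\ldots,\alpha s_p+p-1\}$. The defining condition $|r_x-r'_x|=(\widehat p-1)s_k^{(L)}$ gives $|a_1-a_2|=\widehat p-1$. Assuming $\widehat p=p$, which is the identification established just before the statement via Proposition~\ref{prop:psp_window_start_offset}, the only way two integers in a length-$p$ block can differ by $p-1$ is as its two endpoints, so $\{a_1,a_2\}=\{\alpha s_p,\ \alpha s_p+p-1\}$. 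Since $r_x$ is the start of $S_x$ (the minimum position), it corresponds to the smaller row index, so $a_1=\alpha s_p$. Hence $r_x=\alpha s_p s_k^{(L)}+t_0$.

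For two points $x_1,x_2\in\mathcal{P}_{\max}^{r}$ with window indices $\alpha_1,\alpha_2$, subtracting gives $r_{x_1}-r_{x_2}=(\alpha_1-\alpha_2)s_p s_k^{(L)}\in s_p s_k^{(L)}\mathbb{Z}$. The column statement for $\mathcal{P}_{\max}^{c}$ follows by the identical argument with rows and columns exchanged, using $c_x$ and $\beta$.

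The main obstacle is justifying that $r_x$ is the row of the \emph{top} LRF-C. The paper defines $(r_x,c_x)$ as $\psi^{(L)}(\min S_x)$, which is the top-left start only in the row direction: for an off-axis PSP the minimal flat index lies in the upper LRF-C, whose column may be the larger one. I would therefore handle rows via the minimum row index, where $\min S_x$ is correct because row-major order puts the upper LRF-C first. For columns I would take $\min\{c_x,c'_x\}$ rather than $c_x$ literally, noting that this equals $v_{\min}$ of $\operatorname{Box}_x$ by Corollary~\ref{prop:target_layer_psp}. A secondary point is the reliance on $\widehat p=p$. This requires that the sample contains at least one PSP attaining the extreme offset, which is the same sampling assumption already made for the window-size estimate.
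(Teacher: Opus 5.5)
Your proposal is correct and uses the same argument as the paper's one-line proof (pooling windows advance by $s_p$ responses, hence by $s_p s_k^{(L)}$ input coordinates), but you supply the step the paper leaves implicit: a PSP with row offset $(p-1)s_k^{(L)}$ must occupy the first and last rows of its pooling window, so $r_x$ is exactly the window anchor. Both of your caveats are genuine hidden hypotheses: the statement needs $\widehat p=p$, and with $c_x$ read literally from $\psi^{(L)}(\min S_x)$ the column claim is actually false for anti-diagonal off-axis PSPs (e.g.\ for $p=s_p=2$, $s_k^{(L)}=1$, such a PSP and a horizontal PSP give an odd $c_{x_1}-c_{x_2}$), so the column case requires the replacement $\min\{c_x,c'_x\}=v_{\min}$ from your last paragraph, not the literal ``using $c_x$'' of your third paragraph.
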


\begin{proof}
Pooling windows advance by $s_p$ response positions. After mapping through the
target convolution, their anchors therefore advance by $s_ps_k^{(L)}$ input
coordinates.
\end{proof}

We collect the coordinates of the retained maximum-offset points:
    $\mathcal{R}_{\max}
    =
    \{r_x:x\in\mathcal{P}_{\max}^{r}\}$
    and
    $\mathcal{V}_{\max}
    =
    \{c_x:x\in\mathcal{P}_{\max}^{c}\}$.
These coordinates are measured in the target-layer input space. We therefore
compute their nonzero differences in units of the recovered convolution stride:
\begin{equation*}
    \Delta
    =
    \left\{
    \frac{|r-r'|}{\widehat{s}_k}:
    r,r'\in\mathcal{R}_{\max},\ r\neq r'
    \right\}
    \cup
    \left\{
    \frac{|c-c'|}{\widehat{s}_k}:
    c,c'\in\mathcal{V}_{\max},\ c\neq c'
    \right\}.
\end{equation*}
By Proposition~\ref{prop:max_pooling_stride}, every element of
$\Delta$ is a positive integer multiple of $s_p$. If the retained PSPs
include two adjacent pooling-window anchors, 
we recover the max-pooling stride
as $\widehat{s}_p = \min\Delta$.

\subsection{Recovering the Average-Pooling Window Size and Stride}
\label{subsec:average_pooling_parameter_identification}

After max pooling has been excluded,
we recover the pooling type $\widehat{\rho}$ and the pooling parameters.
We adopt a skip-and-determine strategy: skip the pooling operation in layer $L$ and recover the parameters
through the recovered vectors in layer $L+1$.
Let $v_x^{(L+1)}$ denote the recovered
vector in the actual layer-$(L+1)$ input coordinates and let
$\widetilde{v}_x^{(L+1)}$ denote the vector recovered after skipping the pooling operation.
For a layer-$(L+1)$ RPCP $x$, its recovered vector satisfies
\begin{equation*}
v_x^{(L+1)}
= \widetilde{v}_x^{(L+1)}P^{(L)}
= g_xA_{m,i}^{(L+1)}[\mathcal{K}_i^{(L+1)}],
\end{equation*}
and for a layer-$(L+1)$ PSP $x$,
\begin{equation*}
v_x^{(L+1)}
= \widetilde{v}_x^{(L+1)}P^{(L)}
= g_x\bigl(A_{m,i}^{(L+1)}-A_{m,j}^{(L+1)}\bigr)[\mathcal{K}_i^{(L+1)}].
\end{equation*}

If $\rho^{(L)}=\mathrm{NoPool}$, then $P^{(L)}=I$ and $\widetilde{v}_x^{(L+1)} = v_x^{(L+1)}$.
If $\rho^{(L)}=\mathrm{AvgPool}$, we describe the structure of the restored vector in Proposition~\ref{prop:avg_pooling_equal_positions}.

Suppose $\rho^{(L)}=\mathrm{AvgPool}$ with window size $p$ and stride
$s_p$, and let $w_\rho^{(L)}$ denote the pooling output width. For a
pooled coordinate $i\in\{0,\ldots,d^{(L+1)}-1\}$, the flattened LRF-P
of $i$ (the pooling analogue of $\mathcal{K}_i^{(L)}$) is
\begin{equation*}
    \mathcal{K}_i^{P}
    =
    \left\{
    \bigl(\bigl\lfloor i/w_\rho^{(L)}\bigr\rfloor s_p+j\bigr)w_o^{(L)}
    +\bigl(i\bmod w_\rho^{(L)}\bigr)s_p+m:\
    0\le j<p,\ 0\le m<p
    \right\},
\end{equation*}
i.e., the set of pre-pooling positions averaged into the pooled
coordinate $i$; the $i$-th row of $P^{(L)}$ equals $\tfrac{1}{p^2}$ on
$\mathcal{K}_i^{P}$ and $0$ elsewhere. We present the single-channel
case; channels pool independently.

\begin{proposition}[Average-Pooling Structure of the Restored Vector]
\label{prop:avg_pooling_equal_positions}
Let $x$ be a layer-$(L+1)$ point with recovered vector
$v_x^{(L+1)}=\widetilde{v}_x^{(L+1)}P^{(L)}$, and let $S_x^{(L+1)}$
denote the effective position set of $v_x^{(L+1)}$. Define the
restored vector $\widehat{v}_x^{(L+1)}$ by
\begin{equation}
    \label{eq:avgpool_entry_formula}
    \bigl(\widehat{v}_x^{(L+1)}\bigr)_{r}
    =
    p^{2}\sum_{i':\ r\in\mathcal{K}_{i'}^{P}}
    \bigl(v_x^{(L+1)}\bigr)_{i'}
    \qquad\text{for every pre-pooling position } r.
\end{equation}
Then the following hold.
\begin{enumerate}
    \item $\bigl(\widehat{v}_x^{(L+1)}\bigr)_r=0$ outside
    $\bigcup_{i'\in S_x^{(L+1)}}\mathcal{K}_{i'}^{P}$;
    \item if $s_p\ge p$, then $\widehat{v}_x^{(L+1)}$ is constant on
    each $\mathcal{K}_i^{P}$ with $i\in S_x^{(L+1)}$, with value
    $p^{2}\bigl(v_x^{(L+1)}\bigr)_i$, and consecutive segments are
    separated by $g=s_p-p$ zeros;
    \item if $s_p<p$, then $\widehat{v}_x^{(L+1)}$ equals
    $p^{2}\bigl(v_x^{(L+1)}\bigr)_i$ on the exclusive part (of length
    $\ell=s_p$) of each LRF-P, whereas each overlapped position carries
    the sum of the adjacent restored values.
\end{enumerate}
\end{proposition}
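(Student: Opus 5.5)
The proof is a direct computation from Eq.~\eqref{eq:avgpool_entry_formula}, splitting into cases by how the LRF-Ps $\mathcal{K}_{i'}^{P}$ of distinct pooled coordinates intersect. For a fixed pre-pooling position $r$ with spatial coordinates $(a,b)$, let $N(r)=\{i':\ r\in\mathcal{K}_{i'}^{P}\}$. From the explicit form of $\mathcal{K}_{i'}^{P}$, $N(r)$ is the set of pooled indices $i'$ whose row index $\lfloor i'/w_\rho^{(L)}\rfloor$ and column index $i'\bmod w_\rho^{(L)}$ satisfy $0\le a-\lfloor i'/w_\rho^{(L)}\rfloor s_p<p$ and $0\le b-(i'\bmod w_\rho^{(L)})s_p<p$. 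The restored value is $p^2$ times the sum of $(v_x^{(L+1)})_{i'}$ over $N(r)$, and only indices $i'\in S_x^{(L+1)}$ contribute, since the other entries vanish by Definition~\ref{def:support_channel_slices}.

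\textbf{Claim 1.} Suppose $r\notin\bigcup_{i'\in S_x^{(L+1)}}\mathcal{K}_{i'}^{P}$. Then $N(r)\cap S_x^{(L+1)}=\emptyset$, so every summand in Eq.~\eqref{eq:avgpool_entry_formula} is zero and the restored entry vanishes.

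\textbf{Claim 2.} Assume $s_p\ge p$. In each spatial axis the intervals $[ts_p,ts_p+p-1]$ are pairwise disjoint, because consecutive starts differ by $s_p\ge p$. Hence the products $\mathcal{K}_{i'}^{P}$ are pairwise disjoint, and $|N(r)|\le1$.
\begin{itemize}
\item For $r\in\mathcal{K}_i^{P}$ we get $N(r)=\{i\}$, so $(\widehat{v}_x^{(L+1)})_r=p^2(v_x^{(L+1)})_i$, which is constant on the window.
\item Between the interval ending at $ts_p+p-1$ and the next one starting at $(t+1)s_p$ there are exactly $s_p-p$ positions. No window covers them, so by Claim 1 they are zeros.
\end{itemize}

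\textbf{Claim 3.} Assume $s_p<p$, and again argue axis by axis. Position $a$ lies in the intervals indexed by $t$ with $a-p<ts_p\le a$.
\begin{itemize}
\item The \emph{exclusive part} of window $t$ is the set of positions covered by interval $t$ alone. Along a run of consecutive windows, this set has length $s_p$ per window, namely the positions before the next window begins. On it $N(r)=\{i\}$, which gives the value $p^2(v_x^{(L+1)})_i$.
\item At an overlapped position, $N(r)$ contains several adjacent windows. The formula then yields $p^2$ times the sum of their coefficients, which is the sum of the adjacent restored values.
\end{itemize}

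\textbf{Main obstacle.} The delicate step is making "exclusive part of length $\ell=s_p$" precise. The last window in a row, and windows whose neighbors fall outside $S_x^{(L+1)}$, have larger exclusive regions. I would therefore state the claim for interior windows of the pooled grid, and note that boundary windows only enlarge the exclusive region without changing the value. Doing the two-dimensional case as a product of the one-dimensional analyses handles the corner overlaps, where up to four windows meet.
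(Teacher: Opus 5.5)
Your Claims 1 and 2 are correct. They follow the paper's own argument: count, for each pre-pooling position, the windows $\mathcal{K}_{i'}^{P}$ with $i'\in S_x^{(L+1)}$ that contain it.

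The gap is in Claim 3, at the step you flag, and your proposed repair points the wrong way. You define the exclusive part of window $t$ as the positions covered by it alone. You then measure it as $[ts_p,(t+1)s_p-1]$, ``the positions before the next window begins.'' These positions are not all exclusive when window $t-1$ is also indexed by $S_x^{(L+1)}$. That window extends to $ts_p+p-s_p-1$, so the first $p-s_p$ of them carry $p^2$ times the sum of two coefficients. For a window whose two neighbours are both present, the exclusive part is $[ts_p+p-s_p,\,(t+1)s_p-1]$. Its length is $2s_p-p$, and it is empty as soon as $p\ge 2s_p$.

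Concretely, take $p=3$, $s_p=2$ and three consecutive windows with coefficients $a,b,c$. The restored row is $p^2(a,a,a+b,b,b+c,c,c)$, so the middle window's exclusive part has length $1$. Reading $\ell=1$, $o=1$ from it would give $(p,s_p)=(2,1)$ in Corollary~\ref{cor:avg_pool_parameter_estimate}. For $p=4$, $s_p=2$ the middle window has no exclusive position at all.

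Hence restricting item~3 to interior windows selects exactly the windows for which ``length $\ell=s_p$'' is false. The correct bookkeeping is one-dimensional first. Take a run of at least two consecutive windows $t_0,\ldots,t_1$; this is the situation for an RPCP, whose support is a single LRF-C.
\begin{itemize}
\item The first window has exclusive part $[t_0s_p,(t_0+1)s_p-1]$, of length exactly $s_p$.
\item The last window has exclusive part $[t_1s_p+p-s_p,\,t_1s_p+p-1]$, also of length exactly $s_p$.
\item An isolated window has exclusive length $p$.
\item Interior windows have exclusive length $\max\{0,2s_p-p\}$.
\end{itemize}
In two dimensions, for a rectangular block of windows, the exclusive region is the product of the two one-dimensional exclusive parts.

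Your value statements are fine: exclusive positions carry $p^2\bigl(v_x^{(L+1)}\bigr)_i$, and overlapped positions carry $p^2$ times the sum over the covering windows. Only the length claim needs restating. You can state it for the end windows of a run. Alternatively, define $\ell$ as the start-to-start distance $s_p$ of consecutive windows and $o=p-s_p$ as their pairwise overlap, which is all the corollary uses.

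The paper's own proof asserts the length $s_p$ for ``each'' LRF-P without argument, so it glosses over the same point. Your attempt to justify it is where the error becomes visible.
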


\begin{proof}
Since the $i$-th row of $P^{(L)}$ equals $\tfrac{1}{p^{2}}$ exactly on
$\mathcal{K}_i^{P}$ and zero elsewhere, the $i$-th coefficient of
$v_x^{(L+1)}=\widetilde{v}_x^{(L+1)}P^{(L)}$ is
\[
\bigl(v_x^{(L+1)}\bigr)_i
=
\widetilde{v}_x^{(L+1)}\bigl(P_i^{(L)}\bigr)^{\top}
=
\tfrac{1}{p^{2}}
\sum_{r\in\mathcal{K}_i^{P}}
\bigl(\widetilde{v}_x^{(L+1)}\bigr)_{r}.
\]
By Eq.~\eqref{eq:recovered_signature_vector},
$\bigl(v_x^{(L+1)}\bigr)_i\neq0$ holds only for
$i\in S_x^{(L+1)}$, so every term in
Eq.~\eqref{eq:avgpool_entry_formula} vanishes unless
$i'\in S_x^{(L+1)}$; this proves case 1.
For case 2 and case 3, we count how many flattened LRF-Ps
$\mathcal{K}_{i'}^{P}$ with $i'\in S_x^{(L+1)}$ cover each
pre-pooling position:
\begin{itemize}
    \item \textit{Case $s_p\ge p$ (disjoint windows).} Each position
    $r\in\mathcal{K}_i^{P}$ with $i\in S_x^{(L+1)}$ lies in no other
    such LRF-P, so the sum in Eq.~\eqref{eq:avgpool_entry_formula}
    reduces to the single term $p^{2}\bigl(v_x^{(L+1)}\bigr)_i$;
    consecutive segments are separated by $g=s_p-p$ zero positions.
    This proves (2).
    \item \textit{Case $s_p<p$ (overlapping windows).} The exclusive
    part of each $\mathcal{K}_i^{P}$, of length $\ell=s_p$, is covered
    by no other such LRF-P and therefore carries the constant value
    $p^{2}\bigl(v_x^{(L+1)}\bigr)_i$, whereas each overlapped position
    receives the sum of the corresponding restored values. This proves
    (3).
\end{itemize}
\end{proof}

Based on Proposition~\ref{prop:avg_pooling_equal_positions}, we obtain
Corollary~\ref{cor:avg_pool_parameter_estimate}, where $\ell$, $g$,
and $o$ denote the constant-segment length, the zero-gap length, and
the overlap length read from the entry sequence of
$\widehat{v}_x^{(L+1)}$.

\begin{myCorollary}[Average-Pooling Parameter Estimate]
\label{cor:avg_pool_parameter_estimate}
For a layer-$(L+1)$ point $x$ exhibiting this pattern, the following hold:
\begin{equation*}
    \label{eq:avg_pool_parameter_estimate}
    ( p, s_p)
    =
    \begin{cases}
        (\ell,\ell), & s_p=p,\\
        (\ell,\ell+g), & s_p>p,\\
        (\ell+o,\ell), & s_p<p.
    \end{cases}
\end{equation*}
\end{myCorollary}

\begin{proof}
By Proposition~\ref{prop:avg_pooling_equal_positions}, $\ell=p$ when
$s_p\ge p$, with gap $g=s_p-p$, and $\ell=s_p$ with overlap $o=p-s_p$
when $s_p<p$; substituting these relations into the three cases gives
the displayed pairs.
\end{proof}

Therefore, no pooling and average pooling are distinguished by testing
whether the entries of the restored vector $\widehat{v}_x^{(L+1)}$ at
adjacent positions
are equal: under average pooling, the entry sequence of
$\widehat{v}_x^{(L+1)}$ exhibits repeated constant blocks, i.e., the
entries within each flattened LRF-P are equal
(Proposition~\ref{prop:avg_pooling_equal_positions}); under no
pooling, $P^{(L)}=I$ and $\widehat{v}_x^{(L+1)}=v_x^{(L+1)}$ retains
the sparse convolution structure, so no such repeated-block pattern
appears. Once average pooling is identified, reading the segment length
$\ell$, the zero-gap length $g$, and the overlap length $o$ from the
entry sequence and substituting them into
Corollary~\ref{cor:avg_pool_parameter_estimate} completes the
parameter recovery of average pooling.

% !TeX spellcheck = en_US
% !TEX root = ../main.tex

\section{Experiments}
\label{sec:experiments}

This section evaluates the effectiveness of the proposed architecture recovery
attack and is organized as follows.
Section~\ref{subsec:existing_models} summarizes the overall experimental
results, and Section~\ref{subsec:target_model} presents visualizations of the
architecture recovery process, task by task, on one representative model.
All the experiments are conducted on a server equipped with two Intel Xeon Gold
5318Y CPUs at 2.10 GHz, 503 GB of memory.

\subsection{Summary of Experimental Results}
\label{subsec:existing_models}

We combine our architecture recovery attack with the existing parameter
recovery attacks for CNNs, using the attack of~\cite{cryptoeprint:2026/241} in
the raw-output setting, the attack of~\cite{cryptoeprint:2026/1164} in the
hard-label setting, and both the layer-by-layer and the end-to-end pipeline.
The tested models are a $(2+1)$ CNN and a $(2+2)$ CNN, the latter a LeNet-5 variant, both adopted by
these attacks~\cite{cryptoeprint:2026/241,cryptoeprint:2026/1164}. All models
use two convolutional layers with $5\times5$ kernels, stride $1$, Valid padding
and $2\times2$ max pooling, and they differ only in the output-channel numbers.

For every model under each tested setting, the architecture of each layer is
recovered exactly, and the subsequent parameter recovery achieves the same
accuracy as the original attacks.

A further concern is the complexity introduced by architecture recovery.
Its only cost is the time to analyze the recovered vectors that the
point search and equation solving have already produced; since the two
phases share this step, the queries consumed by architecture recovery
are a subset of those consumed by parameter recovery. For the $(2+1)$ CNN in the
raw-output setting, the complete parameter recovery requires $2^{20.48}$
queries and $2^{7.31}$ seconds, whereas the architecture-recovery phase
accounts for $2^{17.71}$ queries and $2^{4.08}$ seconds.

These experimental results fully prove that our attack works well in recovering the 
whole network architecture without any prior architectural knowledge, and it is 
feedback-agnostic. The complete results of all tested models are available in our repository.

\subsection{Visualizing the Architecture Recovery Process}
\label{subsec:target_model}

To demonstrate that our method handles diverse architectural
hyperparameters, we design a CNN model as the target.
Its convolutional layers deliberately employ different kernel sizes,
strides, padding modes, output-channel numbers, and pooling operations.
The complete architecture is shown in
Table~\ref{tab:calibrated_cnn_architecture}.
\begin{table}[!htb]
\centering
\caption{Ground-truth architecture of the $(3+1)$
CNN.}
\label{tab:calibrated_cnn_architecture}
\footnotesize
\setlength{\tabcolsep}{4pt}
\begin{adjustbox}{max width=\textwidth}
\begin{tabular}{clcccccc}
\toprule
Layer & Type & Output shape & $k$ & $s_k$ & $\pi$ & $C_{\mathrm{out}}$
& $\rho\,(p,s_p)$ \\
\midrule
1 & Conv & $1\times32\times32$ & 3 & 1 & Same  & 1 & NoPool \\
2 & Conv & $2\times16\times16$ & 5 & 1 & Same  & 2 & AvgPool $(2,2)$ \\
3 & Conv & $3\times3\times3$   & 3 & 2 & Valid & 3 & MaxPool $(3,2)$ \\
4 & FC & $18$ & -- & -- & -- & -- & -- \\
5 & Output & $10$ & -- & -- & -- & -- & -- \\
\bottomrule
\end{tabular}
\end{adjustbox}
\end{table}

This model was trained on the MNIST training set resized to $32\times32$,
achieving a training accuracy of 94.03\% and a test accuracy of 89.30\%.

We recover the architecture of this model layer by layer in the raw-output
setting. The point collection and the equation solving are taken from the
attack of~\cite{cryptoeprint:2026/241}, and our architecture recovery is
inserted between them. For uniformity, we fix the number of critical points
searched for each layer at $700$. For each task we
visualize the statistics collected during the recovery, showing how these
statistics determine the corresponding architecture
hyperparameter.

\paragraph{\normalfont\bfseries Task 1: Layer-Type Identification.} Figure~\ref{fig:task1_layer_type_counts} reports
$|\mathcal{P}_{\mathrm{full}}|$ and $|\mathcal{P}_{\mathrm{conv}}|$ for three layers.
Since $\mathcal{P}_{\mathrm{conv}}\neq\emptyset$ for every layer, all three
layers are identified as convolutional by Lemma~\ref{lem:layer_point_composition}. 

\begin{figure}[!htb]
\centering
\begin{minipage}[t]{0.32\textwidth}
    \centering
    \includegraphics[width=\linewidth]{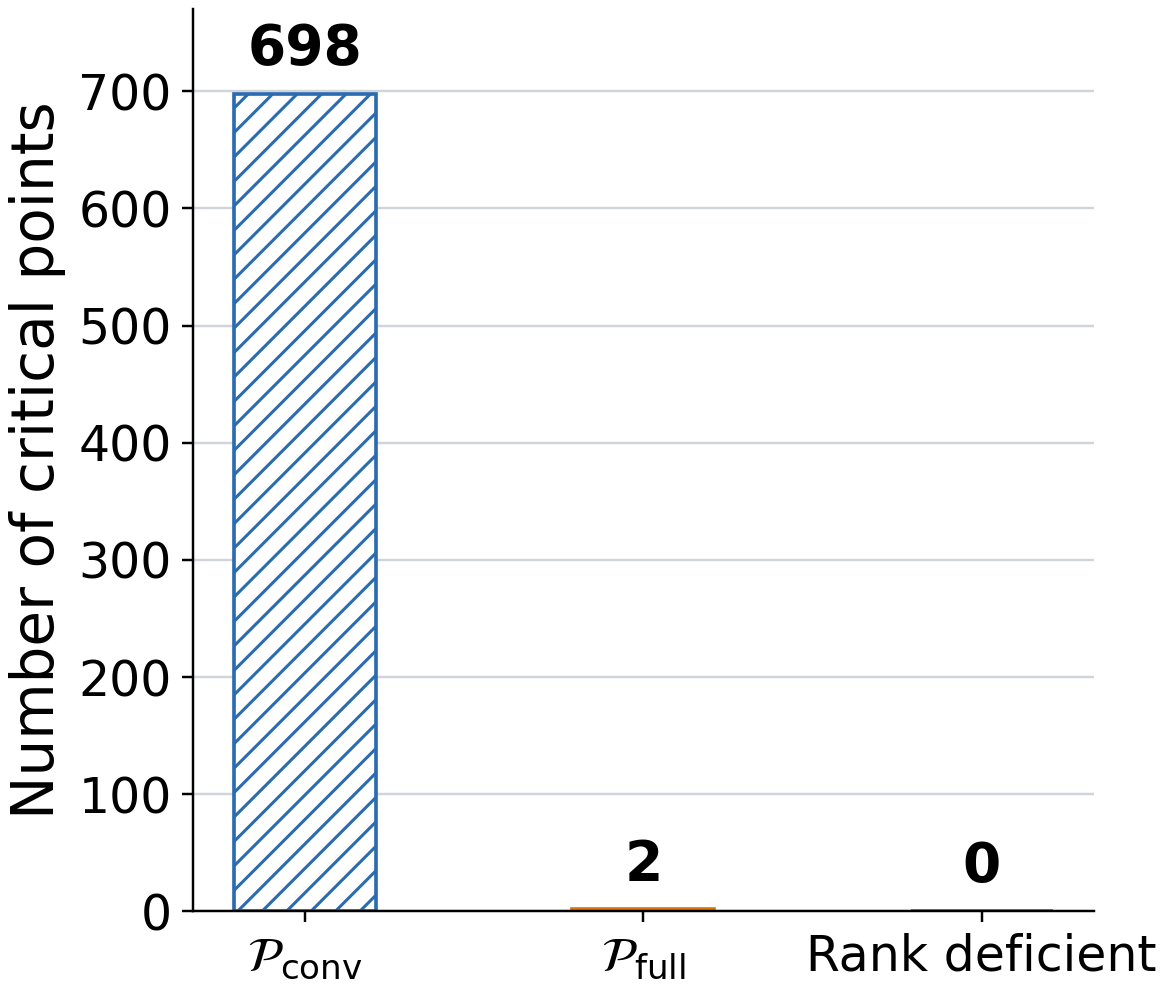}
    \small (a) Layer 1
\end{minipage}
\hfill
\begin{minipage}[t]{0.32\textwidth}
    \centering
    \includegraphics[width=\linewidth]{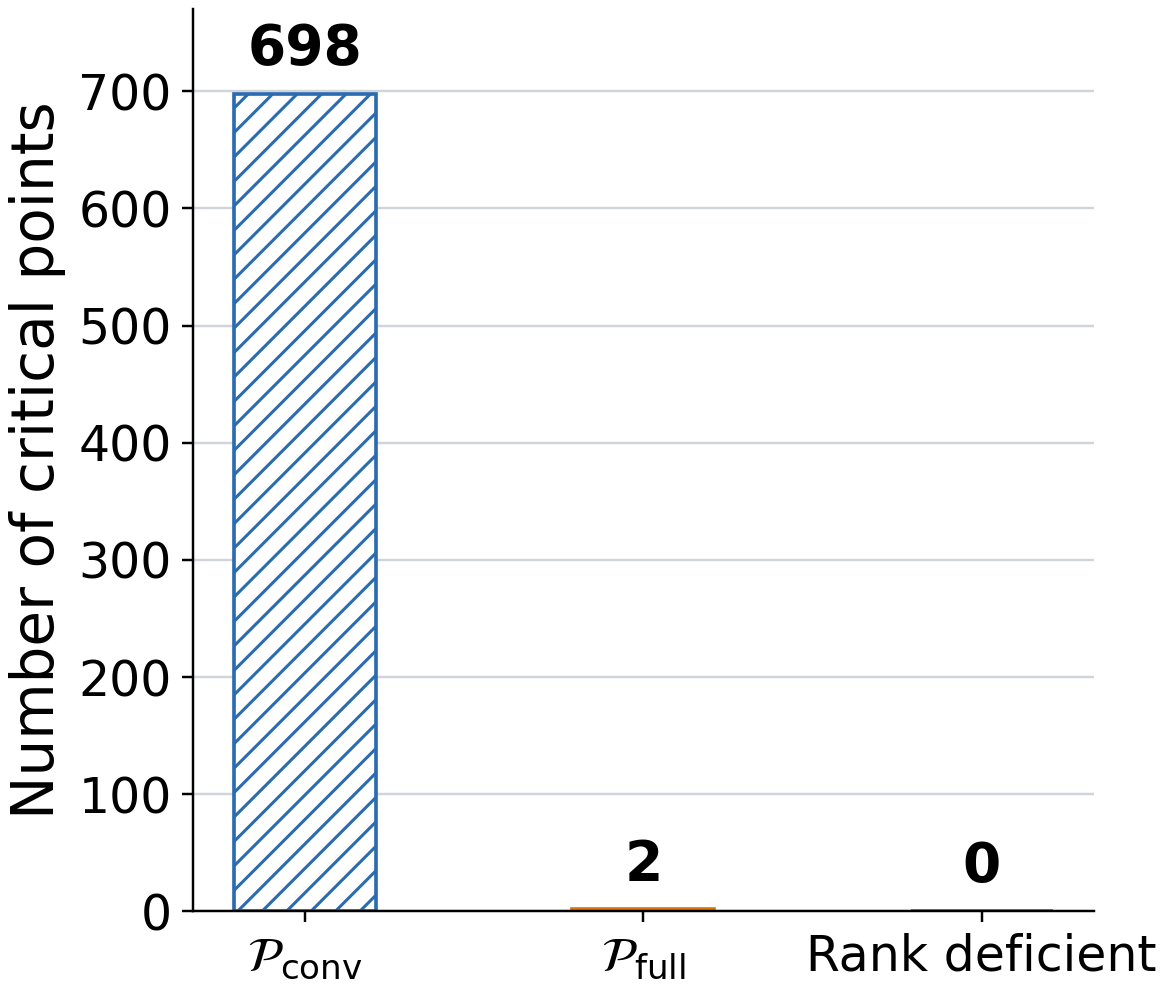}
    \small (b) Layer 2
\end{minipage}
\hfill
\begin{minipage}[t]{0.32\textwidth}
    \centering
    \includegraphics[width=\linewidth]{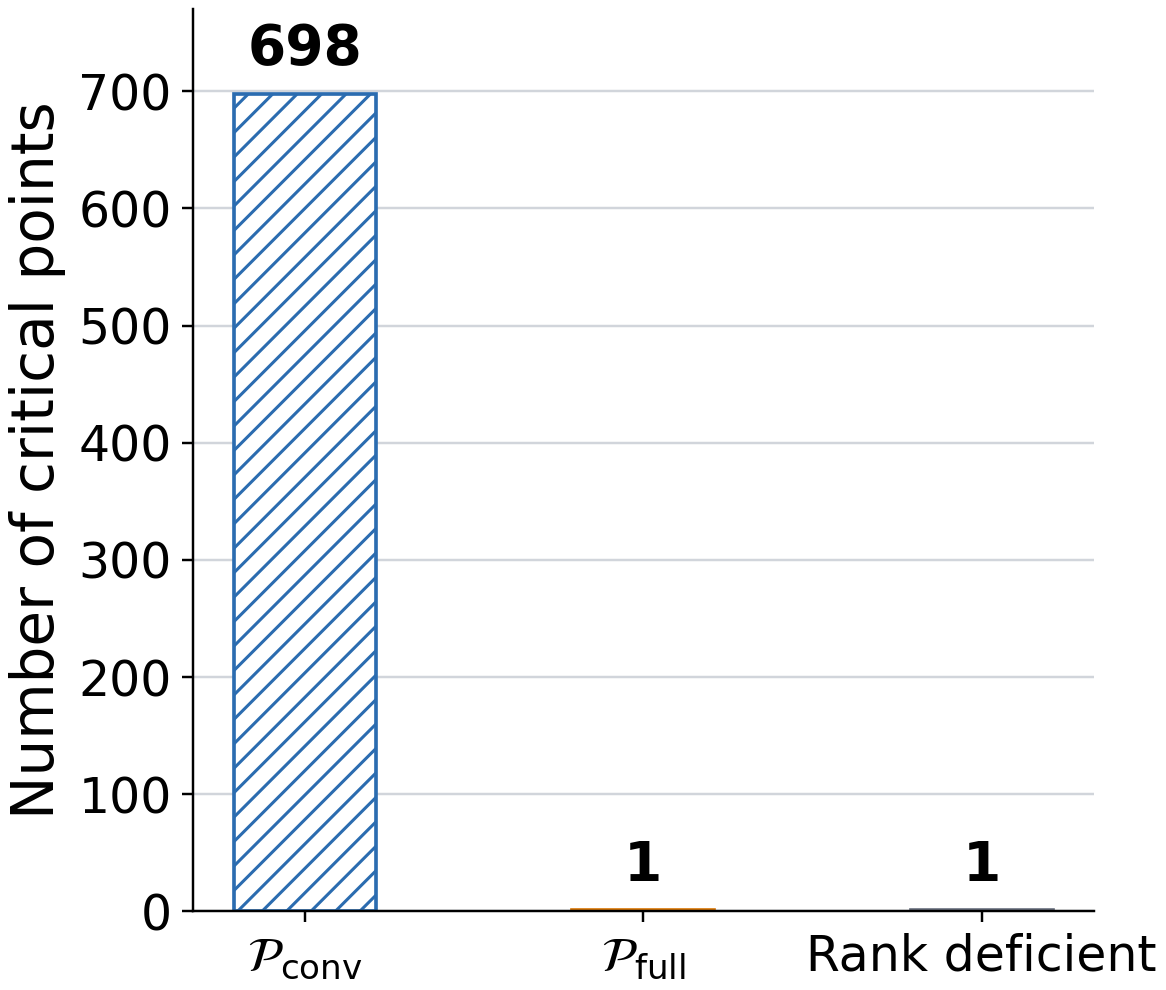}
    \small (c) Layer 3
\end{minipage}
\caption{Task~1: numbers of points in $\mathcal{P}_{\mathrm{conv}}$ and
$\mathcal{P}_{\mathrm{full}}$ for each convolutional target layer.}
\label{fig:task1_layer_type_counts}
\end{figure}

\paragraph{\normalfont\bfseries Task 2: Kernel-Size Identification.}
Figure~\ref{fig:task2_kernel_votes} reports the votes of the points that survive
the boundary and RPCP filtering.
Since the square $\operatorname{Shape}_x$ of every layer concentrates on a
single candidate, the kernel sizes are recovered as $\widehat{k}=3$, $5$, and
$3$ for layers 1--3 by Lemma~\ref{lem:conv_kernel_size_recovery}, matching the
ground truth.

\begin{figure}[!htb]
\centering
\begin{minipage}[t]{0.32\textwidth}
    \centering
    \includegraphics[width=\linewidth]{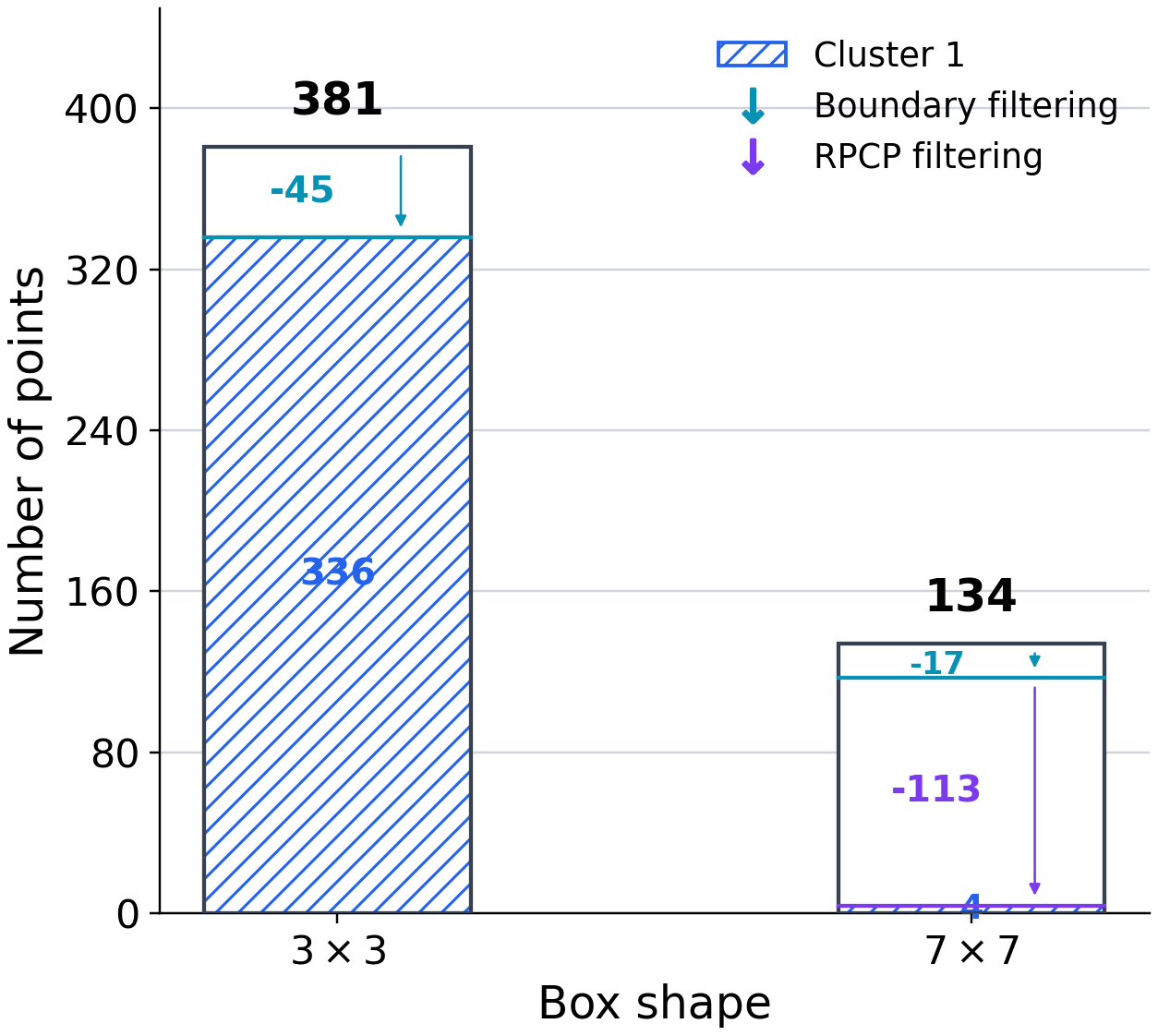}
    \small (a) Layer 1
\end{minipage}
\hfill
\begin{minipage}[t]{0.32\textwidth}
    \centering
    \includegraphics[width=\linewidth]{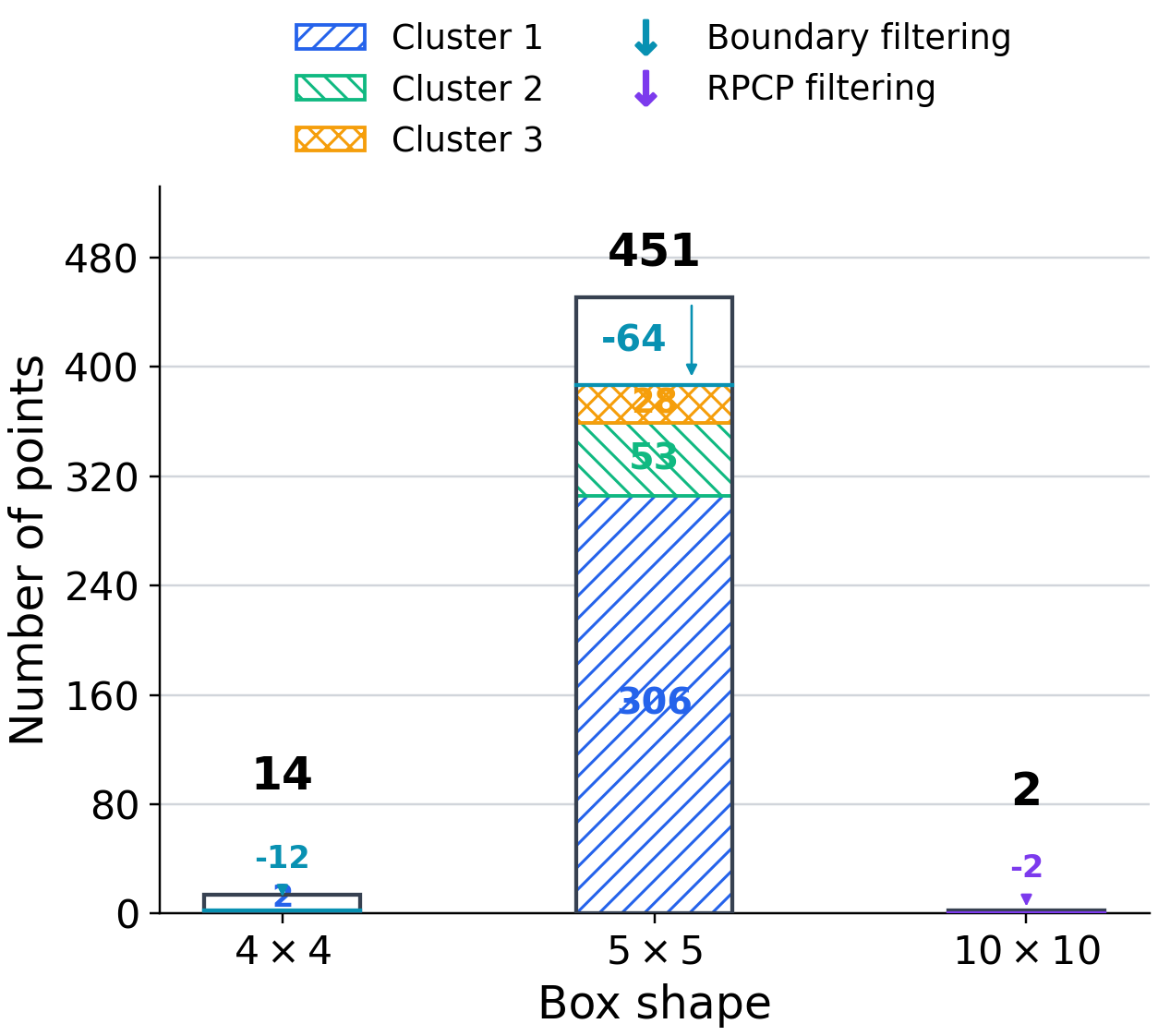}
    \small (b) Layer 2
\end{minipage}
\hfill
\begin{minipage}[t]{0.32\textwidth}
    \centering
    \includegraphics[width=\linewidth]{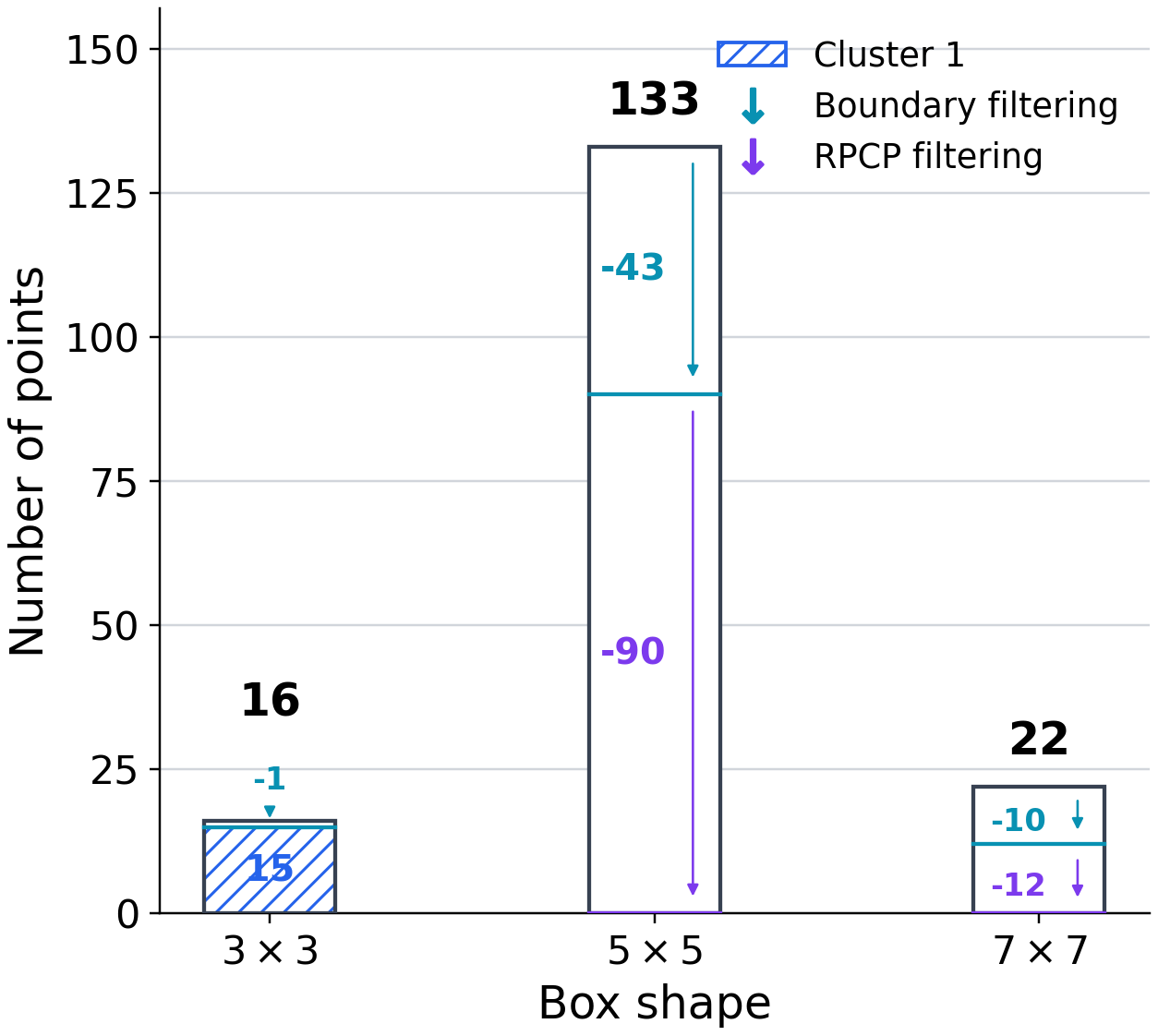}
    \small (c) Layer 3
\end{minipage}
\caption{Task~2: votes of the points that survive the boundary and RPCP
filtering, grouped by candidate kernel size.}
\label{fig:task2_kernel_votes}
\end{figure}

\paragraph{\normalfont\bfseries Task 3: Stride Identification.}
Figure~\ref{fig:task2_window_starts} shows the starting positions of all points
in $\widetilde{\mathcal{P}}_{\mathrm{conv}}^{\circ}$.
Since the minimum difference between the starting positions is $1$,
$1$, and $2$ for layers 1--3, the strides are recovered by
Lemma~\ref{lem:conv_window_start_recovery}, matching the ground truth.

\begin{figure}[!htb]
\centering
\begin{minipage}[t]{0.32\textwidth}
    \centering
    \includegraphics[width=\linewidth]{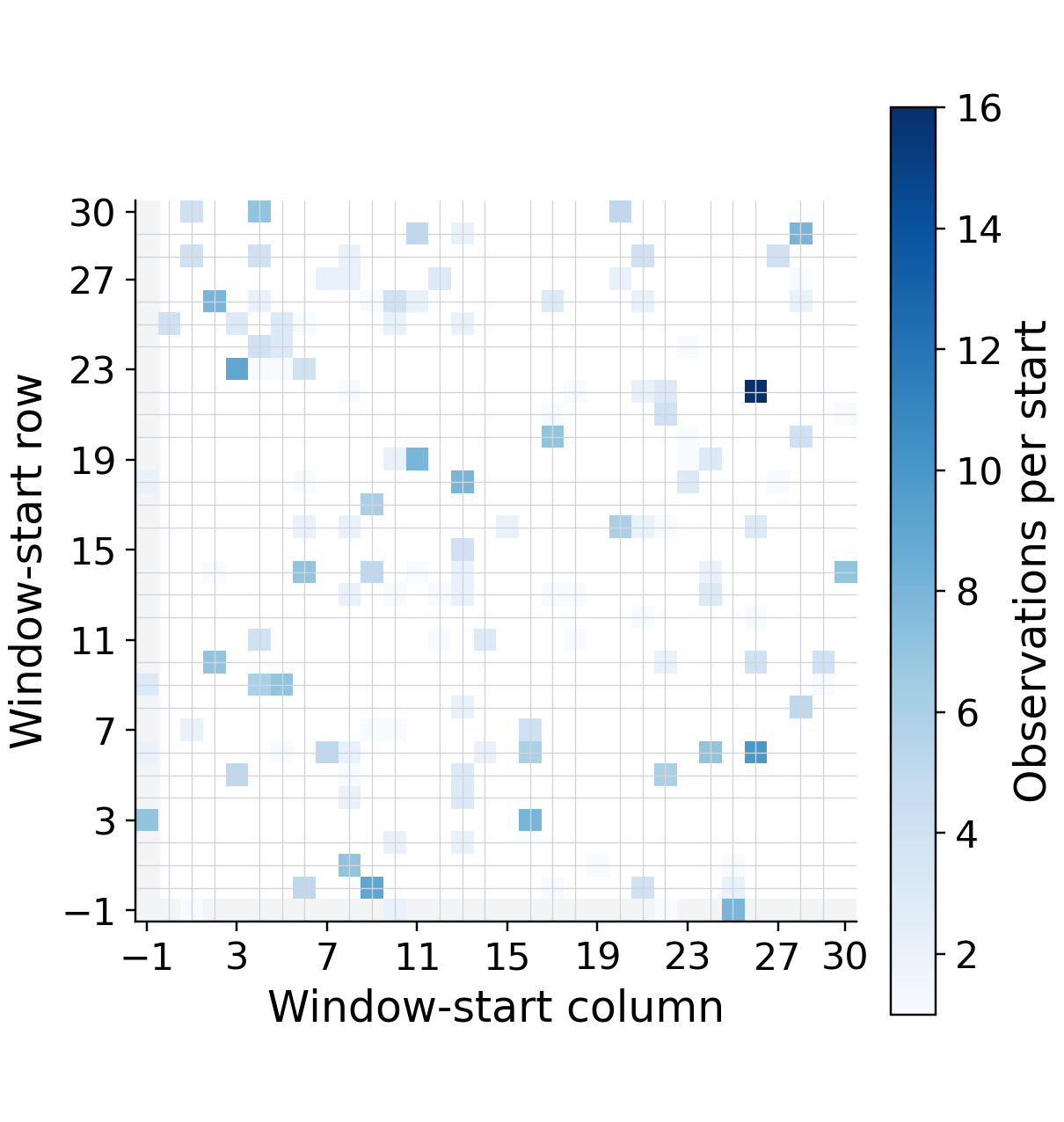}
    \small (a) Layer 1
\end{minipage}
\hfill
\begin{minipage}[t]{0.32\textwidth}
    \centering
    \includegraphics[width=\linewidth]{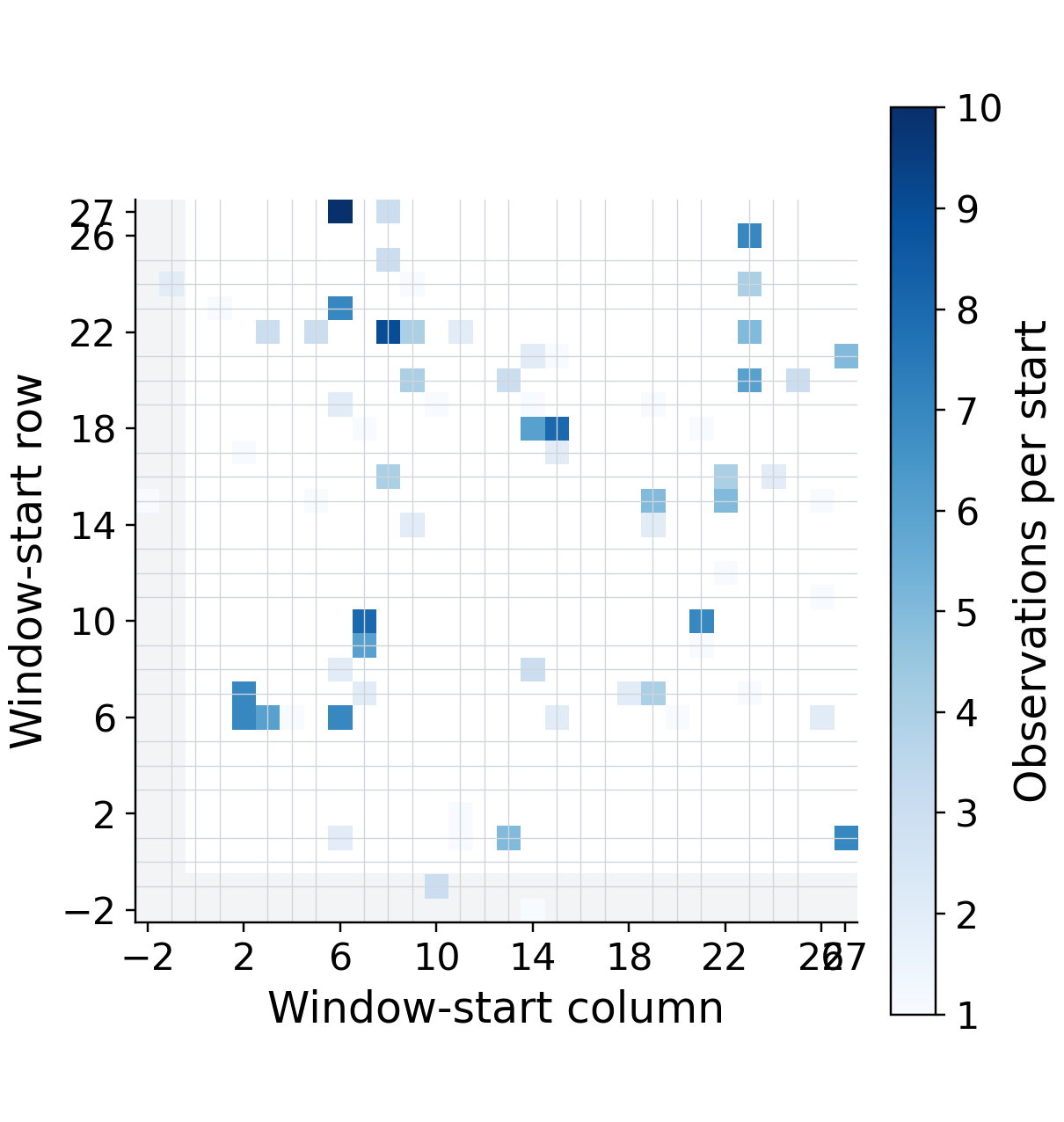}
    \small (b) Layer 2
\end{minipage}
\hfill
\begin{minipage}[t]{0.32\textwidth}
    \centering
    \includegraphics[width=\linewidth]{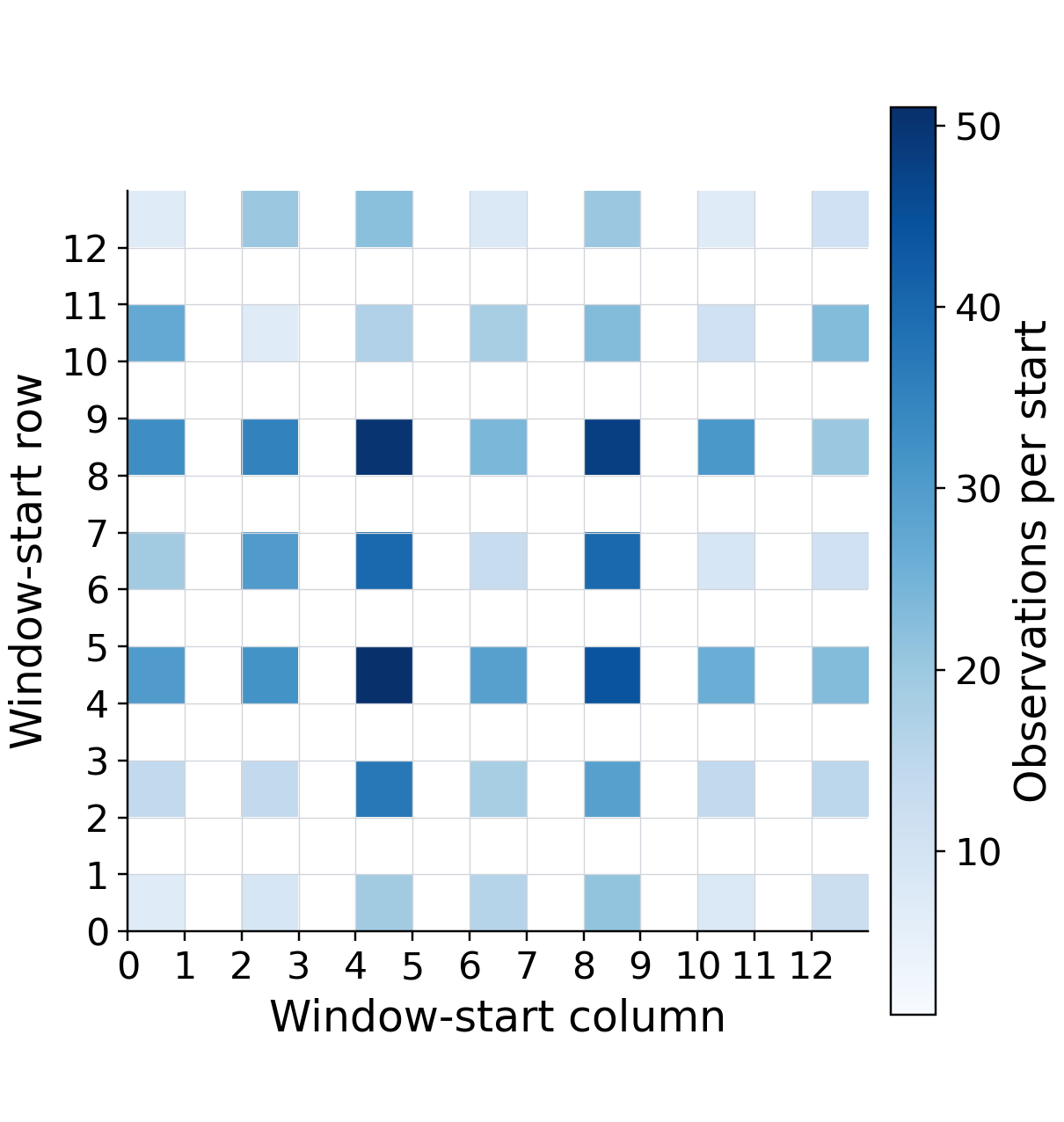}
    \small (c) Layer 3
\end{minipage}
\caption{Task~3: window-start distributions of the points in
$\widetilde{\mathcal{P}}_{\mathrm{conv}}^{\circ}$.}
\label{fig:task2_window_starts}
\end{figure}

\paragraph{\normalfont\bfseries Task 4: Padding-Mode Identification.}
Figure~\ref{fig:task3_same_evidence} shows the truncated points identified on the
four boundaries of each layer.
Since truncated points appear on all four boundaries in layers 1 and 2 and on
none of them in layer 3, the padding modes are recovered as Same, Same, and
Valid for layers 1--3 by Lemma~\ref{lem:padding_boundary}, matching the ground
truth.

\begin{figure}[!htb]
\centering
\begin{minipage}[t]{0.32\textwidth}
    \centering
    \includegraphics[width=\linewidth]{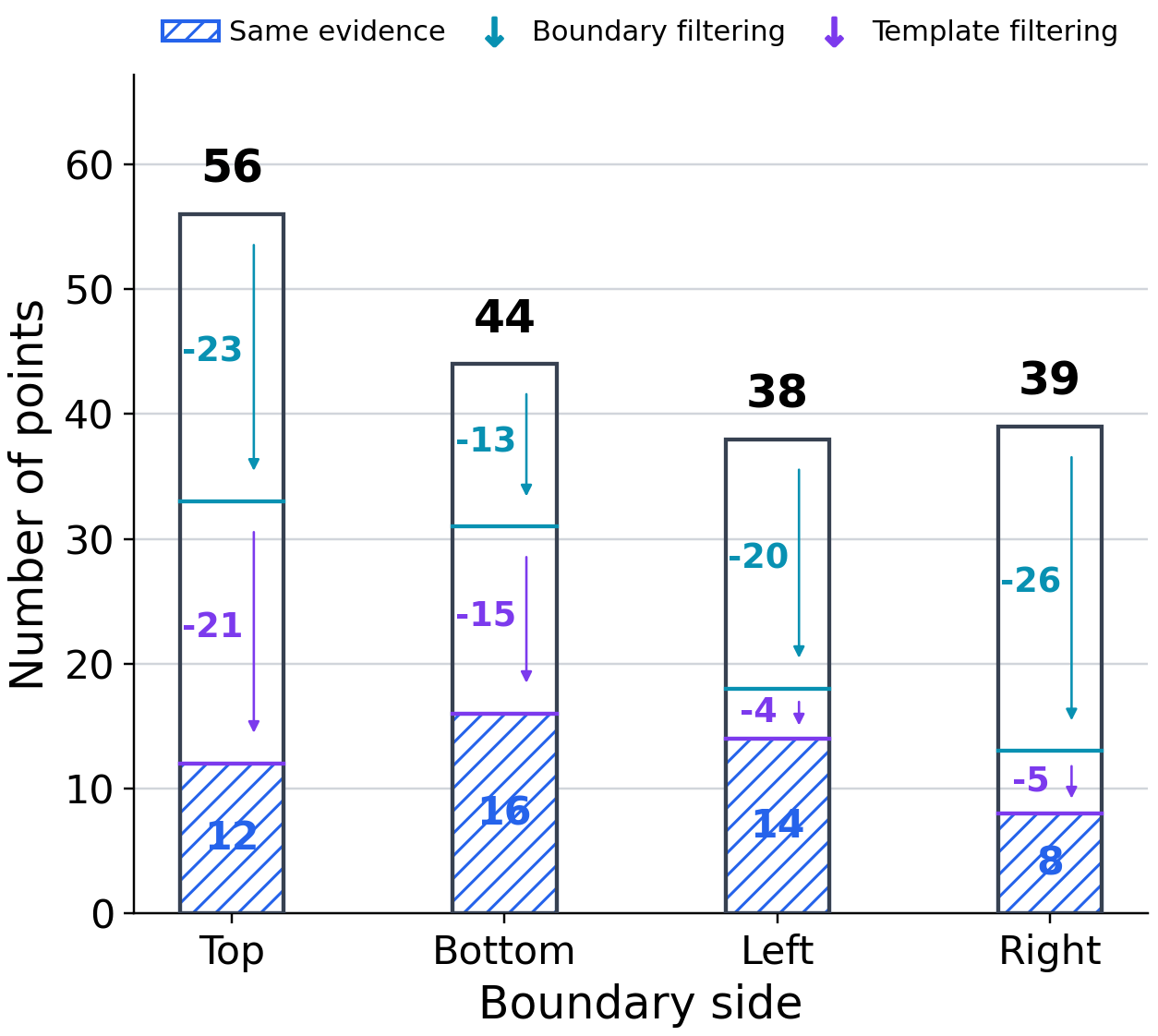}
    \small (a) Layer 1
\end{minipage}
\hfill
\begin{minipage}[t]{0.32\textwidth}
    \centering
    \includegraphics[width=\linewidth]{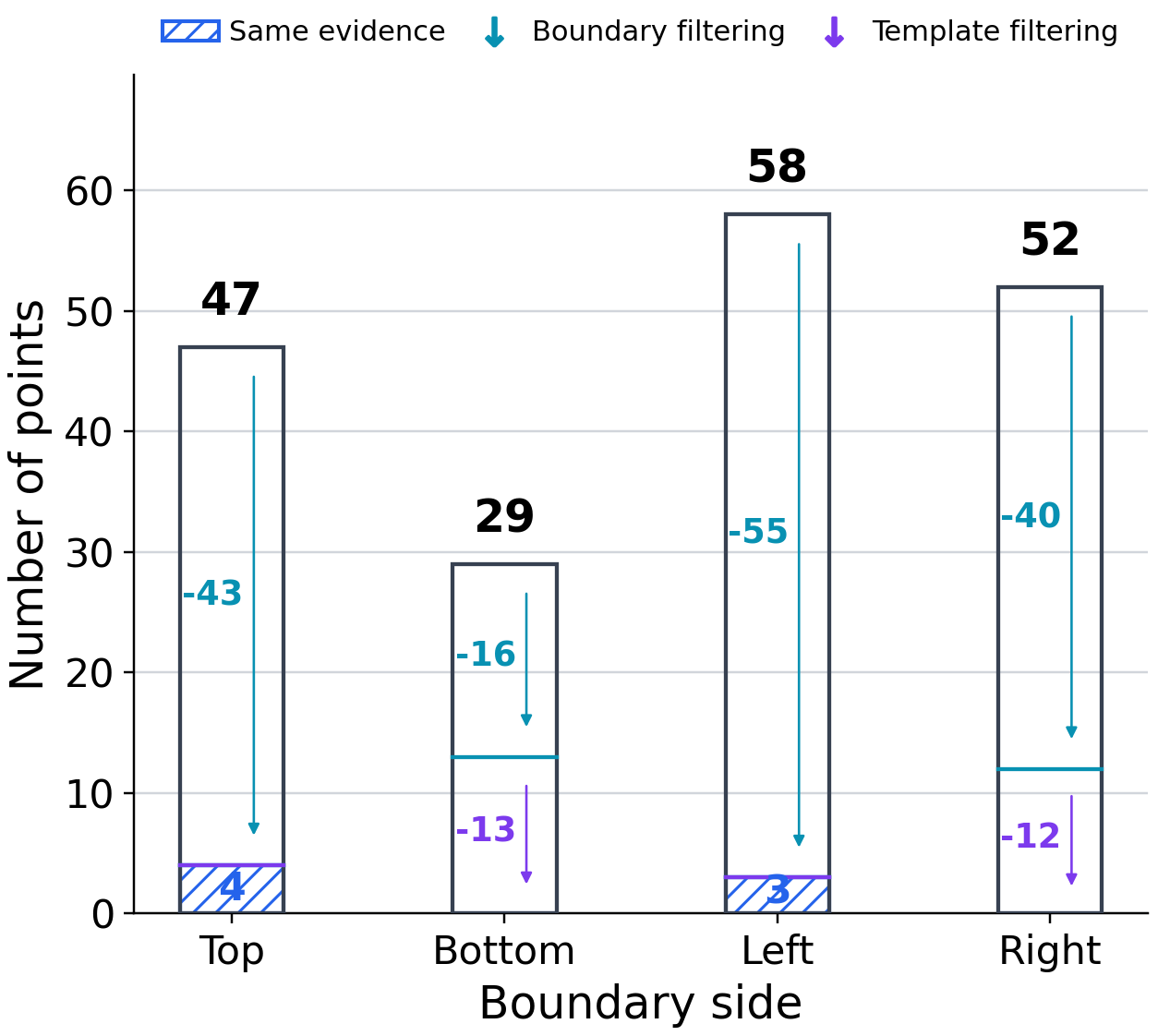}
    \small (b) Layer 2
\end{minipage}
\hfill
\begin{minipage}[t]{0.32\textwidth}
    \centering
    \includegraphics[width=\linewidth]{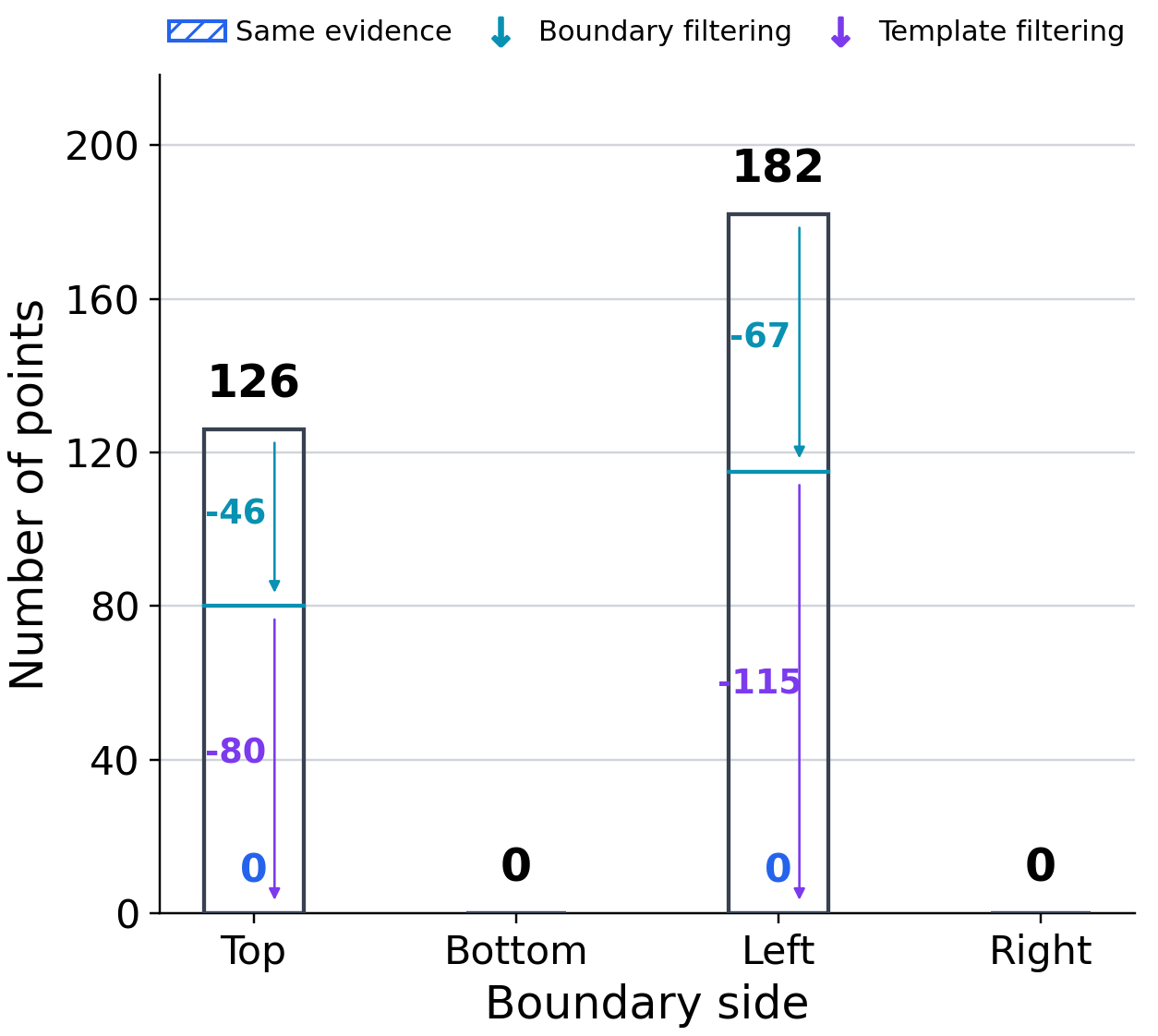}
    \small (c) Layer 3
\end{minipage}
\caption{Task~4: the truncated points on the four boundaries of each layer. Cyan
arrows mark the points removed by the boundary filtering, and purple arrows mark
the points that the template filtering rejects.}
\label{fig:task3_same_evidence}
\end{figure}

\paragraph{\normalfont\bfseries Task 5: Output-Channel Identification.}
Figure~\ref{fig:task4_output_channels} shows how the RPCP and PSP records of each
layer are grouped into clusters.
Since the records of each layer form $1$, $2$, and $3$ clusters, the numbers of
output channels are recovered,
matching the ground truth.

\begin{figure}[!htb]
\centering
\begin{minipage}[t]{0.32\textwidth}
    \centering
    \includegraphics[width=\linewidth]{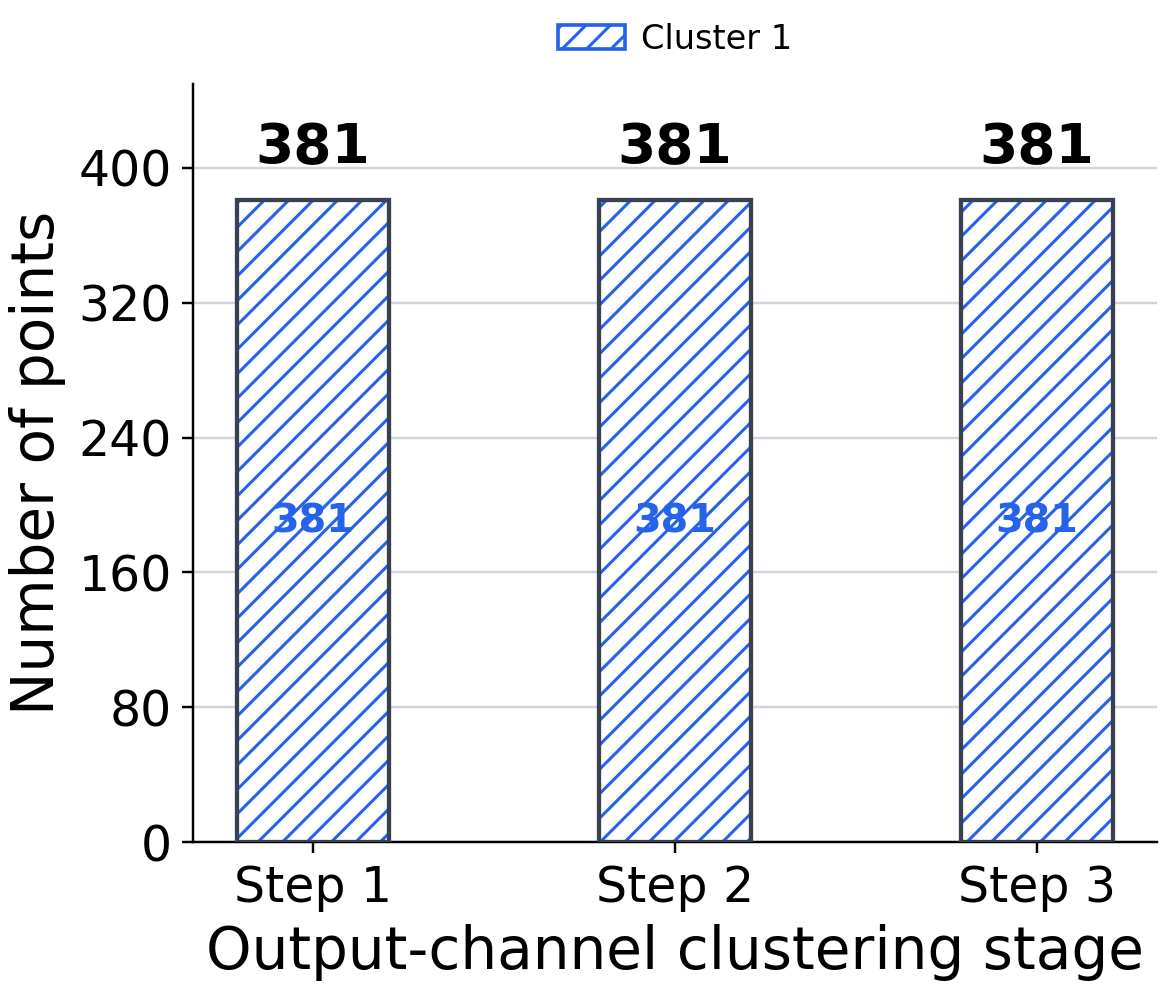}
    \small (a) Layer 1
\end{minipage}
\hfill
\begin{minipage}[t]{0.32\textwidth}
    \centering
    \includegraphics[width=\linewidth]{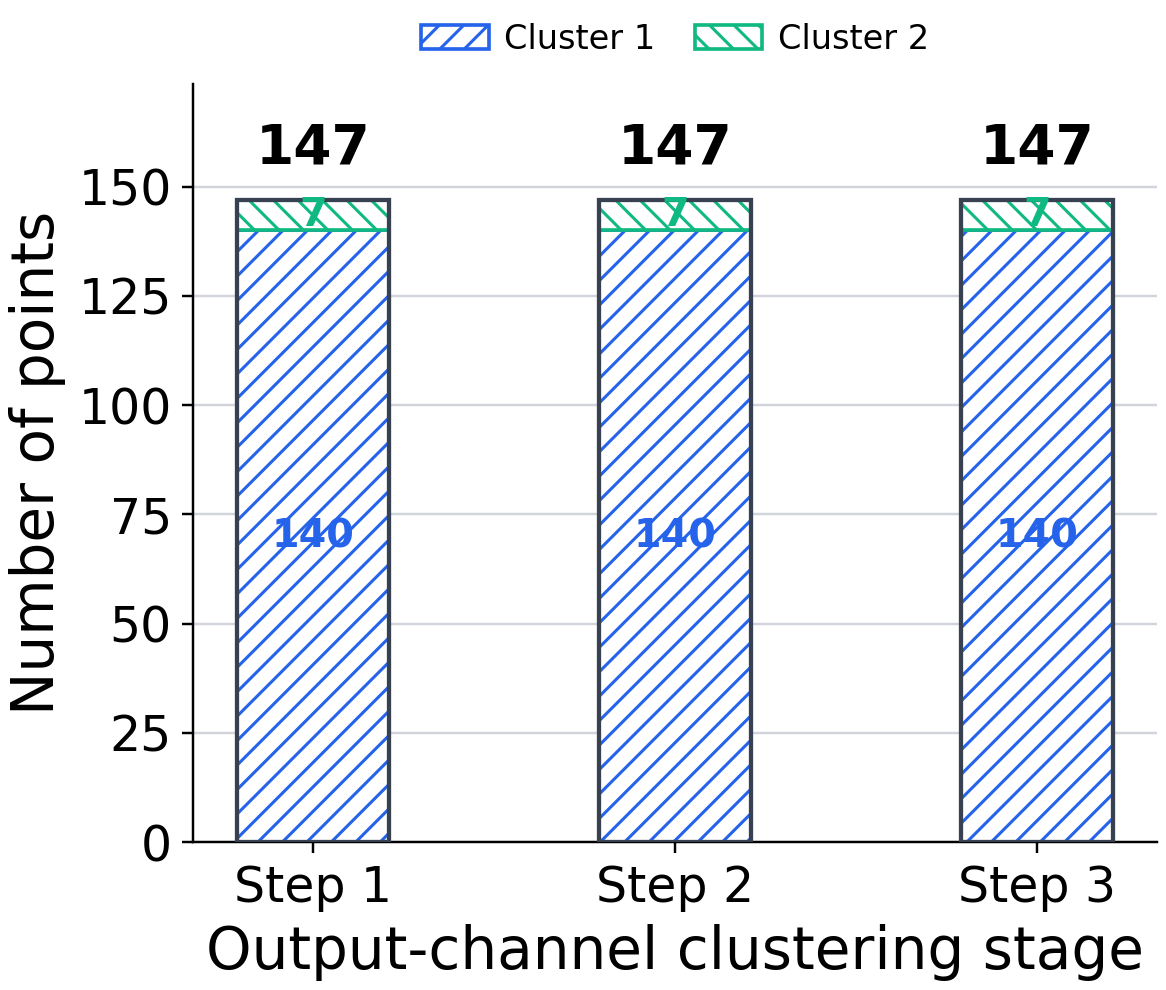}
    \small (b) Layer 2
\end{minipage}
\hfill
\begin{minipage}[t]{0.32\textwidth}
    \centering
    \includegraphics[width=\linewidth]{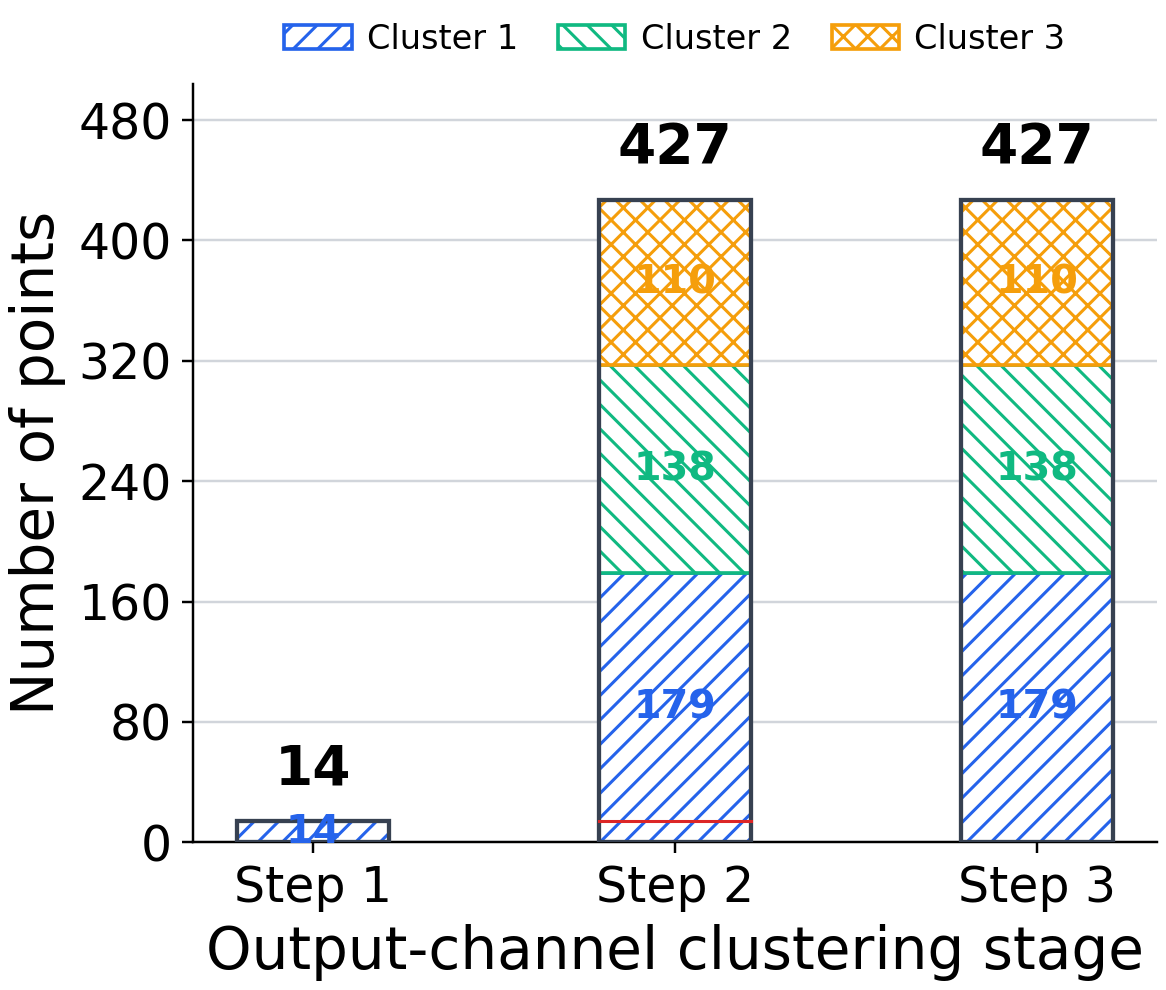}
    \small (c) Layer 3
\end{minipage}
\caption{Task~5: clustering of the RPCP and PSP records. Step~1 clusters the
stable RPCP ratio records, Step~2 adds the PSPs whose numerical ratios match the
cluster centers, and Step~3 adds a new cluster for the remaining PSPs.}
\label{fig:task4_output_channels}
\end{figure}

\paragraph{\normalfont\bfseries Task 6: Pooling-Structure Identification.}
Figure~\ref{fig:task5_pooling_identification} shows the $\widetilde{v}_x^{(L+1)}$ of
layers 1 and 2 and the max-pooling parameter evidence of layer 3. Layer 1 shows
no repeated blocks, so $\widehat{\rho}^{(1)} = \mathrm{NoPool}$. Layer 2
shows repeated constant blocks, so $\widehat{\rho}^{(2)} = \mathrm{AvgPool}$. The pooling parameters are
identified by Corollary~\ref{cor:avg_pool_parameter_estimate}. For layer 3, PSPs identify $\widehat{\rho}^{(3)} = \mathrm{MaxPool}$, and the anchor positions
$\mathcal{R}_{\max}$ and $\mathcal{V}_{\max}$ give the starting points, from which the pooling window size and the stride follow, by
Proposition~\ref{prop:psp_window_start_offset} and Proposition~\ref{prop:max_pooling_stride}.
The pooling structures are recovered as NoPool, AvgPool $(2,2)$, and
MaxPool $(3,2)$ for layers 1--3, matching the ground truth.

%\begingroup
%\setlength{\intextsep}{14pt}
\begin{figure}[!htb]
\centering
\begin{minipage}[t]{0.32\textwidth}
\centering
\includegraphics[width=\linewidth]{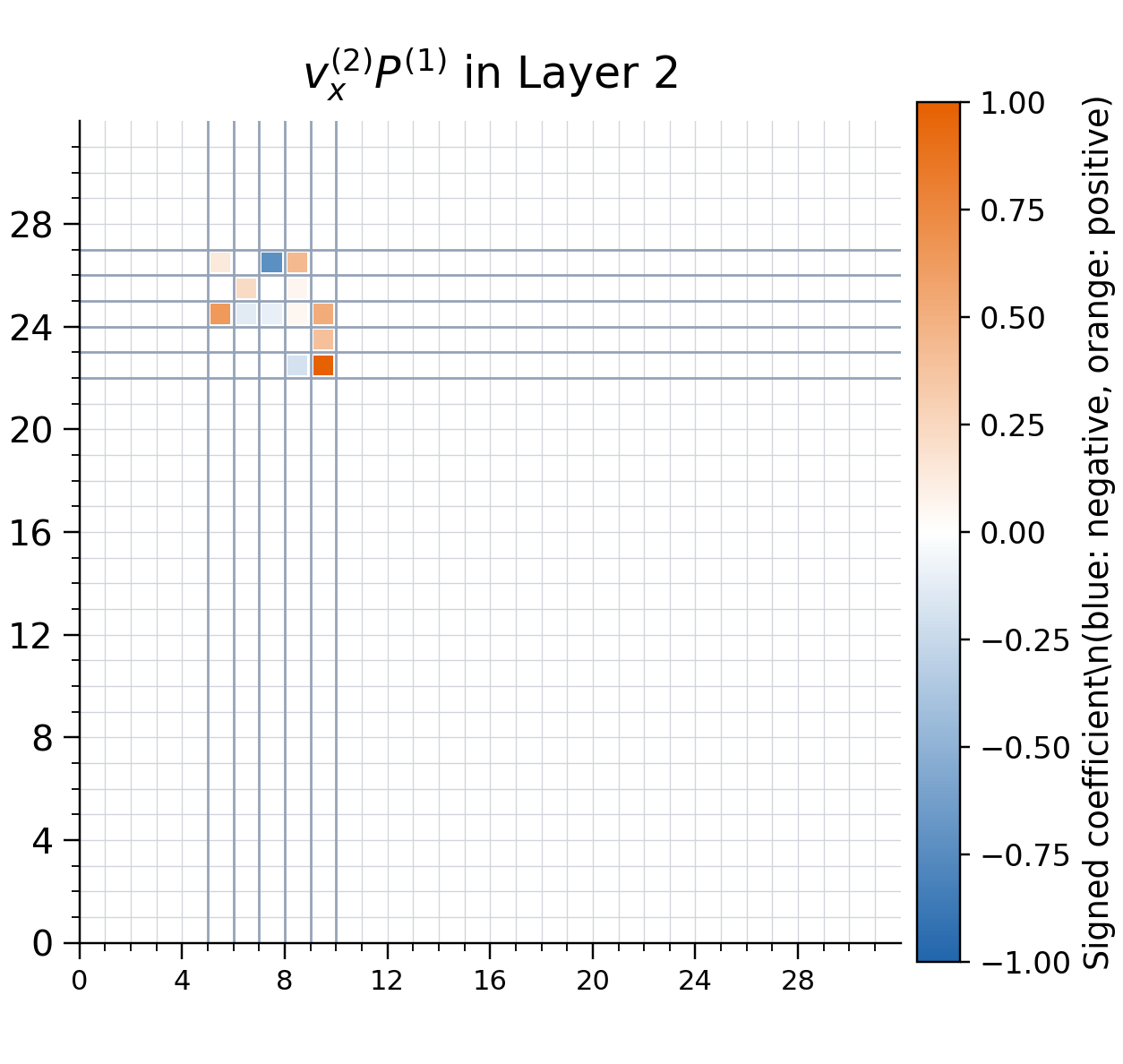}
\small (a) Layer 1
\end{minipage}\hfill
\begin{minipage}[t]{0.32\textwidth}
\centering
\includegraphics[width=\linewidth]{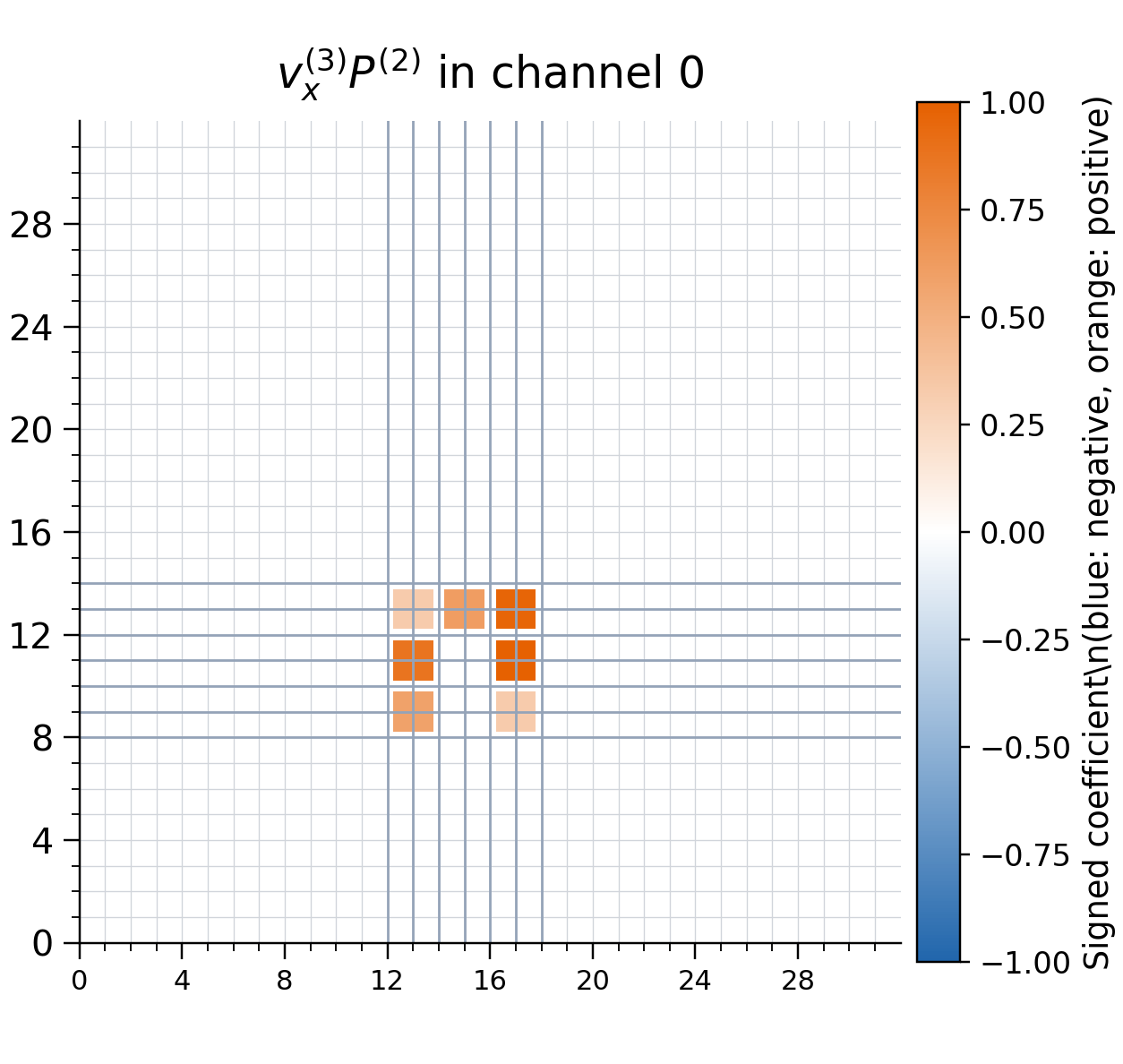}
\small (b) Layer 2
\end{minipage}\hfill
\begin{minipage}[t]{0.32\textwidth}
\centering
\includegraphics[width=\linewidth]{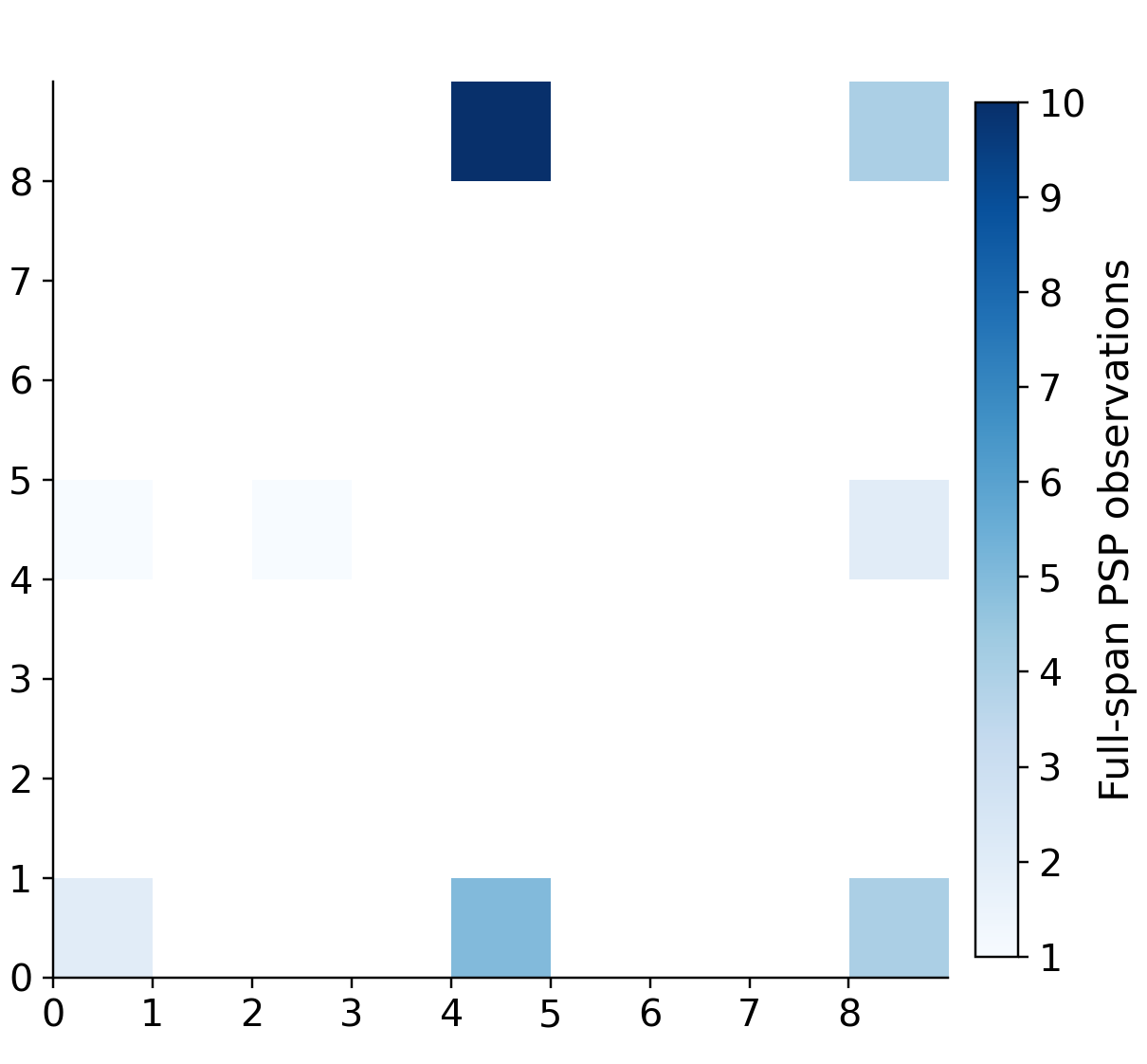}
\small (c) Layer 3
\end{minipage}
\caption{Task~6: Panel (a) shows a representative restored vector of layer 1
(up to the factor $p^2$), panel (b) the restored vector of layer 2 (channel 0),
and panel (c) the max-pooling parameter evidence of layer 3.}
\label{fig:task5_pooling_identification}
\end{figure}

% \vspace{-0.2cm}
\section{Conclusions}

In this paper, we initiated the study of architecture recovery for
convolutional neural networks. Focusing on the convolutional layers, we
proposed an attack that recovers the architecture of each layer before its
parameters and, combined with existing methods for the fully
connected layers, completes a cryptanalytic extraction framework for CNNs.
Our approach is built on a direct observation: the recovered
vectors reveal the network architecture in their spatial structure.
We identified three such signals, the
sparsity consistency with the convolution receptive field, the numerical
consistency with the kernel parameters, and the structural consistency with
the pooling operation, and used them to recover the layer type, the kernel
size and stride, the padding mode, the output-channel number, and the pooling
structure.
Experiments on
a range of CNNs, including a purpose-built model with diverse
architectural hyperparameters and the models
adopted by existing attacks, verified the effectiveness of our attack.

Additionally, our attack complements the established architecture-recovery
results for fully connected networks: existing methods recover the layer
widths of a fully connected network along with its parameters, whereas we
show that the architecture of the convolutional layers of a CNN is
directly revealed by the spatial geometry of the recovered weight vectors.
The two routes cover the fully connected and the convolutional layers,
respectively, and together they remove the architectural prior from the
cryptanalytic extraction of CNNs.
\appendix
%\input{data/10-appendix}

% \section*{Acknowledgement}
% This work is supported by the National Key Research and Development Program of China 
% (2018YFB0803405, 2017YFA0303903). 

\clearpage
\bibliographystyle{splncs04}
\bibliography{reference}

@inproceedings{DBLP:journals/iacr/ItoMT25,
  author       = {Akira Ito and
                  Takayuki Miura and
                  Yosuke Todo},
  editor       = {Nadia Heninger and
                  Mike Rosulek},
  title        = {Is the Hard-Label Cryptanalytic Model Extraction Really Polynomial?},
  booktitle    = {Advances in Cryptology - {CRYPTO} 2026 - 46th Annual International
                  Cryptology Conference, Santa Barbara, CA, USA, August 17-20, 2026,
                  Proceedings, Part {VII}},
  series       = {Lecture Notes in Computer Science},
  volume       = {16806},
  pages        = {67--98},
  publisher    = {Springer},
  year         = {2026},
  url          = {https://doi.org/10.1007/978-3-032-35415-0\_3},
  doi          = {10.1007/978-3-032-35415-0\_3}
}

@inproceedings{DBLP:conf/asiacrypt/ChenDMSWYW25,
  author       = {Yi Chen and
                  Xiaoyang Dong and
                  Ruijie Ma and
                  Yantian Shen and
                  Anyu Wang and
                  Hongbo Yu and
                  Xiaoyun Wang},
  editor       = {Goichiro Hanaoka and
                  Bo{-}Yin Yang},
  title        = {Delving into Cryptanalytic Extraction of PReLU Neural Networks},
  booktitle    = {Advances in Cryptology - {ASIACRYPT} 2025 - 31st International Conference
                  on the Theory and Application of Cryptology and Information Security,
                  Melbourne, VIC, Australia, December 8-12, 2025, Proceedings, Part
                  {II}},
  series       = {Lecture Notes in Computer Science},
  volume       = {16246},
  pages        = {576--607},
  publisher    = {Springer},
  year         = {2025},
  url          = {https://doi.org/10.1007/978-981-95-5096-8\_18},
  doi          = {10.1007/978-981-95-5096-8\_18},
  bibsource    = {dblp computer science bibliography, https://dblp.org}
}

@article{DBLP:journals/tnn/Baum91,
  author       = {Eric B. Baum},
  title        = {Neural net algorithms that learn in polynomial time from examples
                  and queries},
  journal      = {{IEEE} Trans. Neural Networks},
  volume       = {2},
  number       = {1},
  pages        = {5--19},
  year         = {1991},
  url          = {https://doi.org/10.1109/72.80287},
  doi          = {10.1109/72.80287},
  bibsource    = {dblp computer science bibliography, https://dblp.org}
}

@inproceedings{DBLP:conf/uss/TramerZJRR16,
  author       = {Florian Tram{\`{e}}r and
                  Fan Zhang and
                  Ari Juels and
                  Michael K. Reiter and
                  Thomas Ristenpart},
  editor       = {Thorsten Holz and
                  Stefan Savage},
  title        = {Stealing Machine Learning Models via Prediction APIs},
  booktitle    = {25th {USENIX} Security Symposium, {USENIX} Security 16, Austin, TX,
                  USA, August 10-12, 2016},
  pages        = {601--618},
  publisher    = {{USENIX} Association},
  year         = {2016},
  url          = {https://www.usenix.org/conference/usenixsecurity16/technical-sessions/presentation/tramer},
  bibsource    = {dblp computer science bibliography, https://dblp.org}
}

@inproceedings{DBLP:conf/uss/JagielskiCBKP20,
  author       = {Matthew Jagielski and
                  Nicholas Carlini and
                  David Berthelot and
                  Alex Kurakin and
                  Nicolas Papernot},
  editor       = {Srdjan Capkun and
                  Franziska Roesner},
  title        = {High Accuracy and High Fidelity Extraction of Neural Networks},
  booktitle    = {29th {USENIX} Security Symposium, {USENIX} Security 2020, August 12-14,
                  2020},
  pages        = {1345--1362},
  publisher    = {{USENIX} Association},
  year         = {2020},
  url          = {https://www.usenix.org/conference/usenixsecurity20/presentation/jagielski},
  bibsource    = {dblp computer science bibliography, https://dblp.org}
}

@inproceedings{DBLP:conf/icml/RolnickK20,
  author       = {David Rolnick and
                  Konrad P. Kording},
  title        = {Reverse-engineering deep ReLU networks},
  booktitle    = {Proceedings of the 37th International Conference on Machine Learning,
                  {ICML} 2020, 13-18 July 2020, Virtual Event},
  series       = {Proceedings of Machine Learning Research},
  volume       = {119},
  pages        = {8178--8187},
  publisher    = {{PMLR}},
  year         = {2020},
  url          = {http://proceedings.mlr.press/v119/rolnick20a.html},
  bibsource    = {dblp computer science bibliography, https://dblp.org}
}

@inproceedings{DBLP:conf/iclr/DanielyG23,
  author       = {Amit Daniely and
                  Elad Granot},
  title        = {An Exact Poly-Time Membership-Queries Algorithm for Extracting a Three-Layer
                  ReLU Network},
  booktitle    = {The Eleventh International Conference on Learning Representations,
                  {ICLR} 2023, Kigali, Rwanda, May 1-5, 2023},
  publisher    = {OpenReview.net},
  year         = {2023},
  url          = {https://openreview.net/forum?id=-CoNloheTs},
  bibsource    = {dblp computer science bibliography, https://dblp.org}
}

@inproceedings{DBLP:conf/crypto/CarliniJM20,
  author       = {Nicholas Carlini and
                  Matthew Jagielski and
                  Ilya Mironov},
  editor       = {Daniele Micciancio and
                  Thomas Ristenpart},
  title        = {Cryptanalytic Extraction of Neural Network Models},
  booktitle    = {Advances in Cryptology - {CRYPTO} 2020 - 40th Annual International
                  Cryptology Conference, {CRYPTO} 2020, Santa Barbara, CA, USA, August
                  17-21, 2020, Proceedings, Part {III}},
  series       = {Lecture Notes in Computer Science},
  volume       = {12172},
  pages        = {189--218},
  publisher    = {Springer},
  year         = {2020},
  url          = {https://doi.org/10.1007/978-3-030-56877-1\_7},
  doi          = {10.1007/978-3-030-56877-1\_7},
  bibsource    = {dblp computer science bibliography, https://dblp.org}
}

@inproceedings{DBLP:conf/eurocrypt/CanalesMartinezCHRSS24,
  author       = {Isaac Andr{\'{e}}s Canales Martinez and
                  Jorge Ch{\'{a}}vez{-}Saab and
                  Anna Hambitzer and
                  Francisco Rodr{\'{\i}}guez{-}Henr{\'{\i}}quez and
                  Nitin Satpute and
                  Adi Shamir},
  editor       = {Marc Joye and
                  Gregor Leander},
  title        = {Polynomial Time Cryptanalytic Extraction of Neural Network Models},
  booktitle    = {Advances in Cryptology - {EUROCRYPT} 2024 - 43rd Annual International
                  Conference on the Theory and Applications of Cryptographic Techniques,
                  Zurich, Switzerland, May 26-30, 2024, Proceedings, Part {III}},
  series       = {Lecture Notes in Computer Science},
  volume       = {14653},
  pages        = {3--33},
  publisher    = {Springer},
  year         = {2024},
  url          = {https://doi.org/10.1007/978-3-031-58734-4\_1},
  doi          = {10.1007/978-3-031-58734-4\_1},
  bibsource    = {dblp computer science bibliography, https://dblp.org}
}

@inproceedings{DBLP:conf/asiacrypt/ChenDGSWW24,
  author       = {Yi Chen and
                  Xiaoyang Dong and
                  Jian Guo and
                  Yantian Shen and
                  Anyu Wang and
                  Xiaoyun Wang},
  editor       = {Kai{-}Min Chung and
                  Yu Sasaki},
  title        = {Hard-Label Cryptanalytic Extraction of Neural Network Models},
  booktitle    = {Advances in Cryptology - {ASIACRYPT} 2024 - 30th International Conference
                  on the Theory and Application of Cryptology and Information Security,
                  Kolkata, India, December 9-13, 2024, Proceedings, Part {VIII}},
  series       = {Lecture Notes in Computer Science},
  volume       = {15491},
  pages        = {207--236},
  publisher    = {Springer},
  year         = {2024},
  url          = {https://doi.org/10.1007/978-981-96-0944-4\_7},
  doi          = {10.1007/978-981-96-0944-4\_7},
  bibsource    = {dblp computer science bibliography, https://dblp.org}
}

@inproceedings{DBLP:conf/eurocrypt/CarliniCHRS25,
  author       = {Nicholas Carlini and
                  Jorge Ch{\'{a}}vez{-}Saab and
                  Anna Hambitzer and
                  Francisco Rodr{\'{\i}}guez{-}Henr{\'{\i}}quez and
                  Adi Shamir},
  editor       = {Serge Fehr and
                  Pierre{-}Alain Fouque},
  title        = {Polynomial Time Cryptanalytic Extraction of Deep Neural Networks in
                  the Hard-Label Setting},
  booktitle    = {Advances in Cryptology - {EUROCRYPT} 2025 - 44th Annual International
                  Conference on the Theory and Applications of Cryptographic Techniques,
                  Madrid, Spain, May 4-8, 2025, Proceedings, Part {I}},
  series       = {Lecture Notes in Computer Science},
  volume       = {15601},
  pages        = {364--396},
  publisher    = {Springer},
  year         = {2025},
  url          = {https://doi.org/10.1007/978-3-031-91107-1\_13},
  doi          = {10.1007/978-3-031-91107-1\_13},
  bibsource    = {dblp computer science bibliography, https://dblp.org}
}

@inproceedings{DBLP:conf/nips/FoersterMSH24,
  author       = {Hanna Foerster and
                  Robert D. Mullins and
                  Ilia Shumailov and
                  Jamie Hayes},
  editor       = {Amir Globerson and
                  Lester Mackey and
                  Danielle Belgrave and
                  Angela Fan and
                  Ulrich Paquet and
                  Jakub M. Tomczak and
                  Cheng Zhang},
  title        = {Beyond Slow Signs in High-fidelity Model Extraction},
  booktitle    = {Advances in Neural Information Processing Systems 38: Annual Conference
                  on Neural Information Processing Systems 2024, NeurIPS 2024, Vancouver,
                  BC, Canada, December 10 - 15, 2024},
  year         = {2024},
  url          = {http://papers.nips.cc/paper\_files/paper/2024/hash/22ae669a35bb9e70eb93ab77c1eff5b4-Abstract-Conference.html},
  bibsource    = {dblp computer science bibliography, https://dblp.org}
}

@inproceedings{DBLP:conf/eurocrypt/LiuSELBP26,
  author       = {Haolin Liu and
                  Adrien Siproudhis and
                  Samuel Experton and
                  Peter Lorenz and
                  Christina Boura and
                  Thomas Peyrin},
  title        = {Navigating the Deep: End-to-End Extraction on Deep Neural Networks},
  booktitle    = {Advances in Cryptology - {EUROCRYPT} 2026},
  editor       = {Joachim Daemen and
                  Emmanuel Thom{\'e}},
  series       = {Lecture Notes in Computer Science},
  volume       = {16546},
  publisher    = {Springer},
  year         = {2026},
  doi          = {10.1007/978-3-032-25333-0\_17},
  url          = {https://doi.org/10.1007/978-3-032-25333-0_17}
}

@inproceedings{DBLP:conf/latincrypt/CanalesMartinezS25,
  author       = {Isaac A. Canales{-}Mart{\'{\i}}nez and
                  David Santos},
  editor       = {Daniel Escudero and
                  Ivan Damg{\aa}rd},
  title        = {Extracting Some Layers of Deep Neural Networks in the Hard-Label Setting},
  booktitle    = {Progress in Cryptology - {LATINCRYPT} 2025 - 9th International Conference
                  on Cryptology and Information Security in Latin America, Medell{\'{\i}}n,
                  Colombia, October 1-3, 2025, Proceedings},
  series       = {Lecture Notes in Computer Science},
  volume       = {16129},
  pages        = {399--421},
  publisher    = {Springer},
  year         = {2025},
  url          = {https://doi.org/10.1007/978-3-032-06754-8\_15},
  doi          = {10.1007/978-3-032-06754-8\_15},
  bibsource    = {dblp computer science bibliography, https://dblp.org}
}

@misc{cryptoeprint:2026/139,
      author = {Xiaohan Sun and Hao Lei and Longxiang Wei and Xiaokang Qi and Kai Hu and Meiqin Wang and Wei Wang},
      title = {Cryptanalytic Extraction of Convolutional Neural Networks},
      howpublished = {Cryptology {ePrint} Archive, Paper 2026/139},
      year = {2026},
      url = {https://eprint.iacr.org/2026/139}
}

@article{hal-05573262,
  author       = {Haolin Liu and Adrien Siproudhis and Christina Boura and Thomas Peyrin},
  title        = {Model Extraction of Convolutional Neural Networks with Max-Pooling},
  journal      = {IACR Transactions on Symmetric Cryptology ({ToSC})},
  year         = {2026},
  url          = {https://eprint.iacr.org/2026/464}
}

@inproceedings{cryptoeprint:2026/241,
  author       = {Zirui Chen and
                  Shi Tang and
                  Zhengchao Gao and
                  Yongjia Su and
                  Lingyue Qin and
                  Xiaoyang Dong},
  editor       = {Nadia Heninger and
                  Mike Rosulek},
  title        = {Algebraic Attack on Convolutional Neural Networks with Max Pooling},
  booktitle    = {Advances in Cryptology - {CRYPTO} 2026 - 46th Annual International
                  Cryptology Conference, Santa Barbara, CA, USA, August 17-20, 2026,
                  Proceedings, Part {VII}},
  series       = {Lecture Notes in Computer Science},
  volume       = {16806},
  pages        = {3--35},
  publisher    = {Springer},
  year         = {2026},
  url          = {https://doi.org/10.1007/978-3-032-35415-0\_1},
  doi          = {10.1007/978-3-032-35415-0\_1}
}

@misc{cryptoeprint:2026/1164,
      author = {Zirui Chen and Shi Tang and Zhengchao Gao and Yongjia Su and Lingyue Qin and Xiaoyang Dong},
      title = {Algebraic Cryptanalytic Extraction on Hard-Label Neural Networks},
      howpublished = {Cryptology {ePrint} Archive, Paper 2026/1164},
      year = {2026},
      url = {https://eprint.iacr.org/2026/1164}
}

@article{QLW26,
  author       = {Xiaokang Qi and Hao Lei and Longxiang Wei and Xiaohan Sun and Meiqin Wang},
  title        = {Cryptanalytic Extraction of Neural Networks with Various Activation Functions},
  journal      = {IACR Transactions on Symmetric Cryptology ({ToSC})},
  volume       = {2026},
  number       = {1},
  pages        = {468--505},
  year         = {2026},
  doi          = {10.46586/tosc.v2026.i1.468-505},
  url          = {https://doi.org/10.46586/tosc.v2026.i1.468-505}
}

@inproceedings{ADF26,
  author       = {Roderick Asselineau and
                  Patrick Derbez and
                  Pierre-Alain Fouque and
                  Brice Minaud},
  editor       = {Nadia Heninger and
                  Mike Rosulek},
  title        = {Cryptanalytic Extraction of Deep Neural Networks with Non-linear
                  Activations},
  booktitle    = {Advances in Cryptology - {CRYPTO} 2026 - 46th Annual International
                  Cryptology Conference, Santa Barbara, CA, USA, August 17-20, 2026,
                  Proceedings, Part {VII}},
  series       = {Lecture Notes in Computer Science},
  volume       = {16806},
  pages        = {36--66},
  publisher    = {Springer},
  year         = {2026},
  url          = {https://doi.org/10.1007/978-3-032-35415-0\_2},
  doi          = {10.1007/978-3-032-35415-0\_2}
}

@misc{cryptoeprint:2024/1580,
      author = {Nicholas Carlini and Jorge Ch{\'{a}}vez{-}Saab and Anna Hambitzer and
                Francisco Rodr{\'{\i}}guez{-}Henr{\'{\i}}quez and Adi Shamir},
      title = {Polynomial Time Cryptanalytic Extraction of Deep Neural Networks in the
               Hard-Label Setting (Extended Version)},
      howpublished = {Cryptology {ePrint} Archive, Paper 2024/1580},
      year = {2024},
      url = {https://eprint.iacr.org/2024/1580}
}

@misc{shen2026fcn,
      author = {Yantian Shen and Yi Chen and Anyu Wang and Hongbo Yu and Xiaoyun Wang},
      title = {Cryptanalytic Extraction of Neural Networks Without Known Architecture
               Assumption},
      howpublished = {arXiv:2609.14379 [cs.CR]},
      year = {2026},
      doi = {10.48550/arXiv.2609.14379},
      url = {https://arxiv.org/abs/2609.14379}
}

@misc{cryptoeprint:2026/1943,
      author = {Duo Xu and Liu Zhang and Zilong Wang},
      title = {Finite-Precision Error Analysis of Cryptanalytic Model Extraction},
      howpublished = {Cryptology {ePrint} Archive, Paper 2026/1943},
      year = {2026},
      url = {https://eprint.iacr.org/2026/1943}
}

\clearpage

% For citations of references, we prefer the use of square brackets
% and consecutive numbers. Citations using labels or the author/year
% convention are also acceptable. The following bibliography provides
% a sample reference list with entries for journal
% articles~\cite{ref_article1}, an LNCS chapter~\cite{ref_lncs1}, a
% book~\cite{ref_book1}, proceedings without editors~\cite{ref_proc1},
% and a homepage~\cite{ref_url1}. Multiple citations are grouped
% \cite{ref_article1,ref_lncs1,ref_book1},
% \cite{ref_article1,ref_book1,ref_proc1,ref_url1}.
%
\end{document}